\renewcommand{\a}{\alpha}
\def\e{\varepsilon}
\newcommand{\R}{\mathbf{R}}
\def\mr{\mathbf{R}}
\def\o{\omega}
\def\l{\lambda}
\renewcommand{\O}{\Omega}
\DeclareMathOperator{\Sp}{Sp}
\newcommand\N{\mathbf{N}}
\newcommand\Z{\mathbf{Z}}
\newcommand\supp{\mathop{\operatorname{supp}}}
\newcommand\dist{\operatorname{dist}}
\DeclareMathOperator{\diam}{diam}
\def\a{\alpha}
\def\XXint#1#2#3{{\setbox0=\hbox{$#1{#2#3}{\int}$}
     \vcenter{\hbox{$#2#3$}}\kern-.5\wd0}}
\newcounter{bei}
\renewcommand{\div}{\mathrm{div}\,}
\newtheorem{theorem}{Theorem}[section]
\newtheorem{lemma}[theorem]{Lemma}
\newtheorem{proposition}[theorem]{Proposition}
\newtheorem{corollary}[theorem]{Corollary}
\newtheorem{assumption}[theorem]{Assumption}
\newtheorem{definition}[theorem]{Definition}
\newtheorem{remark}[theorem]{Remark}
\theoremstyle{remark}
\newcommand{\randomspace}{\Omega}
\newcommand{\pspace}{S}
\newcommand{\randomelement}{\omega}
\newcommand{\randommeasure}{P}
\newcommand{\td}{\mathrm{d}}
\newcommand{\drandommeasure}{\mathrm{d}\randommeasure(\randomelement)}
\newcommand{\wtwoscale}{\xrightharpoonup{2}}
\newcommand{\stwoscale}{\xrightarrow{2}}
\newcommand{\weakly}{\rightharpoonup}
\newcommand{\stochsobolev}{\mathcal W}
\newcommand{\stochsmooth}{\mathcal C^\infty}
\newcommand{\eps}{\varepsilon}
\DeclarePairedDelimiter\abs{\lvert}{\rvert}
\DeclarePairedDelimiter\norm{\lVert}{\rVert}
\DeclarePairedDelimiter\scalar{\langle}{\rangle}
\DeclarePairedDelimiter\set{\{}{\}}
\newcommand\restrict[2]{{% we make the whole thing an ordinary symbol
  \left.\kern-\nulldelimiterspace % automatically resize the bar with \right
  #1 % the function
  \vphantom{\big|} % pretend it's a little taller at normal size
  \right|_{#2} % this is the delimiter
  }}
\begin{document}

\title{\sc Stochastic homogenisation of high-contrast media}

\author[1]{Mikhail Cherdantsev}
\author[2]{Kirill Cherednichenko}
\author[3]{Igor Vel\v{c}i\'{c}\,}
\affil[1]{School of Mathematics, Cardiff University, Senghennydd Road, Cardiff, CF24 4AG, United Kingdom}
\affil[2]{Department of Mathematical Sciences, University of Bath, Claverton Down, Bath, BA2 7AY, United Kingdom}
\affil[3]{Faculty of Electrical Engineering and Computing, University of Zagreb, Unska 3, 10000 Zagreb, Croatia}
\maketitle

\vspace{-6mm}
\begin{center}
  \textsl{To the memory of Vasilii Vasil'evich Zhikov}
\end{center}

\begin{abstract}
 
Using a suitable stochastic version of the compactness argument of [V. V. Zhikov, 2000. On an extension of the method of two-scale  convergence and its applications. {\it Sb. Math.,} {\bf 191}(7--8), 973--1014], we develop a probabilistic framework for the analysis of heterogeneous media with high contrast. We show that an appropriately defined multiscale limit of the field in the original medium satisfies a system of equations corresponding to the coupled ``macroscopic" and  ``microscopic'' components of the field, giving rise to an analogue of the ``Zhikov function'', which represents the effective dispersion of the medium. We demonstrate that, under some lenient conditions within the new framework, the spectra of the original problems converge to the spectrum of their  homogenisation limit.

 \vskip 0.5cm

{\bf Keywords:} High contrast $\cdot$ Random media $\cdot$ Stochastic homogenisation 

\vskip 0.4cm 
{\bf Mathematics Subject Classification(2010):} 	74Q15, 35B40, 47A10

\end{abstract}

\section{Introduction}

Asymptotic analysis of differential equations with rapidly oscillating coefficients has featured prominently among the interests of the applied analysis community during the last half a century. The problem of understanding and quantifying the overall behaviour of heterogeneous media has emerged as a natural step within the general progress of material science, wave propagation and  mathematical physics. In this period several frameworks have been developed for the analysis of families of differential operators, functionals and  random processes describing multiscale media, all of which have benefitted from the invariably deep insight and mathematical elegance of the work of V. V. Zhikov. In the present paper we touch upon two subjects in which his contributions have inspired generations of followers:
 %has influenced: 
 the stochastic approach to homogenisation, in particular through his collaboration with S. M. Kozlov during the 1980s, and the analysis of differential operators describing periodic composites with high contrast, which started with his fundamental contribution \cite{Zhikov2000}.

Our present interest in the context of stochastic homogenisation of high-contrast composites stems from the relationships that have recently been indicated 
%{\color{red}\sout{, albeit not fully understood,}}
between media with negative material properties (``metamaterials''), and more generally time-dispersive media, and ``degenerate'' families of differential operators, where {\it e.g.} loss of uniform ellipticity of the symbol is known to lead to non-classical dispersion relations in the limit of vanishing ratio $\varepsilon$ of the microscopic ($l$) and macroscopic ($L$) lengths: 
$\varepsilon=l/L\to0.$ 
%see \cite{CEK2017}. 
The work \cite{Zhikov2000} has provided an example, in the periodic context, of what one should expect in the limit as $\varepsilon\to0$ in terms of the two-scale structure of the solution as well as the spectrum of the related differential operator, in the case when the metamaterial is modelled by disjoint ``soft'' inclusions with low, order $O(\varepsilon^2)$ values of the material parameters (say, elastic constants in the context of linearised elasticity), embedded in a connected  ``stiff'' material with material constants of order $O(1).$ In mathematical terms, the coefficients of the corresponding differential expression alternate between values of different orders in $\varepsilon,$ where the contrast increases as $\varepsilon$ gets smaller. 

In the present article we introduce a stochastic framework for the analysis of homogenisation problems with soft inclusions and explore the question on what version of the results of \cite{Zhikov2000} can be achieved in this new framework. In particular, we are interested in the equations that describe the stochastic two-scale limit, in an appropriate sense, of the sequence of solutions to the probabilistic version of a Dirichlet problem in a bounded domain of ${\mathbf R}^n.$  Furthermore, we show that the spectra of such problems converge, in the Hausdorff sense, to the spectrum of the limit problem, which we analyse in a setting that models distributions of soft inclusions whose shapes are taken from a certain finite set and whose sizes vary over an interval.  To our knowledge, the present manuscript is the first work containing an analysis of random heterogeneous media with high contrast that results in a ``complete'' Hausdorff-type convergence statement for the spectra of the corresponding differential operators. Various aspects of multiscale analysis of high-contrast media in the stochastic context have been looked at in a handful of papers, {\it e.g.}      
%two basic setting that 
\cite{BMP}, \cite{BBM}, \cite{Bellieud}. 

%The fact that the spectra of the $\varepsilon$-dependent are close, in a controlled fashion, to the spectrum of an explicitly formulated limit problem, allows one to use the 

While in the periodic context norm-resolvent convergence results been obtained for high-contrast media, see \cite{ChC, ChChC} , the stochastic case remains open to developments of a similar nature. It is anticipated that the operator-theoretic approach to problems of the kind we discuss in the present article will provide a general description of the types of spectral behaviour that can occur in the real-world applications where it is difficult to enforce periodicity of the microstructure. On the other hand, as we show in the present work, new wave phenomena should be expected in the stochastic setting ({\it e.g.} a non-trivial continuous component of the spectral measure of the homogenised operator for a bounded-domain problem), which makes the related future developments even more exciting.

Next, we outline the structure of the paper. In Section \ref{stochastic} we recall the notion of stochastic two-scale convergence, which we use, in Sections \ref{problem_form}, \ref{lim_eq}, to pass to the limit, as $\varepsilon\to0,$ in a family of homogenisation problems with random soft inclusions. In Section \ref{problem_form} we give a formulation of the high-contrast problem we study and provide some auxiliary statements. In Section \ref{lim_eq} we describe the limit problem and prove the strong resolvent convergence of the $\varepsilon$-dependent family to the limit system of equations. In Section \ref{inclusions_completeness} we provide a link between the spectra of the Laplacian operator on realisations of the inclusions and of the corresponding stochastic Laplacian.  In Section \ref{spectral_convergence} we prove that sequences of normalised eigenfunctions of the $\varepsilon$-dependent problems are compact in the sense of strong stochastic two-scale convergence. Finally, in Section \ref{examples_sec} we discuss two examples of the general stochastic setting and describe the structure of the corresponding limit spectrum.

\

%\subsection*{Notation}

 In conclusion of this section we introduce some notation used throughout the paper. For a Banach space $X$ and its dual $X^*,$ we denote by $_X\langle \cdot, \cdot\rangle_{X^*}$ the corresponding duality. For a Hilbert space $H$ the inner product  of $a,b \in H$ is also denoted by $\scalar{a,b}_H$ and, if $H=\mathbf{R}^n,$ by $a \cdot b$. For a set $\mathcal{O}$ we denote by $\chi_\mathcal{O}$ its characteristic function, which takes value one on the set $\mathcal{O}$ and zero on the complement to $\mathcal{O}$ in the appropriate ambient space. For $D\subseteq \mathbf{R}^n$ we denote by $\overline{D}$ its closure and by $|D|$ its Lebesgue measure. Further, we use the notation $B_r(0)$ for the ball in $\mathbf{R}^n$ of radius $r$ with the centre at the origin; $Y$ denotes the cube $[0,1)^n$ with torus topology, where the opposite faces are identified; 
%. $\mathbf{N}_0^l$ denotes the set 
and 
%we also use the notation 
$\mathbf{N}_0^l:=\{0,\dots,l\}$. 
%By $Y$ we denote $Y=[0,1)^n$ with the topology of torus.
 For an operator $\mathcal{A}$ on some Hilbert space, we denote by $\Sp \mathcal{A}$ its spectrum. Finally, for a Lipschitz open set $D\subset{\mathbf R}^n,$ we denote by $-\Delta_D$ the (positive) Laplace operator with the Dirichlet boundary condition on $\partial D.$ For $x \in \mathbf{R}^n$ we denote by $[x]$ the element of $\mathbf{Z}^n$ which satisfies $[x]\leq x<[x]+(1,\dots,1)$. For $k=1,\dots,n$ by $e_k$ we denote the $k$-th coordinate vectors.
% and we denote by  $-\Delta_\omega$ the Dirichlet Laplacian. 
%on $A\subset\Omega,$ where $\Omega$ is a probability 
 
\section{Stochastic two-scale convergence} 
%{\color{red} Is this section OK here? Write a brief introduction?}
%{\color{green} Delete all the numbers from the equations that are not quoted}
\label{stochastic}
\subsection{Probability framework} 
Let $(\Omega, \mathcal{F},P)$ be a complete probability space. We assume that $\mathcal{F}$ is countably generated, which implies that the spaces $L^p(\Omega)$, $p \in [1,\infty),$ are separable. For a function $u \in L^1(\Omega),$ we will sometimes write $\langle u \rangle$ for $\int_{\Omega} u$. 
By $S$ we denote a domain (open bounded subset) in $\mathbf{R}^n$. 
\begin{definition}\label{defgroup}
	A family $(T_x)_{x \in \mathbf{R}^n}$ of measurable bijective mappings $T_x:\Omega \to \Omega$ on a probability space $(\Omega, \mathcal{F}, P)$ is called a dynamical system on $\Omega$ with respect to $P$ if:
	\begin{enumerate}
		\item $T_x \circ T_y=T_{x+y}\ $ $\forall\,x,y \in \mathbf{R}^n$;
		\item $P(T_{x} F )=P(F)\ $ $\forall x \in \mathbf{R}^n$, $F \in \mathcal{F}$;
		\item $\mathcal{T}: \mathbf{R}^n \times\Omega \to \Omega,\ $ $(x, \omega)\to T_x (\omega)$ is measurable (for the standard $\sigma$-algebra on the product space, where on $\mathbf{R}^n$ we take the Lebesgue $\sigma$-algebra).
	\end{enumerate}
\end{definition}
We next define the notion of ergodicity for dynamical systems introduced above.
\begin{definition}\label{defergodic}
A dynamical system is called ergodic if one of the following equivalent conditions is fulfilled:
\begin{enumerate}
	\item $f$ measurable, $f(\omega)=f(T_x \omega) \ \forall x \in \mathbf{R}^n, \textrm{ a.e. } \omega \in \Omega \implies f(\omega)\textrm{\ is\ constant\ $P$-a.e.\ } \omega \in \Omega$.
        \item\label{def:ergodicityb} $P\bigl((T_x B \cup B) \backslash (T_x B \cap B)\bigr)=0\ \forall x \in \mathbf{R}^n \implies P(B) \in \{0,1\}$.  
\end{enumerate}	
\end{definition} 
Henceforth we assume that the dynamical system $T_x$ is ergodic.

\begin{remark}\label{rem:ergo-weak}
Note that for the condition \ref{def:ergodicityb} the implication $P(B) \in \set { 0,1}$ has to hold, if the symmetric difference between $T_x B$ and $B$ is a null set.
It can be shown (see, e.g., \cite{cornfeld}) that ergodicity is also equivalent if a priori only the weaker implication
\[
T_x B = B\ \ \forall x \in \mr^n \implies P(B) \in \set{ 0,1}
\]
holds. 
\end{remark}
\begin{comment}
For $\Omega$ we usually the subset of Lebesgue-meaurable functions $\omega:\R^n \to \R^n$. (note that $\omega$ is defined almost everywhere on $\R^n$). Let $\mathcal{C}$ denote the family of cylinder sets in $\R^n \times \R^n$  The $\sigma$-algebra $\mathcal{M}$ of subsets of $\Omega$ that we require is generated by all sets of the form 
$$\{\omega \in \Omega: \omega (x) \in A, \textrm{ for all } x \in B \},$$
where ...
\end{comment} 

For $f \in L^p(\Omega),$ we write $f(x, \omega):=f(T_x \omega)$, defining the realisation $f \in L^p_{\rm loc}(\mathbf{R}^n, L^p(\Omega))$. 
There is a natural unitary action on $L^2(\Omega)$ associated with $T_x:$
%we can define the unitary action
\begin{equation} \label{matthaus1000}
 U(x)f=f \circ T_x, \quad \quad f \in L^2(\Omega).
 \end{equation} 
It  can be shown that the conditions 
%{\color{red}\sout{$a,b,c$}}
of Definition \ref{defgroup} imply that this is a strongly continuous group (see \cite{zhikov2}).
It is often necessary that the set of full measure be
invariant in the sense that together with the point $\omega$ it contains the whole
"trajectory" $\{T_x\omega, x \in \mathbf{R}^m\}$. This requirement can always be met on the
basis of the following simple lemma (see \cite[Lemma 7.1]{zhikov2}).
\begin{lemma} \label{marko100} 
 Let $\Omega_0$ be a set of full measure in $\Omega$. Then there exists a set of full
measure $\Omega_1$ such that $\Omega_1 \subseteq \Omega_0$, and for a given $\omega \in \Omega_1$ we have $T_x\omega \in \Omega_0$
for almost all $x \in \mathbf{R}^m.$
\end{lemma}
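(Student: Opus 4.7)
The plan is to take $\Omega_1$ to be the intersection of $\Omega_0$ with the set of $\omega$ whose full trajectory $\{T_x\omega\}_{x\in\mathbf{R}^n}$ lies in $\Omega_0$ for almost every $x$. To show that this set has full measure I will apply Fubini--Tonelli to the ``bad set''
\[
E := \bigl\{(x,\omega) \in \mathbf{R}^n \times \Omega : T_x \omega \in N\bigr\} = \mathcal{T}^{-1}(N), \qquad N := \Omega\setminus\Omega_0,
\]
viewed inside the product space $\mathbf{R}^n \times \Omega$ equipped with the product $\sigma$-algebra and the $\sigma$-finite product measure $\mathrm{Leb}\times P$.

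First I would verify that $E$ is measurable, which is immediate from the joint measurability condition (item 3 of Definition \ref{defgroup}). Next, for each fixed $x\in\mathbf{R}^n$, the vertical section $E_x = T_x^{-1}(N)$ has $P$-measure zero, since $T_x$ is a measure-preserving bijection and $P(N)=0$. Tonelli's theorem therefore yields, for every $R>0$,
\[
(\mathrm{Leb}\times P)\bigl(E \cap (B_R(0) \times \Omega)\bigr) = \int_{B_R(0)} P(E_x)\,\mathrm{d}x = 0.
\]
Reversing the order of integration, I obtain a set $\widetilde{\Omega}_R \subseteq \Omega$ of full $P$-measure on which the horizontal slice $\{x\in B_R(0) : T_x\omega\in N\}$ is Lebesgue-null. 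Intersecting the countable family $\{\widetilde{\Omega}_R\}_{R\in\mathbf{N}}$ with $\Omega_0$ produces the required set
\[
\Omega_1 := \Omega_0 \cap \bigcap_{R\in\mathbf{N}} \widetilde{\Omega}_R,
\]
which has full measure and satisfies, for every $\omega\in\Omega_1$, that $\{x\in\mathbf{R}^n : T_x\omega \notin \Omega_0\}$ has Lebesgue measure zero.

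The only point that truly requires care is the measurability of $E$ in the product $\sigma$-algebra, which is precisely what Definition \ref{defgroup}(3) is engineered to deliver; once that is in place the remainder is a standard Fubini--Tonelli exchange on a $\sigma$-finite space, so I do not anticipate any substantial obstacle.
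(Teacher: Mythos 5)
Your proof is correct. The paper does not prove this lemma at all --- it simply cites \cite[Lemma 7.1]{zhikov2} --- and your Fubini--Tonelli argument on the set $E=\mathcal{T}^{-1}(\Omega\setminus\Omega_0)$, using the joint measurability from Definition \ref{defgroup}(3) together with the measure-preserving property $P(T_x^{-1}N)=P(N)=0$ for each fixed $x$, is precisely the standard proof of that cited result; the localisation to balls $B_R(0)$ is harmless but not needed, since Lebesgue measure on $\mathbf{R}^n$ is already $\sigma$-finite and Tonelli applies directly on $\mathbf{R}^n\times\Omega$.
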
 	

For each $j=1,2,...n,$ we define 
%the operator $D_i$ as 
the infinitesimal generator $D_j$ of the unitary group $U_{x_j}$ by the formula 
%This means that
\begin{equation}
  D_j f(\omega)=\lim_{x_j \to 0}  \frac{f(T_{x_j}\omega)-f(\omega) }{x_j},\quad\quad   f \in L^2(\Omega),
  \label{generator}
\end{equation}
where the limit is taken in $L^2(\Omega).$  Notice that $i D_1,$ $j=1,\dots,n,$
%,\dots, i D_n$ 
are commuting, self-adjoint, closed and densely defined linear operators on the separable Hilbert space $L^2 (\Omega)$. 
The domain $\mathcal{D}_i(\Omega)$ of such an operator is given by the set of $L^2(\Omega)$-functions for which the limit (\ref{generator}) exists. We consider 
%denote by $W^{1,2}(\Omega)$ 
the set
\begin{equation}
   W^{1,2}(\Omega):=\bigcap_{j=1}^n\mathcal{D}_j(\Omega) 
   %\cap \dots \cap \mathcal{D}_n(\Omega),
   \label{W12}
\end{equation}
and similarly
\begin{align*}
	W^{k,2} (\Omega)&:=\bigl\{f \in L^2(\Omega): D_1^{\alpha_1}\dots D_n^{\alpha_n} f \in L^2(\Omega),\; \alpha_1+\cdots +\alpha_n=k\bigr\},\\[0.5em]
	W^{\infty,2} (\Omega)&:= \bigcap_{k \in \mathbf{N}} W^{k,2}(\Omega).
\end{align*}
It is shown by the standard semigroup property that $W^{\infty,2}(\Omega)$ is dense in $L^2(\Omega)$. 
We also define the space
\[
   \mathcal{C}^{\infty} (\Omega)= \set[\big] {f \in W^{\infty,2} (\Omega): \forall (\alpha_1,\dots, \alpha_n) \in \mathbf{N}_0^n \quad D_1^{\alpha_1} \dots D_n^{\alpha_n} f \in L^{\infty} (\Omega) }. 
\]
By the smoothening procedure discussed in \cite[p.\,232]{zhikov2} (see also the text before Lemma \ref{lemmagloria} below), it is shown that $\mathcal{C}^{\infty} (\Omega)$ is dense in $L^p(\Omega)$ for all $p \in [1,\infty)$ as well as in $W^{k,2}(\Omega)$ for all $k.$ Furthermore, it is shown that $W^{1,2} (\Omega)$ is separable. 
Notice that, due to the infinitesimal generator being closed, $D_i f$ can be equivalently defined as the function that satisfies the property 
\begin{equation} \label{matthaus1}
   \int_{\Omega} D_i f\,g =-\int_{\Omega } f\,D_i g\quad \forall g \in \mathcal{C}^{\infty}(\Omega).
\end{equation}
If $f \in W^{1,2} (\Omega),$ we may also define 
$D_if (x,\omega):=D_if(T_x \omega)$ for all $x\in\mathbf{R}^n.$ It can be shown that the following identity holds (see \cite{gloria1}):
\begin{equation} \label{matthaus0} 
\begin{aligned}
W^{1,2} (\Omega) &=\bigl\{ f \in W^{1,2}_{\rm loc}\bigl(\mathbf{R}^n, L^2(\Omega)\bigr): f(x+y,\omega)= f(x, T_y \omega) \quad \forall x,y,\ \text{a.e.\ }  \omega \bigr\}
\\[0.4em]
&=\bigl \{ f \in C^1\bigl(\mathbf{R}^n, L^2(\Omega)\bigr): f(x+y,\omega)= f(x, T_y \omega) \quad  \forall x,y,\ \text{a.e.\ }  \omega\bigr\}. 
\end{aligned}\end{equation}
Moreover, for a.e. $\omega \in  \Omega$ the function $D_if (\cdot,\omega)$ is the distributional derivative of $f(\cdot,\omega ) \in L^2_{\rm loc} (\mathbf{R}^n):$ a proof of this fact can be found in \cite[Lemma~A.7]{gloria1}.

Following \cite{sango1}, we denote by $\norm \cdot_{\#,k,2}$ the seminorm on $\mathcal{C}^{\infty} (\Omega)$ given by 
\[
  \norm u_{\#,k, 2}^2=\sum_{\alpha \in \mathbf N^n,\ \abs { \alpha } = k} \norm {D^\alpha u}^2_{L^2(\Omega)}.
\]
By $\mathcal{W}^{k,2}(\Omega)$ we denote the completion of $\mathcal{C}^{\infty}(\Omega)$ with respect to the seminorm $\norm \cdot_{\#,k,2}$.
The gradient operator $\nabla_{\omega}:=(D_{1},\dots, D_n)$ and the operator $\div_{\omega}:= \nabla_{\omega} \cdot$ are extended uniquely by continuity to mappings from $\mathcal{W}^{1,2}(\Omega)$ to $L^2(\Omega,\mathbf{R}^n)$ and from $\mathcal{W}^{1,2}(\Omega,\mathbf{R}^n)$ to $L^2(\Omega),$ respectively. Finally, by a density argument, it is easily seen that $\mathcal{W}^{1,2}(\Omega)$ is also the completion of $W^{1,2}(\Omega)$ with respect to $\norm \cdot_{\#,1,2}$.

\subsection{Definition and basic properties}

The key property of ergodic systems is the following theorem, due to Birkhoff (for a more general approach, see \cite{Ackoglu}).
\begin{theorem}[``Ergodic Theorem"] \label{thmergodic}
	Let  $(\Omega,\mathcal{F}, P)$ be a probability space with an ergodic dynamical system $(T_x)_{x \in \mathbf{R}^n}$ on $\Omega$. Let $f \in L^1(\Omega)$ be a function and $\mathcal{B}\subset \mathbf{R}^n$ be a bounded open set. Then for $P$-a.e. $\omega \in \Omega$ one has
        \begin{equation}\begin{aligned}\label{eq:birkhoff}
        \lim_{\varepsilon \to 0} \int_{\mathcal{B}}f(T_{x/\varepsilon}\omega) \td x= |\mathcal{B}| \int_{\Omega} f(\omega) \drandommeasure.
        \end{aligned}\end{equation}
	Furthermore, for all $f \in L^p(\Omega)$, $1 \leq p \leq \infty,$ and a.e. $\omega \in \Omega$, the function $f (x, \omega)= f(T_x \omega)$ satisfies $f(\cdot, \omega) \in L^p_{\rm loc} (\mathbf{R}^n)$. For $p < \infty$ one has $f(\cdot/ \varepsilon, \omega)=f(T_{ \cdot/ \varepsilon} \omega ) \rightharpoonup \int_{\Omega} f \td P$ weakly in $L^p_{\rm loc}(\mathbf{R}^n )$ as $\varepsilon \to 0$. 
	
	%{\color{red} Is it OK to use the same letter $f$ for the two objects? Shouldn't we clarify this?}
\end{theorem}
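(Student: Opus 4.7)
The plan is to reduce the first assertion to the classical multi-parameter (Wiener) ergodic theorem for continuous $\mathbf{R}^n$-actions by a homothetic change of variables, and then to extract the weak-convergence statement by a density argument. First, I would change variables $y=x/\varepsilon$ in the integral on the left of \eqref{eq:birkhoff} to obtain
\[
   \int_{\mathcal{B}} f(T_{x/\varepsilon}\omega)\,\td x \;=\; \varepsilon^n \int_{\mathcal{B}/\varepsilon} f(T_y\omega)\,\td y \;=\; |\mathcal{B}|\,\frac{1}{|\mathcal{B}/\varepsilon|}\int_{\mathcal{B}/\varepsilon} f(T_y\omega)\,\td y.
\]
Since $\mathcal{B}/\varepsilon$ is a bounded open set that dilates homothetically to fill $\mathbf{R}^n$ as $\varepsilon\to 0$, the Wiener ergodic theorem for jointly measurable, measure-preserving, ergodic $\mathbf{R}^n$-actions gives
\[
   \frac{1}{|\mathcal{B}/\varepsilon|}\int_{\mathcal{B}/\varepsilon} f(T_y\omega)\,\td y \;\xrightarrow{\varepsilon\to 0}\; \int_\Omega f\,\td P \qquad \text{for $P$-a.e.\ } \omega.
\]
The proof of this fact proceeds, in parallel with the one-parameter case, via a maximal inequality for averages over balls (or boxes) in $\mathbf{R}^n$, a dense-set argument on $L^1(\Omega)$ using the continuity of $U(x)$ established from \eqref{matthaus1000}, and the identification of the limit with the spatial average via Definition~\ref{defergodic}.

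For the second assertion, joint measurability in Definition~\ref{defgroup}(3) together with Fubini's theorem and the $T_x$-invariance of $P$ show that $x\mapsto f(T_x\omega)$ lies in $L^p_{\text{loc}}(\mathbf{R}^n)$ for $P$-a.e.\ $\omega$, with local $L^p$-norm on any bounded set controlled by $\|f\|_{L^p(\Omega)}$. For the weak convergence, I would fix for each $R\in\mathbf{N}$ a countable dense family $\{\phi_{R,k}\}_{k\in\mathbf N}$ of $L^{p'}(B_R(0))$ consisting of finite linear combinations of characteristic functions of open boxes with rational vertices. Applying \eqref{eq:birkhoff} to each such box and taking a countable union of null sets produces a single $\omega$-set of full $P$-measure on which
\[
   \int_{B_R(0)} f(T_{x/\varepsilon}\omega)\,\phi_{R,k}(x)\,\td x \;\longrightarrow\; \Bigl(\int_\Omega f\,\td P\Bigr)\int_{B_R(0)} \phi_{R,k}(x)\,\td x
\]
for all $k,R$. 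The uniform bound on $\|f(\cdot/\varepsilon,\omega)\|_{L^p(B_R(0))}$ (independent of $\varepsilon$, obtained by applying the first part to $|f|^p$) combined with the density of $\{\phi_{R,k}\}_k$ in $L^{p'}(B_R(0))$ upgrades the convergence on test functions to weak convergence in $L^p(B_R(0))$, and hence in $L^p_{\text{loc}}(\mathbf{R}^n)$ by exhausting $\mathbf{R}^n$ with the balls $B_R(0)$.

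The main obstacle is the passage from a single measure-preserving transformation to a continuous $\mathbf{R}^n$-action: one needs a Vitali-type covering lemma for balls in $\mathbf{R}^n$, together with careful use of the joint measurability of the flow to reduce the multi-parameter maximal function to a measurable object. A secondary subtlety is that the exceptional $P$-null set in \eqref{eq:birkhoff} a priori depends on $\mathcal{B}$; the weak-convergence statement forces us to obtain a single null set valid simultaneously for a countable dense family of test sets, which is why the density/approximation step must be carried out before, rather than after, selecting the full-measure subset of $\Omega$.
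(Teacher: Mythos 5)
The paper does not actually prove this statement: it is quoted as the classical Birkhoff--Wiener ergodic theorem with references to Ackoglu--Krengel and to the book of Zhikov, Kozlov and Ole\u{\i}nik, so there is no internal proof to compare against. Your reduction is the standard one from the literature: the change of variables $y=x/\varepsilon$ turns the left-hand side of \eqref{eq:birkhoff} into $|\mathcal{B}|$ times the average of $f(T_y\omega)$ over the dilated set $\mathcal{B}/\varepsilon$, and since $\mathcal{B}/\varepsilon$ is contained in a ball centred at the origin of comparable measure, the family $\{\mathcal{B}/\varepsilon\}$ is regular in the sense required by the Wiener (multi-parameter) ergodic theorem, whose proof via a Vitali-type maximal inequality and a dense subspace of $L^1(\Omega)$ you correctly outline. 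Your remark that the null set must be chosen simultaneously for a countable family of test sets before passing to weak convergence is exactly the right precaution.

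There is one genuine gap, at the endpoint $p=1$ of the weak-convergence assertion. Your upgrade from convergence against the countable family $\{\phi_{R,k}\}$ to weak convergence in $L^p(B_R(0))$ uses the estimate $\bigl|\int f_\varepsilon(g-\phi)\bigr|\le\|f_\varepsilon\|_{L^p}\|g-\phi\|_{L^{p'}}$ together with density of rational-box step functions in $L^{p'}(B_R(0))$; this works for $1<p<\infty$ (where $p'<\infty$), but for $p=1$ one has $p'=\infty$ and step functions supported on rational boxes are not dense in $L^\infty(B_R(0))$ in the uniform norm, so the approximation step collapses. Testing against boxes only yields weak-$\star$ convergence of the measures $f_\varepsilon\,\td x$, which is strictly weaker than weak $L^1$ convergence. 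The standard repair is to establish uniform integrability of $\{f(\cdot/\varepsilon,\omega)\}_\varepsilon$ on $B_R(0)$: truncate $f_M:=f\chi_{\{|f|\le M\}}$ and apply the already-proved a.e.\ convergence \eqref{eq:birkhoff} to $|f-f_M|\in L^1(\Omega)$, which shows that $\int_{B_R(0)}|f-f_M|(T_{x/\varepsilon}\omega)\,\td x$ is eventually bounded by $|B_R(0)|\int_\Omega|f-f_M|\,\td P$, small uniformly in $\varepsilon$ for $M$ large; Dunford--Pettis then gives weak $L^1$ sequential compactness, and the limit is identified by the box test functions. With that insertion the argument is complete for all $1\le p<\infty$.
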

The elements $\omega$ such that \eqref{eq:birkhoff} holds for every $f \in L^1(\Omega)$ and bounded open $\mathcal{B}\subset{\mathbf R}^n$ are refereed to as typical elements, while the corresponding sets $(T_x \omega)_{x \in \mr^N}$ are  called typical  trajectories. Note that the separability of $L^1(\Omega)$ implies that almost every $\omega \in \Omega$ is typical, and in what follows we only work with such $\omega.$
%In the sequel we will always assume that $\omega \in \Omega$ is typical. 
 %Also note that by density if $\omega \in \Omega$ is typical then \eqref{eq:birkhoff} is valid for every $f \in L^1(\Omega)$. 

 For vector spaces $V_1, V_2,$ we denote by by $V_1 \otimes V_2$ their usual tensor product. 
 %of the spaces $V_1,V_2$. 
 We define the following notion of stochastic two-scale convergence, which is a slight variation of the definition given in \cite{zhikov1}. We shall stay in the Hilbert setting ($p=2$), as it suffices for our analysis.

\begin{definition}\label{definicija1}
Let $(T_{x}\randomelement)_{x \in \R^n}$ be a typical trajectory and $(u^\eps)$ a bounded sequence
in $L^2(S)$. We say that $(u^\eps)$ {weakly stochastically two-scale converges} to 
$u \in L^2(\pspace \times \randomspace )$ 
%{\color{red}\sout{with respect to $\randomelement$}} 
and write
  $u^\eps \wtwoscale u,$ if 
\begin{equation}
  \lim_{\eps \downarrow 0} \int_{\pspace} u^\eps(x)g\bigl(x,T_{\eps^{-1} x} \randomelement\bigr) \td x
  = \int_{\randomspace} \int_\pspace u(x,\randomelement) g(x,\randomelement) \td x\, \drandommeasure\quad\quad \forall g \in C^\infty_0(\pspace) \otimes \stochsmooth(\randomspace).
  \label{conv_def}
\end{equation}
%for all $g \in C^\infty_0(\pspace) \otimes \stochsmooth(\randomspace)$.

If additionally 
%\[
 $ \norm  { u^\eps}_{L^2(S)} \to \norm { u}_{L^2(S \times \Omega)},$
%\]
we say that $(u^\eps)$ strongly stochastically two-scale converges to $u$ and write $u^\eps \stwoscale u$.

%{\color{red} Should we rather not say that this convergence should hold for all typical trajectories?}
\end{definition}

\begin{comment}
[We should add this remark, or simply change the notion of strong two-scale convergence to this convergence.
\begin{remark}
It easily follows that strong two-scale convergence implies that
\[
  \int_{S} \abs {  v^\eps(x) - v^\omega( T_{\eps^{-1} x}\omega, x)}^2 \td x \xrightarrow{\eps \to 0} 0.
\]
\end{remark}
\end{comment}
\begin{remark} 
%% \begin{enumerate} 
%%\item 
The convergence of $(u^{\varepsilon})$ is defined along a fixed typical trajectory and {\it a priori} the limit depends on this trajectory. In applications, such as the analysis of the PDE family in Section \ref{lim_eq}, it often turns out that the limit does not depend on the trajectory chosen. For this reason, and to simplify notation, in what follows we often do not indicate this dependence explicitly.
%is  $\omega$-dependent. 
%\red{Is it not $\omega$-dependent in the definition?}. \color{black} 
%%In what follows we often do not indicate this dependence explicitly, as we will always look at the problem on a typical trajectory (which can be fixed). {\color{red} Which we assume to be fixed?} 	
%%\item 
%Note 
%that the two-scale limit is defined on the whole space {\color{red}$S \times \Omega$}, and also 

Note also that, by density, the set of admissible test functions $g$ in (\ref{conv_def}) can be extended to $ L^2(S) \otimes L^2(\Omega)$. 
%%{\color{red} Changed 
%%$L^\infty(\Omega)$ to $L^2(\Omega).$} {\color{red} What are we trying to say here?} 
%%\end{enumerate}
\end{remark}

In the next proposition we collect the properties of stochastic two-scale convergence that we use in the present work. 

\begin{proposition} \label{properties}  The following properties of stochastic two-scale convergence hold.
	\begin{enumerate}
	\item
Let $(u^\eps)$ be a bounded sequence in $L^2(S)$.
Then there exists a subsequence (not relabeled) and $u \in L^2(S\times\Omega)$ 
such that $u^{\eps} \wtwoscale u$.
\item If $u^{\e} \wtwoscale u$ then 
$ \|u\|_{L^2(S \times \Omega)} \leq \liminf_{\e \to 0} \|u^{\e}\|_{L^2 (S)}.$
\item If $(u^\eps) \subseteq L^2(S)$ is a bounded sequence with $u^\eps \to u$ in $L^2(S)$ for some $u \in L^2(S)$, then $u^\eps \stwoscale u$.
\item If $(v^\eps) \subseteq L^{\infty} (S)$ is uniformly bounded by a constant and $v^\eps \to v$ strongly in $L^1(S)$ for some $v \in L^\infty(S)$, and  $(u^\eps )$ is bounded in $L^2(S)$ with  $u^\eps \wtwoscale u$ for some $u \in L^2(S \times \Omega),$ then $v^\eps u^\eps \wtwoscale vu$.  
\item Let $(u^\eps)$ be a bounded sequence in $W^{1,2}(S)$. Then on a subsequence (not relabeled) 
$u^\eps \weakly u^0$ in $W^{1,2}(S),$ and there exists $u^1 \in L^2( S, \stochsobolev^{1,2}(\randomspace))$ 
such that 
%\color{red} Do we want to align this with the earlier notation $D$? \color{black}
\begin{align*}	
\nabla u^\eps \wtwoscale \nabla u^0 + \nabla_\omega u^1(\cdot, \omega)\,.
\end{align*}
\item Let  $(u^\eps)$ be a bounded sequence in $L^2(S)$ such that $\varepsilon \nabla u^\eps$ is bounded in $L^2(S, \mathbf{R}^n)$. Then there exists $u \in L^2( S, W^{1,2}(\randomspace))$ such that on a subsequence %we have 
%\begin{eqnarray*}
\begin{equation}
	u^{\eps}  \wtwoscale u,\quad\quad 
	%\\
	\eps \nabla u^{\eps}  \wtwoscale \nabla_{\omega} u(\cdot,\omega). 
\end{equation}
%\end{eqnarray*}
\end{enumerate} 
\end{proposition}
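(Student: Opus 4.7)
My plan splits the six items into the ``soft'' properties (1)--(4), which are functional-analytic consequences of Birkhoff's ergodic theorem, and the corrector-type identifications (5)--(6), which are the substantive content.

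For (1) and (2), the starting point is that for test functions $g\in C^\infty_0(S)\otimes\mathcal{C}^\infty(\Omega)$ the Cauchy--Schwarz inequality gives
\[
 \Bigl|\int_S u^\varepsilon(x)\,g(x,T_{x/\varepsilon}\omega)\,dx\Bigr|\leq \|u^\varepsilon\|_{L^2(S)}\,\|g(\cdot,T_{\cdot/\varepsilon}\omega)\|_{L^2(S)},
\]
with the rightmost factor tending to $\|g\|_{L^2(S\times\Omega)}$ by Theorem~\ref{thmergodic} applied (via Fubini) to the function $\omega\mapsto\int_S|g(x,\omega)|^2\,dx$. Using separability of $L^2(S\times\Omega)$, a diagonal extraction along a countable dense family of test functions produces a subsequence on which the functionals $g\mapsto \int u^\varepsilon g(x,T_{x/\varepsilon}\omega)\,dx$ converge; by the uniform bound the limit extends continuously to $L^2(S\times\Omega)$, and Riesz representation yields $u\in L^2(S\times\Omega)$ with $\|u\|_{L^2(S\times\Omega)}\leq \liminf_\varepsilon\|u^\varepsilon\|_{L^2(S)}$, giving (1) and (2). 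Item (3) follows by splitting the test integral: the part with $u^\varepsilon-u$ vanishes by strong convergence, the part $\int u\,g(x,T_{x/\varepsilon}\omega)\,dx$ converges by the weak $L^p_{\rm loc}$ assertion in Theorem~\ref{thmergodic}, and the norm identity is automatic since the limit is deterministic. Item (4) uses $L^\infty$--$L^1$ interpolation to upgrade $v^\varepsilon\to v$ to $L^2(S)$, so that $(v^\varepsilon-v)g(x,T_{x/\varepsilon}\omega)$ is $o(1)$ in $L^2(S)$, while $v(x)g(x,\omega)\in L^2(S)\otimes L^2(\Omega)$ is an admissible test function by the density extension noted after Definition~\ref{definicija1}.

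For (5), Rellich gives $u^\varepsilon\to u^0$ strongly in $L^2(S)$ and (1) extracts $\nabla u^\varepsilon\wtwoscale\xi\in L^2(S\times\Omega,\mathbf{R}^n)$. The decisive computation is integration by parts against $\varphi(x)\phi(T_{x/\varepsilon}\omega)$ with $\varphi\in C^\infty_0(S)$ and $\phi\in\mathcal{C}^\infty(\Omega,\mathbf{R}^n)$ satisfying $\div_\omega\phi=0$:
\[
 \int_S\nabla u^\varepsilon\cdot\varphi\,\phi(T_{x/\varepsilon}\omega)\,dx=-\int_S u^\varepsilon\bigl[\nabla\varphi\cdot\phi(T_{x/\varepsilon}\omega)+\varepsilon^{-1}\varphi\,(\div_\omega\phi)(T_{x/\varepsilon}\omega)\bigr]dx.
\]
Solenoidality of $\phi$ kills the singular $\varepsilon^{-1}$ term; letting $\varepsilon\to 0$ using the strong convergence of $u^\varepsilon$ to $u^0$, one finds that $\xi(x,\cdot)-\nabla u^0(x)$ is orthogonal in $L^2(\Omega,\mathbf{R}^n)$ to all smooth solenoidal fields for a.e.~$x$. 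Invoking the stochastic Weyl decomposition, whose validity hinges on ergodicity, this orthogonal complement equals the $L^2$-closure of $\{\nabla_\omega\psi:\psi\in\mathcal{C}^\infty(\Omega)\}=\nabla_\omega\mathcal{W}^{1,2}(\Omega)$; a normalised-mean selection yields $u^1\in L^2(S,\mathcal{W}^{1,2}(\Omega))$ with $\xi=\nabla u^0+\nabla_\omega u^1$. Item (6) is analogous but simpler: extract $u^\varepsilon\wtwoscale u$ and $\varepsilon\nabla u^\varepsilon\wtwoscale\eta$ via (1); integration by parts against $\varphi(x)\phi(T_{x/\varepsilon}\omega)$ for arbitrary $\phi\in\mathcal{C}^\infty(\Omega,\mathbf{R}^n)$ now yields, after dropping the vanishing $\varepsilon\nabla\varphi$ contribution, $\int_\Omega\int_S\eta\cdot\varphi\phi\,dx\,dP=-\int_\Omega\int_S u\,\varphi\,\div_\omega\phi\,dx\,dP$, which by the distributional characterisation of $D_i$ in \eqref{matthaus1} identifies $u(x,\cdot)\in W^{1,2}(\Omega)$ and $\eta=\nabla_\omega u$. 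The principal obstacle is the Weyl-type decomposition invoked in (5): identifying the orthogonal complement of smooth solenoidal fields in $L^2(\Omega,\mathbf{R}^n)$ with $\nabla_\omega\mathcal{W}^{1,2}(\Omega)$ is the one place where the ergodicity hypothesis (Definition~\ref{defergodic}) is essential, and is the only step requiring machinery beyond Birkhoff's theorem and duality.
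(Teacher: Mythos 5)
Your proposal is correct and follows essentially the same route as the paper: diagonal extraction plus Riesz representation for compactness, the duality/Weyl argument with solenoidal test fields $\div_\omega\phi=0$ for the corrector in (5), and integration by parts against $a(x)b(T_{\varepsilon^{-1}x}\omega)$ for (6). The only (immaterial) difference is in (2), where the paper expands $\liminf_\varepsilon\int_S|u^\varepsilon-g(x,T_{\varepsilon^{-1}x}\omega)|^2\,dx\ge 0$ and lets $g\to u$, whereas you read the lower semicontinuity of the norm off the operator-norm bound for the limiting functional.
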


\begin{proof}
	In view of analogies with the periodic case, we just give a sketch of the proof. 
	A proof of (a) can be found in \cite[Lemma~5.1]{zhikov1}. 
	 For the proof of (b), we take an arbitrary $g \in  C^\infty_0(\pspace) \otimes \stochsmooth(\randomspace)$ and calculate
	 \begin{equation*}
	 \begin{aligned}
	  \liminf_{\e \to 0} \int_S\bigl|u^\e(x)&-g(x, T_{\e^{-1}x}\omega)\bigr|^2 dx\\[0.5em]
	  &=\liminf_{\e \to 0} \biggl(\int\bigl|u^{\e}(x)\bigr|^2dx
	  %\\ & &
	  -2\int_S u^{\e}(x)g(x, T_{\e^{-1}x}\omega)dx
	  %\\ & &
	  +\int_S\bigl|g(x, T_{\e^{-1}x}\omega)\bigr|^2dx\biggr) \\[0.5em] & = \liminf_{\e \to 0} \biggl(\int\bigl|u^{\e}(x)\bigr|^2dx
	  %\\ & &
	  -2\int_{S \times \Omega}  u(x,\omega)g(x,\omega)dx dP(\omega)
	  %\\ & &
	  +\int_{S \times \Omega}\bigl|g(x,\omega)\bigr|^2dx dP(\omega)\biggr).
	  \end{aligned}
	   \end{equation*}
	   We obtain the claim by density and letting $g \to u$.
	 The proof of (c), (d) is straightforward. The proof of (e) goes in the same way as in the periodic case, by the duality argument.
	 %, and we just give a sketch. 
	 First, one proves that if $f \in (L^2 (\Omega))^n$ is such that
	 \[
	 \int_{\Omega} f\cdot g=0 \quad\quad \forall g \in\bigl\{g\in\mathcal{C}^{\infty} (\Omega,\mr^n): 
	 %\textrm{ satisfying } 
	 {\rm div}_{\omega}\,g=0\bigr\},
	 \]
	 and therefore there exists $\psi \in \mathcal{W}^{1,2} (\Omega)$ such that 
	 $f=\nabla_{\omega} \psi,$ and one then proceeds in the same way as in the periodic case (see \cite{Allaire}). In order to show the claim  (f), take the subsequence such that $u^{\eps} \wtwoscale u$, where $u \in L^2(S \times \Omega)$ and $\e \nabla u^{\e} \wtwoscale z$, where $z \in L^2(S \times \Omega, \mathbf{R}^n)$. We choose the test functions of the form $\varphi^{\e}(x)=a(x) b(T_{\e^{-1}x}\omega)$, where $a \in C_0^{\infty}(S)$ and $b \in \mathcal{C}^{\infty} (\Omega)$ and by partial integration we conclude 
	% \begin{eqnarray*}
	 \begin{equation*}
	 	\lim_{\e  \to 0}\int_S \e \nabla u^{\e}(x) \varphi^{\e} (x)
		%&=& 
		=-\int_S \int_{\Omega} u(x, \omega)a(x) \nabla_{\omega} b(\omega) dx dP(\omega)=
		%\\ &=& 
		\int_S \int_{\Omega} z(x) a(x)b(\omega) dx dP(\omega),
	 \end{equation*}
	 %\end{eqnarray*}
	 %From this we have 
	 from which the claim follows by a density argument, in view of the property \eqref{matthaus1}. 
\end{proof} 
 
 \section{Problem formulation and auxiliary statements} 
 \label{problem_form}
 Let $S \subseteq \mathbf{R}^n$ be a bounded domain.  We take $\mathcal{O}\subseteq \Omega$ such that $0<P(\mathcal{O})<1$ and for each $\omega\in\Omega$ consider
 %denote by $A_\omega$ 
 its ``realisation"
 $$ \mathcal{O}_\omega=\{x \in \mathbf{R}^n: T_x \omega \in\mathcal{O}\}. $$
We assume that the following conditions are satisfied. 
%{\color{red}\sout{for all claims in this section.}}
\begin{assumption} \label{kirill100}
For a.e. $\omega \in \Omega$ one has 
\begin{equation}
\mathcal{O}_{\omega}:=\bigcup_{k=1}^\infty \mathcal{O}_{\omega}^k,
\label{components}
\end{equation}
where: 
 
1) $\mathcal{O}_{\omega}^k,$ $k\in{\mathbf N},$ are  open connected sets with Lipschitz boundary;

2) For a.e. $\omega \in \Omega$ one has $\overline{\mathcal{O}_{\omega}}=\cup_{k=1}^\infty \overline{\mathcal{O}_{\omega}^k};$ 

3) There exist $c_1,c_2>0$ such that $c_1 \leq \diam \mathcal{O}_{\omega}^k \leq c_2\  \forall k \in \mathbf{N};$
 %Additionally, we assume that f

%%%4) There exists a sequence of disjoint balls $B^k_{\omega}$ such that $\overline{\mathcal{O}^k_{\omega}} \subseteq B^k_{\omega}$ $\forall k \in \mathbf{N};$ 

4) There exists a sequence of disjoint bounded domains $B^k_{\omega}$ such that $\overline{\mathcal{O}^k_{\omega}} \subseteq B^k_{\omega},$ $k \in \mathbf{N},$ and $C_{\omega}>0$ such that for all $k \in \mathbf{N}$ the following extension property holds: for all $u \in W^{1,2}(B^k_{\omega}\backslash \overline{\mathcal{O}^k_{\omega}})$ there exists $\widetilde u \in  W^{1,2}(B^k_{\omega})$ satisfying
% \begin{eqnarray*}
%& & 
\begin{equation*}
\widetilde u=u \textrm{ on } B^k_{\omega} \backslash \overline{\mathcal{O}^k_{\omega}},\quad\quad\quad
%\\ & & 
\int_{B^k_{\omega}} |\nabla \widetilde u|^2 \leq C_{\omega} \int_{B^k_{\omega}\backslash \mathcal{O}^k_{\omega}}  |\nabla u|^2, \quad\quad\quad \Delta \widetilde u=0 \textrm{ on } \mathcal{O}^k_{\omega}. 
\end{equation*}
 %\end{eqnarray*}
 \end{assumption}
 It is easily seen that Assumption \ref{kirill100} holds for the examples given in Section \ref{misha20}. Denote by $\Lambda$ the set of typical elements $\omega\in\Omega$ satisfying the conditions listed in Assumption \ref{kirill100}, and for all $\omega\in\Lambda,$ $\varepsilon>0$ define $S_0^\e(\omega)$ as the union of all components 
 $\e \mathcal{O}^k_{\omega}$ that are subsets of $S$ and stay sufficiently far from its boundary, in the sense that there exists $C=C(\omega)>0$ such that 
 \begin{equation}
 S_0^\e(\omega):=\bigcup_{k \in K^{\e}_{\omega}} \e \mathcal{O}^k_{\omega},\quad\quad K^\varepsilon_{\omega}:=\bigl\{ k \in \mathbf{N}: \, \e \mathcal{O}_\o^k \subseteq S,\ \dist\bigl(\e \mathcal{O}_\o^k, \partial S\bigr)>C\e\bigr\}.
\label{K_def} 
\end{equation}
 %for some fixed constant $C$. 
 We denote the complement of the set $S_0^{\e}(\omega)$ by $S_1^\e(\omega) : = S\setminus 
 \overline{S_0^\e(\omega)}$ and the corresponding set indicatior functions by $\chi_0^\e(\omega)$ and $\chi_1^\e(\omega)$. 
 
For each $\omega\in\Lambda,$ we consider the following Dirichlet problem in $S:$ for $\lambda<0$ and $f^{\e} \in L^2(S),$ find $u^\e \in W^{1,2}_0(S)$ such that 
 \begin{equation}\label{misha3}
 \int_{S} A^\e(\cdot, \omega)\nabla u^{\varepsilon}\cdot \nabla v -\lambda\int_S u^{\varepsilon} \cdot v=\int_{S} f^{\varepsilon}  v\qquad\qquad \forall v\in W^{1,2}_0(S),
 \end{equation}
%For $A^\e$ 
%In what follows, we set
where
$$
A^\e(\cdot,\omega)= \chi_1^\e(\omega)A_1
%{\color{red}(\cdot,\omega)}
+\varepsilon^2\chi_0^\e(\omega)I,
%$
\qquad\qquad 
%$
\omega\in\Lambda, 
$$
with a symmetric and positive-definite matrix $A_1.$
%% function, {\it i.e.} there exist $\alpha, \beta>0$ such that 
%%$
%%\alpha |\xi|^2 \leq A_1(x,\omega)\xi \cdot \xi \leq \beta |\xi|^2$
%\quad\quad 
%\forall 
%% for all $\xi\in\mathbf{R}^n,$ 
%\quad 
%%$
%%x\in\mathbf{R}^n.
%%$
%%%{\color{red} Is it always clear what $\chi$'s stand for? See also below.}
 
 For all $\omega \in \Lambda$ we also define the Dirichlet operator 
 $\mathcal{A}^{\e}(\omega)$  in $L^2(S)$ corresponding to the differential expression 
 $-\div A^{\e} (\cdot,\omega) \nabla u,$ {\it e.g.} 
 by considering the 
 %corresponding 
 bilinear form 
 %{\color{red} Non-Dirichlet ???}
 \[
 \int_{S}A^{\e} (\cdot,\omega)\nabla u\cdot\nabla v,\qquad u, v\in W^{1,2}_0(S).
 \]
 %{\color{red} Change ${\mathbf R}^n$ to $S$?}
 %on $H^1(\mathbf{R}^n).$
 %corresponding to the differential expression 
It is well known that the spectrum of $\mathcal{A}^{\e}(\omega)$ is discrete. The following subspace of $W^{1,2}(\Omega)$ will play an essential role in our analysis:
$$  W_{0}^{1,2}(\mathcal{O})=\bigl\{v \in W^{1,2}(\Omega): 
%\textrm{a.e. } \omega \in \Omega,\ \ 
v(T_x \omega)=0 \textrm{ on }  \mathbf{R}^n \backslash \mathcal{O}_{\omega}\ \ \forall\omega\in\Lambda\bigr\}. $$
Notice that as a consequence of Ergodic Theorem (Theorem \ref{thmergodic}) one has
$$  W_0^{1,2}(\mathcal{O})=\bigl\{ v \in W^{1,2}(\Omega): \chi_{\mathcal{O}} v=v\bigr\},$$
{\it i.e.} $W_0^{1,2}(\mathcal{O})$ consists of $W^{1,2}$-functions that vanish on $\Omega \backslash \mathcal{O}.$ Henceforth we assume that $\omega\in\Lambda$ without mentioning it explicitly.
 
The next two lemmas use a standard smoothening (or ``mollification") procedure, which we now describe. We take $g\in L^2(\Omega)$ and ({\it cf.} \cite[p.\,232]{zhikov2}) choose a nonnegative even function $\rho \in C_0^{\infty} (\mathbf{R}^n)$ with $\int_{\mathbf{R}^n} \rho=1,$ $\supp\rho \subset B_1(0)$ and write $\rho_{\delta}(x)=\delta^{-n} \rho(x/\delta)$ for
all $\delta>0$. For each $\delta>0,$ we define the regularisation ${\mathcal R}_\delta[g]$ of $g$ by  %${\mathcal R}_{\delta}[g]$ by
$$ 
{\mathcal R}_{\delta}[g] (\omega)=\int_{\mathbf{R}^n} \rho_{\delta} (y) g(T_y \omega)dy,
=\int_{\mathbf{R}^n} \rho_{\delta} (y) g(T_{-y} \omega)dy,\qquad\omega\in\Omega.
$$
Notice that 
%if $\widetilde{g}(x,\omega):=g(T_x\omega),$ $x\in{\mathbb R}^n,$ $\omega\in\Omega,$ then
\begin{equation} \label{nada10} 
%{\mathcal R}_{\delta}\bigl[\widetilde{g}(x,\cdot)\bigr] (\omega)=
{\mathcal R}_{\delta}\bigl[g\bigr](T_x\omega)=\int_{\mathbf{R}^n} \rho_{\delta} (y) g(T_{x-y} \omega)dy=\int_{\mathbf{R}^n} \rho_{\delta} (x-y) g(T_y \omega)dy=\int_{\mathbf{R}^n} \rho_{\delta} (y-x) g(T_y\omega)dy,%\bigl(\rho_{\delta}*g(T_{\cdot}\omega)\bigr)(x),\qquad x\in{\mathbf R}^n,\ \omega\in\Omega,
\end{equation}
from which we infer that
$$ 
D_i {\mathcal R}_{\delta}[g](\omega)=-\int_{\mathbf{R}^n} \partial_i \rho_{\delta} (y)g(T_y \omega)dy,\quad\omega\in\Omega,\qquad i=1,2,\dots,n. 
$$
Arguing by induction, we show that ${\mathcal R}_{\delta}[g] \in W^{\infty,2}(\Omega),$ and if $g \in L^{\infty} (\Omega)$ then ${\mathcal R}_{\delta}[g] \in \mathcal{C}^{\infty} (\Omega).$ Before we state and prove the lemmas, we introduce additional notation.  We define the space 
$$ \mathcal{C}_0^{\infty} (\mathcal{O}):=\bigl\{v \in \mathcal{C}^{\infty} (\Omega): v= 0 \textrm{ on } \Omega \backslash \mathcal{O}\bigr\},$$
as well as the sets 
%$D^{n,m}_{\omega}$ for each $\omega \in \Omega$, $k,m \in \mathbf{N}$: 
$$D^{k,m}_{\omega}:=\biggl\{x \in  \mathcal{O}^k_{\omega}: \dist\bigl(x, \partial \mathcal{O}^k_{\omega}\bigr)>\frac{1}{m}\biggr\},\quad\quad k,m \in \mathbf{N}.$$
Also, for all $m \in \mathbf{N}$ we define the set 
%$B^m \subset \Omega$ in a way
$$ B^m:=\{\omega \in \Omega: 0 \in D^{k,m}_{\omega} \textrm{ for some } k \in \mathbf{N}   \}\subset\Omega. $$ 	
By using the density of $\mathbf{Q}^n$ in $\mathbf{R}^n$ it can be seen that for all $m \in \mathbf{N}$ the set $B^m$ is measurable.
Notice that for each fixed $\omega,k,m,$ where $m$ is large enough, there exist constants $C_1,C_2>0$ such that 
\begin{equation}
\frac{C_1}{m}\bigl |\mathcal{O}^k_{\omega}\bigr| \leq\bigl|\Omega^k_{\omega} \backslash D^{k,m}_{\omega}\bigr| \leq \frac{C_2}{m}\bigl|\mathcal{O}^k_{\omega}\bigr|.
\label{set_bounds}
\end{equation}
In the next lemma we assume that a relaxed version of the right inequality in (\ref{set_bounds}) holds uniformly in $\omega.$ 
\begin{lemma} \label{lemmagloria}
Suppose that for a.e. $\omega \in \Omega$ there exists a sequence of positive values $C_m$ converging to zero, such that 
%such that $C_m \to 0$  and 
$$  
\quad\bigl|\mathcal{O}^k_{\omega} \backslash D^{k,m}_{\omega}\bigr| \leq C_m\bigl|\mathcal{O}^k_{\omega}\bigr|\quad\quad \forall k \in \mathbf{N}.    
$$
Then the set
%the space 
	$\mathcal{C}_0^{\infty} (\mathcal{O})$
	is dense in  $L^2(\mathcal{O})$. 
\end{lemma}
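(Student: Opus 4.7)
The plan is a three-step approximation: truncate $u\in L^2(\mathcal{O})$ to a bounded function, multiply by $\chi_{B^m}$ to strip off the boundary layer of the inclusions, and mollify using $\mathcal{R}_\delta$ at scale $\delta\le 1/m.$ The truncations $u_N=(u\wedge N)\vee(-N)$ lie in $L^\infty(\Omega)\cap L^2(\mathcal{O})$ and converge to $u$ in $L^2(\Omega),$ so we may assume $u\in L^\infty(\Omega)$ vanishes on $\Omega\setminus\mathcal{O}.$ The sets $B^m$ are nested (since $D^{k,m}_\omega\subset D^{k,m+1}_\omega$) and satisfy $\bigcup_m B^m = \mathcal{O}$ up to a null set: if $0\in\mathcal{O}_\omega,$ then openness from Assumption \ref{kirill100} yields $B_{1/m}(0)\subset\mathcal{O}^k_\omega$ for some $k$ and all large $m,$ so that $\omega\in B^m.$ Dominated convergence then gives $\chi_{B^m}u\to u$ in $L^2(\Omega);$ the hypothesis of the lemma quantifies this via a Fubini computation giving $P(\mathcal{O}\setminus B^m)\le C_m P(\mathcal{O}).$

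Set $u_m:=\chi_{B^m}u\in L^\infty(\Omega)$ and $v_{m,\delta}:=\mathcal{R}_\delta[u_m].$ The smoothening procedure gives $v_{m,\delta}\in\mathcal{C}^\infty(\Omega),$ and the standard identity $\|\mathcal{R}_\delta[f]-f\|_{L^2}\to 0$ as $\delta\to 0$ yields $v_{m,\delta}\to u_m$ in $L^2(\Omega).$ The crucial step is to verify $v_{m,\delta}\in\mathcal{C}_0^\infty(\mathcal{O})$ whenever $\delta\le 1/m.$ For a.e.\ $\omega$ with $0\notin\mathcal{O}_\omega,$
\[
v_{m,\delta}(\omega)=\int_{|y|\le\delta}\rho_\delta(y)\,\chi_{B^m}(T_y\omega)\,u(T_y\omega)\,dy,
\]
and the translation identities $\mathcal{O}_{T_y\omega}=\mathcal{O}_\omega-y$ and $D^{k,m}_{T_y\omega}=D^{k,m}_\omega-y$ imply $\chi_{B^m}(T_y\omega)=1$ iff $y\in D^{k,m}_\omega$ for some $k.$ But $y\in D^{k,m}_\omega$ forces $B_{1/m}(y)\subset\mathcal{O}^k_\omega,$ and since $0\notin\mathcal{O}^k_\omega$ this requires $|y|\ge 1/m.$ For $\delta\le 1/m$ this is incompatible with $|y|\le\delta$ except on a null set, so the integrand vanishes and $v_{m,\delta}(\omega)=0.$

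A standard diagonal choice ($N\to\infty,$ then $m\to\infty,$ then $\delta_m\le 1/m$ small) produces approximants $\mathcal{R}_{\delta_m}[\chi_{B^m}u_N]\in\mathcal{C}_0^\infty(\mathcal{O})$ converging to $u$ in $L^2(\Omega).$ I expect the main technical obstacle to be the support identity above: it depends on the translation equivariance of the dynamical system combined with the openness and disjointness of the inclusion components, and it is precisely what forces the constraint $\delta\le 1/m$ relating the mollification scale to the cutoff level. Once this geometric identity is in place, the remainder of the argument is a routine diagonalisation.
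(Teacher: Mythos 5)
Your proposal is correct and follows essentially the same route as the paper: truncate to $L^\infty$, cut off by $\chi_{B^m}$ (with $P(\mathcal{O}\setminus B^m)\to0$), and mollify with $\mathcal{R}_\delta$ at scale $\delta\le 1/m$, concluding by strong $L^2$-continuity of the mollification. The only difference is one of detail: you spell out the support verification (via $\mathcal{O}_{T_y\omega}=\mathcal{O}_\omega-y$ and $\dist(y,\partial\mathcal{O}^k_\omega)>1/m\Rightarrow|y|\ge 1/m$ when $0\notin\mathcal{O}_\omega$) that the paper compresses into ``for $\delta>0$ small enough, $\mathcal{R}_\delta[g]\in\mathcal{C}_0^\infty(\mathcal{O})$'', and you correctly note that the qualitative convergence $B^m\uparrow\mathcal{O}$ already follows from openness of the components.
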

\begin{proof}
 By using Ergodic Theorem and the assumption of the lemma, it can be shown that $P(\mathcal{O} \backslash B^m) \to 0$ as $m \to \infty$. To prove  the density, it suffices to approximate $g:=\chi_{B^m} f$, where $f \in L^{\infty}(\mathcal{O})$ by a function from $\mathcal{C}_0^{\infty} (\mathcal{O}),$ for which we 
 use the above mollification procedure. Notice that for $\delta>0$ small enough, one has ${\mathcal R}_{\delta}[g] \in \mathcal{C}_0^{\infty}(\mathcal{O})$. 
It remains to check ${\mathcal R}_{\delta}[g] \to g$ as $\delta \to 0$, but this follows from the strong continuity of the group $U(x),$ see \eqref{matthaus1000}: 
\begin{eqnarray*}
\bigl\|g-{\mathcal R}_{\delta}[g]\bigr\|_{L^2(\Omega)} \leq \int_{\mathbf{R}^n} \rho_{\delta}(y)\bigl\|U(y) g-g\bigr\|_{L^2(\Omega)}dy \to 0, 
\end{eqnarray*}
as required.
\end{proof} 	
Notice that, by the standard Poincar\'{e} inequality, for each $D^{k,m}_{\omega}$ there exists $C>0$ such that
\begin{equation}\label{nada1} 
 \int_{\mathcal{O}^k_{\omega} \backslash D^{k,m}_{\omega}}u^2 dx \leq C\bigl|\mathcal{O}^k_{\omega} \backslash D^{k,m}_{\omega}\bigr|^2 \int_{\mathcal{O}^k_{\omega} \backslash D^{k,m}_{\omega}} |\nabla u|^2 dx\qquad \forall u \in W_0^{1,2} (\mathcal{O}^k_{\omega}). 
 \end{equation} 
In the following lemma we impose this condition uniformly. 
\begin{lemma}
 \label{marko110} 
	Assume that for a.e. $\omega \in \Omega$ there exists a constant $C>0$ such that 
	\begin{equation} \label{nada2} 
		\bigl|\mathcal{O}^k_{\omega} \backslash D^{k,m}_{\omega}\bigr| \leq  \frac{C}{m}\bigl|\mathcal{O}^k_{\omega}\bigr|\qquad\forall k, m,
		\end{equation} 
and that \eqref{nada1} is satisfied for all $k$ and large enough $m.$ Then the set $\mathcal{C}_0^{\infty} (\mathcal{O})$ is dense in $W_0^{1,2} (\mathcal{O})$.
\end{lemma}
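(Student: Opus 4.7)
The plan is to approximate $u \in W_0^{1,2}(\mathcal{O})$ in three successive stages: truncation to secure an $L^\infty$-bound, multiplication by a random cut-off $\Phi_m$ that confines the support strictly inside $\mathcal{O}$, and mollification via $\mathcal{R}_\delta$ to gain smoothness. A standard diagonal argument in the three parameters $N, m, \delta$ then produces the desired sequence in $\mathcal{C}_0^\infty(\mathcal{O})$ converging to $u$ in $W^{1,2}(\Omega)$.

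The first stage is routine: with $\tau_N(s) := \max(-N, \min(N, s))$ the composition $u_N := \tau_N \circ u$ lies in $W_0^{1,2}(\mathcal{O}) \cap L^\infty(\Omega)$ (since $\tau_N$ is $1$-Lipschitz with $\tau_N(0) = 0$) and $u_N \to u$ in $W^{1,2}(\Omega)$ by dominated convergence. For the second stage, fix $\eta_m \in C^\infty([0,\infty); [0,1])$ with $\eta_m \equiv 0$ on $[0, 1/(2m)]$, $\eta_m \equiv 1$ on $[1/m, \infty)$, $|\eta_m'| \leq Cm$, and define
\[
  \Phi_m(\omega) := \eta_m\bigl(\dist(0, \partial \mathcal{O}^k_\omega)\bigr) \text{ if } 0 \in \mathcal{O}^k_\omega, \qquad \Phi_m(\omega) := 0 \text{ otherwise.}
\]
Using the translation identity $\mathcal{O}^k_{T_x\omega} = \mathcal{O}^{k'}_\omega - x$, the realisation $x \mapsto \Phi_m(T_x\omega)$ is $Cm$-Lipschitz on each inclusion, so $\Phi_m \in W^{1,2}(\Omega)$ by \eqref{matthaus0}, with $|\nabla_\omega \Phi_m| \leq Cm$ and $\supp(\nabla_\omega \Phi_m) \subseteq B^{2m} \setminus B^m$. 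Setting $u_{N,m} := u_N \Phi_m \in L^\infty(\Omega) \cap W^{1,2}(\Omega)$, both $u_N(1-\Phi_m)$ and $(1-\Phi_m)\nabla_\omega u_N$ vanish in $L^2$ as $m \to \infty$ by dominated convergence, since $P(\mathcal{O} \setminus B^m) \to 0$ (as established during the proof of Lemma \ref{lemmagloria}).

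The decisive term is $u_N \nabla_\omega \Phi_m$. Expressing its $L^2(\Omega)$-norm through Theorem \ref{thmergodic} as a spatial average over a large ball $B_R$, decomposing into contributions from individual inclusions, and using $|\nabla_\omega \Phi_m| \leq Cm$ gives a factor $Cm^2$ that must be absorbed. Applying \eqref{nada1} on each $\mathcal{O}^k_\omega$ (noting $u_N \in W^{1,2}_0(\mathcal{O}^k_\omega)$) together with $|\mathcal{O}^k_\omega \setminus D^{k,m}_\omega|^2 \leq C/m^2$, which follows from \eqref{nada2} and the diameter bound in Assumption \ref{kirill100}, yields the crucial cancellation
\[
  \|u_N \nabla_\omega \Phi_m\|_{L^2(\Omega)}^2 \leq C \int_{\mathcal{O} \setminus B^m} |\nabla_\omega u_N|^2 \, dP \longrightarrow 0 \text{ as } m \to \infty.
\]
For the third stage, since $u_{N,m} \in L^\infty \cap W^{1,2}$, one has $\mathcal{R}_\delta[u_{N,m}] \in \mathcal{C}^\infty(\Omega)$, and by strong continuity of the group $U$ from \eqref{matthaus1000} this mollification converges to $u_{N,m}$ in $W^{1,2}(\Omega)$ as $\delta \to 0$. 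For $\delta < 1/(2m)$, if $\mathcal{R}_\delta[u_{N,m}](\omega) \neq 0$ then some $|y| \leq \delta$ satisfies $T_y\omega \in B^{2m}$, placing $y$ within $\delta$ of a point of $D^{k',2m}_\omega$, which by the triangle inequality forces $0 \in \mathcal{O}^{k'}_\omega$; hence $\mathcal{R}_\delta[u_{N,m}] \in \mathcal{C}_0^\infty(\mathcal{O})$.

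The main obstacle is the critical estimate in the second stage: the cut-off $\Phi_m$ necessarily has gradient growing like $m$, and the smallness of $u_N$ near $\partial \mathcal{O}^k_\omega$ must cancel this growth exactly. It is precisely the uniform Poincar\'e inequality \eqref{nada1} paired with the uniform volume decay \eqref{nada2}—the two hypotheses distinguishing this lemma from Lemma \ref{lemmagloria}—that supplies this cancellation and makes the construction work in the random setting.
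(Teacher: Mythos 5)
Your proof is correct and follows essentially the same route as the paper: truncate to get an $L^\infty$ bound, cut off at scale $1/m$ near the inclusion boundaries, mollify at scale $1/(2m)$, and control the dangerous $O(m)$ term coming from the cut-off by pairing the uniform Poincar\'e inequality \eqref{nada1} with the volume decay \eqref{nada2}. The only (cosmetic) difference is that you use an explicit Lipschitz cut-off $\Phi_m$ followed by mollification, whereas the paper mollifies the sharp cut-off $\chi_{B^m}f$ directly and extracts the same $m^2\cdot m^{-2}$ cancellation from the derivative of the mollifying kernel acting near $\partial\mathcal{O}^k_\omega$.
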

\begin{proof}
	We take $f \in W_0^{1,2} (\mathcal{O})$ and define
	$f_M:=\chi_{|f|\leq M} f+\chi_{|f|\geq M}M$. Notice that as a consequence of \eqref{matthaus0}, $f_M \in W_0^{1,2} (\mathcal{O})$ and $D_i f_M=\chi_{|f|< M} D_i f $.
It is easily seen that $f_M \to f$ in $W^{1,2}(\Omega)$ as $M \to \infty$. 
	 Thus we can assume, without loss of generality,  that $f \in L^{\infty}(\Omega) \cap W_0^{1,2}(\mathcal{O})$. We define 
	 $h_m=K_{1/2m}[\chi_{B^m}f]$. 
	 It can be seen from the proof of Lemma \ref{lemmagloria} that $h_m \to f$ in $L^2(\mathcal{O})$ as $m \to \infty$. 
	 Notice that for a.e. $\omega \in B^{\lceil 2m/3 \rceil}$ we have for $i=1,\dots,n$
	$$  D_i h_m (\omega)= K_{1/2m}[D_if] (\omega)=\int_{\mathbf{R}^n} \rho_{1/2m} (y) D_i f(T_y \omega)dy,$$
	 and therefore
	  $$ \| D_i h_m-D_i f\|_{L^2(B^{\lceil 2m/3 \rceil})}\leq \bigl\| K_{1/2m}[D_if]-D_i f\bigr\|_{L^2(\Omega)} \to 0. $$
	  
	  Notice also that  for a.e. $\omega \in \Omega$ there exist $C_1,C_2>0$ such that for all $k,m \in \mathbf{N},$ where $m$ is sufficiently large, we have
	 $$
	 \bigl\|\partial_i h_m(\cdot,\omega)\bigr\|^2_{L^2(\mathcal{O}^k_{\omega} \backslash D^{k,\lceil 2m/3 \rceil}_{\omega})} \leq C_1m^2 \|f\|^2_{L^2(\mathcal{O}^k_{\omega} \backslash D^{k,\lceil m/2 \rceil}_{\omega})} \leq C_2\|f\|^2_{W^{1,2}(\mathcal{O}^k_{\omega} \backslash D^{k,\lceil m/2 \rceil}_{\omega})} ,  $$  
	 where we have used \eqref{nada10}, \eqref{nada1}, \eqref{nada2} and  Young's inequality. By using the Ergodic Theorem we conclude that there exists $C>0$ such that
	 $$ \|D_i h_m\|^2_{L^2(\mathcal{O} \backslash B^{\lceil 2m/3 \rceil})} \leq C \|f\|^2_{W^{1,2}(\mathcal{O} \backslash B^{\lceil m/2 \rceil})}, $$
from which the claim follows. 
\end{proof}

\section{Limit equations and two-scale resolvent convergence}
\label{lim_eq}

We define the quadratic form 
%(see Lemma \ref{characteristic}) 
%associated with the homogenised limit 
\begin{equation*}
% \label{tambaca1}
	 A_1^{\rm hom}\xi \cdot \xi:= \inf_{\varphi \in W^{1,2}(\Omega)}\int_{\Omega \backslash \mathcal{O}} A_1(\xi+\nabla_\omega \varphi) \cdot (\xi+\nabla_\omega \varphi),\qquad\xi\in\mathbf{R}^n,	
\end{equation*}	 
and denote by $\mathcal{W}^{1,2}(\Omega \backslash \mathcal{O})$  the completion of $W^{1,2}(\Omega)$ with respect to the seminorm 
%{\color{red}\sout{$\|\cdot\|_{\rm r}$ given by}}
%$$
%\| \varphi \|_{\rm r} := 
$\| \nabla_{\omega} \varphi\|_{L^2(\Omega \backslash \mathcal{O})},$\linebreak
%\qquad 
$\varphi\in W^{1,2}(\Omega).$ 
The proof of the following lemma is straightforward. 
\begin{lemma} \label{characteristic} 	For each $\xi \in \mathbf{R}^n$ there exists  $p_{\xi} \in \mathcal{W}^{1,2} (\Omega \backslash \mathcal{O})$ such that 
\begin{equation*}
% \label{uvjetzap}
 A_1^{\rm hom} \xi \cdot \xi=\int_{\Omega \backslash \mathcal{O}} A_1(\xi+p_{\xi}) \cdot (\xi+p_{\xi}),
\end{equation*}
%The equation \eqref{uvjetzap} is equivalent to the condition 
or, equivalently,
\begin{equation} \label{tambaca3} 
\int_{\Omega \backslash \mathcal{O}} A_1(\xi+p_{\xi})\cdot \nabla_{\omega} \varphi=0 \qquad \forall \varphi  \in \mathcal{C}^{\infty} (\Omega).
\end{equation}
%Thus, the right hand side in \eqref{tambaca1} defines the quadratic form which satisfies 
In particular, one has 
$ 
A_1^{\rm hom}\le A_1.
%\xi \cdot \xi \leq \beta |\xi|^2.   
$
\end{lemma}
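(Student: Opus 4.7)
The argument is a routine application of the direct method of the calculus of variations, which I expect to be straightforward. The plan is to identify the abstract completion $\mathcal{W}^{1,2}(\Omega\setminus\mathcal{O})$ with a closed subspace of $L^2(\Omega\setminus\mathcal{O},\mathbf{R}^n)$ on which the relevant functional is coercive and continuous, and then run the standard existence and Euler--Lagrange argument.

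First, since the seminorm that defines $\mathcal{W}^{1,2}(\Omega\setminus\mathcal{O})$ is precisely the $L^2(\Omega\setminus\mathcal{O},\mathbf{R}^n)$-norm of the gradient, I identify $\mathcal{W}^{1,2}(\Omega\setminus\mathcal{O})$ isometrically with the closure
\[
V:=\overline{\bigl\{\nabla_\omega\varphi\big|_{\Omega\setminus\mathcal{O}}:\varphi\in W^{1,2}(\Omega)\bigr\}}^{\,L^2(\Omega\setminus\mathcal{O},\mathbf{R}^n)},
\]
and, abusing notation, denote elements of $\mathcal{W}^{1,2}(\Omega\setminus\mathcal{O})$ and their gradients by the same symbol, so that expressions such as $\xi+p_\xi$ are to be read as elements of $L^2(\Omega\setminus\mathcal{O},\mathbf{R}^n)$. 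The subspace $V$ is closed, hence a Hilbert space.

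Next, I would apply the direct method to the functional $F(q):=\int_{\Omega\setminus\mathcal{O}}A_1(\xi+q)\cdot(\xi+q)$ on $V$. Symmetry and positive-definiteness of $A_1$ make $F$ continuous and strictly convex on $L^2(\Omega\setminus\mathcal{O},\mathbf{R}^n)$, and the elementary inequality $\|\xi+q\|^2\geq \tfrac12\|q\|^2-\|\xi\|^2$ combined with the lower ellipticity bound on $A_1$ yields coercivity of $F$ on $V$. A minimising sequence $(\nabla_\omega\varphi_n)\subset V$ is therefore bounded in $L^2(\Omega\setminus\mathcal{O},\mathbf{R}^n)$, and passing to a weakly convergent subsequence with limit $p_\xi\in V$ and using the weak lower semicontinuity of $F$ gives the existence of the claimed minimiser.

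Finally, taking the derivative $\tfrac{d}{dt}\big|_{t=0}F(p_\xi+t\,\nabla_\omega\varphi)=0$ for each $\varphi\in\mathcal{C}^\infty(\Omega)$ yields the Euler--Lagrange identity \eqref{tambaca3}, which is equivalent, via the orthogonality $\int_{\Omega\setminus\mathcal{O}}A_1(\xi+p_\xi)\cdot p_\xi=0$ obtained by approximating $p_\xi\in V$ by smooth $\nabla_\omega\varphi$, to the stated variational formula. The inequality $A_1^{\rm hom}\leq A_1$ follows by choosing $\varphi\equiv 0$ as a competitor, which gives $A_1^{\rm hom}\xi\cdot\xi\leq \int_{\Omega\setminus\mathcal{O}}A_1\xi\cdot\xi=(1-P(\mathcal{O}))A_1\xi\cdot\xi\leq A_1\xi\cdot\xi$. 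The only mildly delicate point is the identification of $\mathcal{W}^{1,2}(\Omega\setminus\mathcal{O})$ as a concrete space of vector fields; once this is in place, no genuine obstacle remains.
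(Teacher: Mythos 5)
Your argument is correct, and it is precisely the standard direct-method/projection argument that the paper has in mind when it states that ``the proof of the following lemma is straightforward'' and omits it. The identification of $\mathcal{W}^{1,2}(\Omega\setminus\mathcal{O})$ with the $L^2$-closure of the restricted gradients, the coercivity and strict convexity giving a unique minimiser, the Euler--Lagrange identity (which extends from $\varphi\in\mathcal{C}^\infty(\Omega)$ to all of $V$ by the density of $\mathcal{C}^\infty(\Omega)$ in $W^{1,2}(\Omega)$), and the competitor $\varphi\equiv0$ for $A_1^{\rm hom}\le A_1$ are all exactly as expected.
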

\begin{remark} \label{tambaca2}
It follows from the observations in  \cite[p.\,265--266]{zhikov1} that if the following extension property is satisfied for a.e. $\omega \in \Omega$: for all $u\in C_0^{\infty} (B_1(0))$ there exists $\varepsilon^0_{\omega}>0$ and  a sequence of functions $(\widetilde u_{\varepsilon})$ such that
%\begin{eqnarray*} 
\begin{equation*}
\widetilde u_{\varepsilon}=u \textrm{ in } B_1(0) \backslash\bigcup_{k \in \mathbf{N}} \varepsilon \mathcal{O}^k_{\omega}\textrm{ for } \varepsilon \leq \varepsilon^0_{\omega}, 
%\\ 
\quad\quad\qquad \int_{B_2(0)} |\nabla \widetilde u_{\varepsilon}| \leq C_{\omega} \int_{B_1(0)
\backslash\bigcup_{k\in\mathbf{N}}\varepsilon\mathcal{O}^k_{\omega}}  |\nabla u|^2, 
\end{equation*}
%\end{eqnarray*} 
where $C_{\omega}$ is a constant independent of $u$ and $\varepsilon,$ then the matrix $A_1^{\rm hom}$ is positive definite. 
\end{remark} 
Notice that under Assumption \ref{kirill100}, the extension property in Remark \ref{tambaca2} is satisfied. 
We define the space 
%{\color{red} Do we use the notation $H$ anywhere?}
\begin{equation*}
H:=L^2(S)+\bigl\{u\in L^2(S\times\Omega): u\vert_{S\times(\Omega\setminus\mathcal{O})}=0\bigr\},
\label{direct_sum}
\end{equation*}
which is clearly a direct sum, naturally embedded in $L^2(S\times \Omega)$. Before stating the next theorem we prove a simple lemma that implies that 
%this decomposition is unique {\color{red} Would anyone question this?} and
gives norm bounds for each component of $H$
%the right-hand side of (\ref{direct_sum}) 
by the norm in $L^2(S\times \Omega).$
\begin{lemma} \label{kirill101} 
Let $f_0 \in L^2(S)$ and $f_1 \in L^2(S  \times \Omega)$ such that $f_1 \equiv 0$ on $\Omega \backslash \mathcal{O}$. Then there exists a constant $C>0$ such that 
$$ \|f_0\|_{L^2(S)}+\|f_1\|_{L^2(S \times \Omega)} \leq C\|f_0+f_1\|_{L^2(S \times \Omega)}, 
$$
where we use the natural embedding $L^2(S) \hookrightarrow L^2(S \times \Omega)$. 
\end{lemma}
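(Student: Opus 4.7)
The plan is to use the structural disjointness of the two summands: $f_1$ vanishes on $S\times(\Omega\setminus\mathcal{O})$, while $f_0$ (viewed in $L^2(S\times\Omega)$ through the natural embedding) is independent of $\omega$. This allows one to isolate $f_0$ by restricting the $L^2(S\times\Omega)$ integral to the slab $S\times(\Omega\setminus\mathcal{O})$. The only global hypothesis one needs on $\mathcal{O}$ is $P(\mathcal{O})<1$, which is given in the problem setup in Section \ref{problem_form}.

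First I would record that, since $P$ is a probability measure and $f_0$ does not depend on $\omega$ under the embedding $L^2(S)\hookrightarrow L^2(S\times\Omega)$, one has the identity $\|f_0\|_{L^2(S\times\Omega)}=\|f_0\|_{L^2(S)}$. Next I would restrict the integral defining $\|f_0+f_1\|_{L^2(S\times\Omega)}^2$ to $S\times(\Omega\setminus\mathcal{O})$. Because $f_1\equiv 0$ on $\Omega\setminus\mathcal{O}$, the integrand on this subset equals $|f_0(x)|^2$, so
\begin{equation*}
\|f_0+f_1\|_{L^2(S\times\Omega)}^2\ \ge\ \int_{S}\!\int_{\Omega\setminus\mathcal{O}}|f_0(x)|^2\,dP(\omega)\,dx\ =\ \bigl(1-P(\mathcal{O})\bigr)\|f_0\|_{L^2(S)}^2.
\end{equation*}
Since $P(\mathcal{O})<1$, this directly yields
$$\|f_0\|_{L^2(S)}\le\bigl(1-P(\mathcal{O})\bigr)^{-1/2}\|f_0+f_1\|_{L^2(S\times\Omega)}.$$

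Having bounded $\|f_0\|_{L^2(S)}$, I would obtain the bound on $\|f_1\|_{L^2(S\times\Omega)}$ by the triangle inequality in $L^2(S\times\Omega)$, namely $\|f_1\|_{L^2(S\times\Omega)}\le \|f_0+f_1\|_{L^2(S\times\Omega)}+\|f_0\|_{L^2(S\times\Omega)}$, and then apply the previous step together with $\|f_0\|_{L^2(S\times\Omega)}=\|f_0\|_{L^2(S)}$. Adding the two estimates gives the claim with explicit constant $C=1+2\bigl(1-P(\mathcal{O})\bigr)^{-1/2}$.

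There is no real obstacle here: the whole argument is one observation (restrict to the $\omega$-slab where $f_1$ vanishes) followed by the triangle inequality. The only subtlety worth flagging is that the constant $C$ degenerates as $P(\mathcal{O})\uparrow 1$, which is why the standing assumption $P(\mathcal{O})<1$ from Section \ref{problem_form} is essential; no ergodicity, measurability of $\mathcal{O}_\omega$, or properties of the dynamical system $T_x$ enter the proof.
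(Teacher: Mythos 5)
Your proof is correct, and it is in fact a little cleaner than the one in the paper, though the two arguments are close cousins: both exploit that $f_1$ is supported in $S\times\mathcal{O}$ together with $P(\mathcal{O})<1$, and both finish with the triangle inequality. The difference is which summand gets isolated first. You bound $\|f_0\|_{L^2(S)}$ first, simply by restricting the integral defining $\|f_0+f_1\|_{L^2(S\times\Omega)}^2$ to the slab $S\times(\Omega\setminus\mathcal{O})$, where the integrand reduces to $|f_0|^2$; this gives $\|f_0+f_1\|^2\ge(1-P(\mathcal{O}))\|f_0\|_{L^2(S)}^2$ in one line. The paper instead bounds $\|f_1\|_{L^2(S\times\Omega)}$ first: it expands $\|f_0+f_1\|^2$, estimates the cross term $2\int_{S\times\Omega}|f_0|\,|f_1|$ by Cauchy--Schwarz in $\omega$ over $\mathcal{O}$ followed by Young's inequality, and arrives at $\|f_0+f_1\|^2\ge(1-P(\mathcal{O}))\|f_1\|^2$; the bound on $\|f_0\|$ then follows by the triangle inequality. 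Your route avoids the cross-term manipulation entirely and yields an explicit constant with the same degeneration as $P(\mathcal{O})\uparrow1$, and you are right that no ergodicity or measurability of the realisations enters either argument. One cosmetic remark: the statement as written quantifies $C$ after $f_0,f_1$, but in both your proof and the paper's the constant depends only on $P(\mathcal{O})$, which is what is actually needed later; it is worth stating that explicitly.
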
 
\begin{proof}
Notice that by Cauchy-Schwartz inequality we have 
\begin{eqnarray*}
2\int_{S \times \Omega}\bigl|f_0(x)\bigr|\bigl|f_1(x,\omega)\bigr| dx dP(\omega) &=&2\int_{S}\bigl|f_0(x)\bigr| \int_\mathcal{O}\bigl|f_1(x,\cdot)\bigr| dP dx \\ &\leq&\| 
f_0\|^2_{L^2(S)}+\left\|\int_\mathcal{O}f_1\right\|^2_{L^2(S)}
%\\ &\leq& 
\leq \| f_0\|^2_{L^2(S)}+P(\mathcal{O}) \|f_1\|^2_{L^2(S \times \Omega)},
\end{eqnarray*}
and hence
%From this we have: 
%\begin{eqnarray*}
\begin{equation}
\|f_0+f_1\|^2_{L^2(S \times \Omega)} \geq \|f_0\|^2_{L^2(S \times \Omega)}-2\int_{S \times \Omega} |f_0| |f_1| dx dP +\|f_1\|^2_{L^2(S \times \Omega)} \\
\geq \bigl(1-P(\mathcal{O})\bigr) \|f_1\|^2_{L^2(S \times \Omega)}.
\end{equation}
%\end{eqnarray*}
It remains to bound $\|f_0\|_{L^2(S \times \Omega)}$ by $\|f_0+f_1\|_{L^2(S \times \Omega)}$, which is done by the triangle inequality:  
\begin{equation*}
\|f_0\|^2_{L^2(S)} \leq 2\bigl(\|f_0+f_1\|^2_{L^2(S)}+\|f_1\|^2_{L^2(S \times \Omega)}\bigr) 
%\\
%&\leq &
\leq\dfrac{2\bigl(2-P(\mathcal{O})\bigr)}{1-P(\mathcal{O})} \|f_0+f_1\|^2_{L^2(S \times \Omega)}.
\end{equation*}
%\end{eqnarray*}
\end{proof}
By $\mathcal{P}: L^2(S \times \Omega) \to H$ we denote the orthogonal projection.
 %\color{red}Can't: $P$ is the probability. \color{black} 
 For $f \in L^2(S \times \Omega)$ we have
$$ \mathcal{P}f(x,\omega) =\int_{\Omega \backslash \mathcal{O}} f(x,\cdot)dP + \chi_\mathcal{O}(\omega) \left(f(x,\omega)-\int_{\Omega \backslash \mathcal{O}} f(x,\cdot)dP\right). $$
\begin{theorem} \label{misha10}
Under Assumption \ref{kirill100}, let $\lambda<0$ and suppose that $(f^{\eps})$ be a bounded sequence in $L^2(S)$ such that 
$f^{\eps} \wtwoscale f\in L^2(S \times \Omega).$ For each $\varepsilon>0,$ consider the solution $u^\e$ to \eqref{misha3}.  Then  for a.e. $\omega \in \Omega$ one has $u^\e \wtwoscale u_0+u_1(\cdot,\omega)$, where $u_0 \in W_0^{1,2}(S),$ $u_1 \in L^2(S, W_0^{1,2} (\mathcal{O}))$ satisfy 
%The limit equations are given by 
\begin{align}
&\int_S A_1^{\rm hom} \nabla u_0\cdot\nabla \varphi_0 -\lambda \int_S\bigl(u_0+\langle u_1 \rangle_\Omega\bigr) \varphi_0 = \int_S \langle f \rangle_\Omega\,\varphi_0
%\\ \nonumber 
%& \hspace{+10ex} 
\quad\quad \forall \varphi_0 \in W_0^{1,2}(S), \label{mik1} \\[0.6em]
 &\int_\mathcal{O} \nabla_{\omega} u_1 (x, \cdot)\cdot\nabla_{\omega} \varphi_1 -\lambda  \int_\mathcal{O}\bigl(u_0(x)+u_1(x,\cdot)\bigr)  \varphi_1 = \int_\mathcal{O} f(x,\cdot)  \varphi_1
%\\ \nonumber 
%& \hspace{+10ex}  
\quad\quad\forall  \varphi_1 \in W_0^{1,2}(\mathcal{O}). \label{mik2}
%\end{aligned}
\end{align}

\end{theorem}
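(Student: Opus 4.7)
The argument follows the standard high-contrast two-scale pattern, now in the stochastic setting. First, testing \eqref{misha3} with $v=u^\e$ and using $\lambda<0$ yields the uniform bounds
\begin{equation*}
\|u^\e\|_{L^2(S)}+\|\chi_1^\e\nabla u^\e\|_{L^2(S)}+\|\e\chi_0^\e\nabla u^\e\|_{L^2(S)}\leq C.
\end{equation*}
Using Assumption \ref{kirill100}(4), I will extend $u^\e\vert_{S_1^\e(\omega)}$ across every interior inclusion $\e\mathcal{O}_\omega^k\subset S_0^\e(\omega)$ by harmonic extension on $\e B_\omega^k$ (which fits inside $S$ thanks to the buffer in the definition of $K_\omega^\e$), producing $\widetilde u^\e\in W_0^{1,2}(S)$ with $\|\nabla\widetilde u^\e\|_{L^2(S)}\leq C$, since the $B_\omega^k$ are disjoint. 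Proposition \ref{properties}(e) then yields, along a subsequence, $\widetilde u^\e\weakly u_0$ in $W_0^{1,2}(S)$ and $\nabla\widetilde u^\e\wtwoscale\nabla u_0+\nabla_\omega w$ for some $w\in L^2(S,\mathcal{W}^{1,2}(\Omega))$.

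Setting $\phi^\e:=u^\e-\widetilde u^\e$, the difference is supported in $S_0^\e(\omega)$ and has zero trace on the boundary of every interior inclusion. A scaled Poincar\'e inequality on each $\e\mathcal{O}_\omega^k$ (uniform thanks to the diameter bound in Assumption \ref{kirill100}(3)) gives
\begin{equation*}
\|\phi^\e\|_{L^2(S)}\leq C\e\|\nabla\phi^\e\|_{L^2(S_0^\e)}\leq C,
\end{equation*}
the factor $\e$ absorbing the $O(\e^{-1})$ blow-up of $\nabla u^\e$ on the soft phase, together with $\|\e\nabla\phi^\e\|_{L^2(S)}\leq C$. Proposition \ref{properties}(f) then delivers, on a further subsequence, $u_1\in L^2(S,W^{1,2}(\Omega))$ with $\phi^\e\wtwoscale u_1$ and $\e\nabla\phi^\e\wtwoscale\nabla_\omega u_1$. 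Because $\phi^\e(x)=0$ whenever $T_{x/\e}\omega\in\Omega\setminus\mathcal{O}$, testing against any $g\in C_0^\infty(S)\otimes\mathcal{C}^\infty(\Omega)$ that vanishes on $S\times\mathcal{O}$ produces a zero limit, hence $u_1(x,\omega)=0$ for a.e.\ $\omega\notin\mathcal{O}$, placing $u_1$ in $L^2(S,W_0^{1,2}(\mathcal{O}))$. Summing gives $u^\e\wtwoscale u_0+u_1(\cdot,\omega)$.

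For \eqref{mik1}, I will test with $\varphi_0\in C_0^\infty(S)$: the soft-phase gradient term is $\e\int_S(\e\chi_0^\e\nabla u^\e)\cdot\nabla\varphi_0=o(1)$, while the stiff one converges to $\int_S\int_{\Omega\setminus\mathcal{O}}A_1(\nabla u_0+\nabla_\omega w)\cdot\nabla\varphi_0$, the zero-order term to $-\lambda\int_S(u_0+\langle u_1\rangle_\Omega)\varphi_0$, and the source to $\int_S\langle f\rangle_\Omega\varphi_0$. To identify $w$, I will separately test with $\e\psi(x)\eta(T_{x/\e}\omega)$ for $\psi\in C_0^\infty(S)$, $\eta\in\mathcal{C}^\infty(\Omega)$; only the stiff gradient survives in the limit and yields
\begin{equation*}
\int_{\Omega\setminus\mathcal{O}}A_1\bigl(\nabla u_0(x)+\nabla_\omega w(x,\cdot)\bigr)\cdot\nabla_\omega\eta=0\qquad\text{for a.e. }x\in S,
\end{equation*}
so Lemma \ref{characteristic} gives $w(x,\cdot)=p_{\nabla u_0(x)}$ and hence $\int_{\Omega\setminus\mathcal{O}}A_1(\nabla u_0+\nabla_\omega w)\cdot\nabla u_0=A_1^{\rm hom}\nabla u_0\cdot\nabla u_0$, from which \eqref{mik1} follows. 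For \eqref{mik2}, the natural test function is $v^\e(x):=\zeta(x)\eta(T_{x/\e}\omega)$ with $\zeta\in C_0^\infty(S)$, $\eta\in\mathcal{C}_0^\infty(\mathcal{O})$; this is admissible since $v^\e$ is supported on $S_0^\e$ and therefore lies in $W_0^{1,2}(S)$. The stiff gradient term drops out, the soft one reduces to $\int_S\zeta\int_\mathcal{O}\nabla_\omega u_1\cdot\nabla_\omega\eta$ in the limit, and the remaining terms pass to $-\lambda\int_S\zeta\int_\mathcal{O}(u_0+u_1)\eta$ and $\int_S\zeta\int_\mathcal{O}f\eta$. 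Varying $\zeta$ and extending $\eta$ to arbitrary $\varphi_1\in W_0^{1,2}(\mathcal{O})$ via Lemma \ref{marko110} yields \eqref{mik2}. Uniqueness of the coupled limit system (straightforward for $\lambda<0$ by Lemma \ref{kirill101} and coercivity) upgrades the subsequential convergence to the full statement and shows that the limit is independent of the typical trajectory.

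\emph{Main obstacle.} The delicate step is the decomposition $u^\e=\widetilde u^\e+\phi^\e$ and the identification of the two-scale structure of each piece. In particular, certifying that $u_1(x,\cdot)$ vanishes on $\Omega\setminus\mathcal{O}$ requires the uniform Poincar\'e scaling inside every inclusion (so that Assumption \ref{kirill100}(3) is essential) together with the ergodic-theoretic fact that $\chi_1^\e(T_{\cdot/\e}\omega)$ behaves in the two-scale limit as the multiplier $\chi_{\Omega\setminus\mathcal{O}}(\omega)$ along any typical trajectory.
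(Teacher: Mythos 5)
Your proposal is correct and follows essentially the same route as the paper: a priori bounds from testing with $u^\e$, harmonic extension of $u^\e\vert_{S_1^\e}$ via Assumption \ref{kirill100}(4), two-scale compactness of $\widetilde u^\e$ and of the difference $u^\e-\widetilde u^\e$ via Proposition \ref{properties}(e)--(f), identification of the corrector through Lemma \ref{characteristic}, oscillating test functions supported in the soft phase for \eqref{mik2}, and uniqueness of the limit system to upgrade to the whole sequence. The only (immaterial) difference is that you test separately with $\varphi_0$ and with $\e\psi\eta$ where the paper uses the combined test function $\varphi_0+\e a\varphi(T_{\e^{-1}x}\omega)$, and you spell out the scaled Poincar\'e inequality and the vanishing of $u_1$ off $\mathcal{O}$, which the paper leaves implicit.
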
 
\begin{remark}
\label{marko1002} 
The system \eqref{mik1}--\eqref{mik2} is understood in the weak sense:
\begin{align} 
\label{zadnje} 
	\int_S A_1^{\rm hom} \nabla u_0\cdot\nabla \varphi_0&+\int_{S \times \Omega} \nabla_{\omega} u_1 \cdot\nabla_{\omega} \varphi_1  -\lambda \int_{S \times \Omega} (u_0+ u_1) (\varphi_0+\varphi_1)\\[0.3em]  \nonumber &  \hspace{+10ex}=\int_{S \times \Omega}   f  (\varphi_0+\varphi_1) \quad\quad \forall \varphi_0 \in W_0^{1,2}(S),\  \varphi_1 \in L^2\bigl(S,W_0^{1,2}(\mathcal{O})\bigr).
\end{align}
Noting that $W_0^{1,2}(\mathcal{O})$ is a closed subspace of $W^{1,2}(\Omega)$ and bearing in mind Lemma \ref{kirill101}, it follows by the Lax-Milgram lemma that for all $f\in L^2(S \times \Omega),$ $x \in S$ the problem (\ref{zadnje})
%\eqref{mik2} 
has a unique solution in $W_0^{1,2}(\mathcal{O}).$  
%by the Lax-Milgram lemma.
%Thus it has for each $f\in L^2(S \times \Omega)$ a unique solution by Lax-Milgram lemma and Lemma \ref{kirill101}. T
Its solutions for the right-hand sides $f \in L^2(S \times \Omega)$ and $\mathcal{P}f$ coincide. 
%\color{red} What is $P$ here?\color{black}
The solution of the equation \eqref{mik2} has the form 
\begin{equation} 
u_1(x, \omega)=u(x,\omega)+u_0(x)w( \omega), 
\label{u1_rep}
\end{equation}
where $u \in L^2(S, W_0^{1,2} (\mathcal{O}))$ is the solution of the equation \eqref{mik2} obtained by setting $u_0=0$ and $w \in W_0^{1,2}(\mathcal{O})$ is the solution of the equation \eqref{mik2} obtained by setting $u_0=1$ and $f=0.$ Substituting \eqref{u1_rep} into \eqref{mik1}, we obtain an equation on $u_0$. 
\end{remark} 
\begin{proof} 
The proof follows a standard argument. First, by \eqref{misha3}, there exists a constant $C>0$ such that
\begin{equation}
\|\nabla u^{\e}\|_{L^2(S_1^{\e})}+\e \|\nabla u^{\e}\|_{L^2 (S_0^{\e})}+\|u^{\e}\|_{L^2(S)}  \leq C. 
\end{equation}
For each $\e>0$ we extend $u^{\e}|_{S_1^{\e}},$ 
%(we denote this extension by $\widetilde u^{\e}$) 
using Assumption \ref{kirill100}, to a sequence $\widetilde u^{\e},$ which is bounded in $W^{1,2}(S).$ From Proposition \ref{properties} we infer that there exist $u_0 \in W^{1,2} (S)$, $u_1 \in L^2(S,  W^{1,2}(\Omega))$, $u_2 \in L^2( S, \mathcal{W}^{1,2}(\Omega))$ such that on a subsequence we have 
%\begin{eqnarray*}
\begin{equation}
\widetilde u^{\e} \to  u_0 \textrm{ strongly in } L^2(S),
%&&
\quad\quad
% \\
\nabla \widetilde u^{\e} \wtwoscale  \nabla u_0+ \nabla_{\omega} u_2,\qquad
%\\[0.5em]
u^{\e}-\widetilde u^{\e}  \wtwoscale u_1, 
%&&
\quad\quad
%\\
\e \nabla (u^{\e}-\widetilde u^{\e})  \wtwoscale  \nabla_{\omega} u_1. 
\end{equation}
%\end{eqnarray*}
To obtain the equation \eqref{mik1}, we take test functions of the form $\varphi_0(x)+\e a(x) \varphi(T_{\varepsilon^{-1}x}\omega)$ in \eqref{misha3},  where $\varphi_0 \in W_0^{1,2} (S)$,  $\varphi \in  W^{1,2}(\Omega)$ and $a \in C_0^1(S)$. In the limit as $\varepsilon\to0$ we obtain
%\begin{eqnarray} 
%\nonumber
\begin{equation}
%& & 
\int_S  \int_{\Omega \backslash \mathcal{O}} A_1 (\nabla u_0+\nabla_{\omega} u_2) \cdot\bigl(\nabla \varphi_0+ a\nabla_{\omega} \varphi_1\bigr) dP dx -\lambda  \int_S\bigl(u_0+\langle u_1 \rangle_\Omega\bigr) \varphi_0 
%\\ & & \hspace{+10ex}
 =\int_S  \langle f \rangle_\Omega\,\varphi_0.\label{tambaca10} 
\end{equation}
%\end{eqnarray}
Setting $\varphi_0=0,$ it follows that 
$$
 \int_{\Omega \backslash \mathcal{O}} A_1 (\nabla u_0(x)+\nabla_{\omega} u_2) \cdot  \nabla_{\omega} \varphi_1 dP=0 \qquad \textrm{a.e.}\  x \in S,
$$
and the characterisation \eqref{tambaca3} yields $\nabla_{\omega} u_2(x,\omega)=p_{\nabla u_0(x)}(\omega)$ a.e. $x \in S,$ $\omega\in\Omega.$ Taking arbitrary $\varphi_0 \in W_0^{1,2}(S)$ in \eqref{tambaca10}, we obtain the ``macroscopic'' part \eqref{mik1} of the limit ptoblem. The ``microscopic'' part \eqref{mik2} is obtained by taking test functions of the form $a(x) \varphi(T_{\e^{-1}x} \omega )$ in \eqref{misha3}, where $a \in C_0^1(S)$, $\varphi \in W_0^{1,2} (\mathcal{O})$. The convergence of the whole sequence can be deduced by uniqueness of the solution of the system \eqref{mik1}--\eqref{mik2}. 
\end{proof} 
\begin{remark}
	The following observation was made in \cite{Zhikov2000} in the periodic setting. The formulation \eqref{zadnje} can be interpreted from the operator-theoretic point of view. Namely, we define a positive definite operator $\mathcal{A}$ on a dense linear subset of $V=W_0^{1,2}(S)+L^2(S,W_0^{1,2}(\mathcal{O}))$  (which is a dense subset of $H$ under the condition of Lemma \ref{lemmagloria}), as follows. 
	One takes $\lambda<0$ and defines the domain ${\rm dom}(\mathcal{A})$ as the set of solutions of \eqref{zadnje} obtained for varying $f \in H$. 
	%It is shown that ${\rm dom}(\mathcal{A})$ is dense in $H$. 
	To see that  ${\rm dom}(\mathcal{A})$ is dense in $H,$ take the solutions $u_0+u_1, w_0+w_1\in V$ for $f, g \in H,$ respectively.  
	%and the other $w_0+w_1 \in V$ for $g \in H$. 
	Setting $\varphi_0=u_0$, $\varphi_1=u_1$ as the test function in the equation for $(w_0, w_1)$ and $\varphi_0=w_0$, $\varphi_1=w_1$ as the test function in the equation for $(u_0,u_1)$ yields 
	$$ \int_{S \times \Omega} f(w_0+w_1)=\int_{S \times \Omega} g(u_0+u_1). $$
	Thus, if $g \perp u_0+u_1$ then necessarily $w_0+w_1=0,$ which implies $g=0$. The operator $\mathcal{A}: {\rm dom}(\mathcal{A})\to H$ defined by 
	$\mathcal{A}(u_0+u_1)=f+\lambda(u_0+u_1)$
	is a bounded linear mapping between Hilbert spaces, where the norm on ${\rm dom}(\mathcal{A})$ is given by
	$$ \|u_0 + u_1\|^2_{{\rm dom}(\mathcal{A})} =\bigl\|\mathcal{A}(u_0+u_1)\bigr\|^2_H+\|u_0+u_1\|^2_V. $$
	%%We also write
	%%$$ \mathcal{A} (u_0+u_1)=-\div A_1^{\rm hom} \nabla u_0-\Delta_{\omega} u_1.  $$

\end{remark}
We shall need the following statement for the convergence of spectra of the operators associated with (\ref{misha3}). 
It is proved in the same way as the previous theorem, and we omit the proof.
\begin{proposition} \label{kirill200} Under Assumption \ref{kirill100}, let $\lambda<0$ and suppose that $(f^{\eps})$ is a bounded sequence in $L^2(S_0^{\e})$ such that $\chi_0^{\e} f^{\eps} \wtwoscale f\in L^2(S \times \Omega).$ For each $\varepsilon>0,$ let $z^\e \in W_0^{1,2} (S_0^{\e})$    be the solution of 
\begin{equation} \label{igor1000}
\varepsilon^2 \int_{S_0^{\e}} \nabla z^{\e} \cdot \nabla v - \lambda \int_{S_0^{\e}} z^{\e}v=\int_{S_0^{\e}} f^{\e} v \quad\quad \forall v \in W_0^{1,2} (S_0^{\e}).
\end{equation} 
Then  for a.e. $\omega \in \Omega$ one has $z^\e \wtwoscale z(\cdot,\omega)$, where   $z \in L^2(S, W_0^{1,2}(\mathcal{O}))$ is the solution of the problem
 \begin{equation} \label{mira100} 
  \int_\mathcal{O} \nabla_{\omega} z(x,\cdot) \cdot \nabla_{\omega} v-\lambda \int_\mathcal{O} z(x,\cdot) v= \int_\mathcal{O} f(x,\cdot)  v \quad\quad \forall v \in W_0^{1,2}(\mathcal{O}).  
  \end{equation}
\end{proposition}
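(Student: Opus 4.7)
My plan is to follow the proof of Theorem~\ref{misha10}, omitting the steps concerning the stiff phase since \eqref{igor1000} has no macroscopic gradient term. I proceed in three main steps.

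\textbf{A priori bounds and extraction.} Testing \eqref{igor1000} with $v=z^{\varepsilon}$, the negativity of $\lambda$ together with Cauchy--Schwarz yields
\[
\varepsilon\norm{\nabla z^{\varepsilon}}_{L^{2}(S_{0}^{\varepsilon})} + \norm{z^{\varepsilon}}_{L^{2}(S_{0}^{\varepsilon})} \leq C\norm{f^{\varepsilon}}_{L^{2}(S_{0}^{\varepsilon})}.
\]
Extending $z^{\varepsilon}$ by zero to $S$, the bounds persist, and Proposition~\ref{properties}(f) supplies $z\in L^{2}(S,W^{1,2}(\Omega))$ with $z^{\varepsilon}\wtwoscale z$ and $\varepsilon\nabla z^{\varepsilon}\wtwoscale \nabla_{\omega}z(\cdot,\omega)$ along a subsequence.

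\textbf{Confinement of the limit to $\mathcal{O}$.} To see that $z(x,\cdot)\in W_{0}^{1,2}(\mathcal{O})$ for a.e.\ $x\in S$, I observe that the set $\{x\in S:T_{\varepsilon^{-1}x}\omega\in\Omega\setminus\mathcal{O}\}=S\setminus\bigcup_{k}\varepsilon\mathcal{O}^{k}_{\omega}$ is contained in $S_{1}^{\varepsilon},$ where the extended $z^{\varepsilon}$ vanishes. Consequently, for every $g\in C_{0}^{\infty}(S)\otimes\mathcal{C}^{\infty}(\Omega)$,
\[
0 = \int_{S} z^{\varepsilon}(x)\,\chi_{\Omega\setminus\mathcal{O}}(T_{\varepsilon^{-1}x}\omega)\,g(x,T_{\varepsilon^{-1}x}\omega)\,dx \longrightarrow \int_{S\times\Omega} z(x,\omega)\chi_{\Omega\setminus\mathcal{O}}(\omega) g(x,\omega)\,dx\,dP,
\]
whence $z(x,\omega)=0$ on $\Omega\setminus\mathcal{O}$.

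\textbf{Passage to the limit.} I substitute the oscillating test functions $v^{\varepsilon}(x)=a(x)\varphi(T_{\varepsilon^{-1}x}\omega)$, with $a\in C_{0}^{\infty}(S)$ and $\varphi\in\mathcal{C}_{0}^{\infty}(\mathcal{O})$, into \eqref{igor1000}. For $\varepsilon$ small enough (depending on $\operatorname{supp}a$) all soft inclusions meeting $\operatorname{supp}a$ belong to $K_{\omega}^{\varepsilon}$, and since the realization of $\varphi$ is smooth and vanishes on $\mathbf{R}^{n}\setminus\mathcal{O}_{\omega}$ it vanishes in a neighbourhood of each $\partial\mathcal{O}^{k}_{\omega}$; hence $v^{\varepsilon}\in W_{0}^{1,2}(S_{0}^{\varepsilon})$. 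Writing $\nabla v^{\varepsilon}=\varphi(T_{\varepsilon^{-1}x}\omega)\nabla a+\varepsilon^{-1}a(\nabla_{\omega}\varphi)(T_{\varepsilon^{-1}x}\omega)$, the $\varepsilon^{2}$-scaled gradient term contributes $O(\varepsilon)$ from the first piece plus $\varepsilon\int\nabla z^{\varepsilon}\cdot a(\nabla_{\omega}\varphi)(T_{\varepsilon^{-1}x}\omega)$, which by Proposition~\ref{properties}(f) converges to $\int_{S\times\Omega} a\,\nabla_{\omega}z\cdot\nabla_{\omega}\varphi$. Combined with the two-scale limits of $z^{\varepsilon}$ and $\chi_{0}^{\varepsilon}f^{\varepsilon}$, the limit identity reads
\[
\int_{S\times\Omega} a\nabla_{\omega}z\cdot\nabla_{\omega}\varphi - \lambda\int_{S\times\Omega} az\varphi = \int_{S\times\Omega} af\varphi.
\]
Varying $a$ gives this pointwise in $x$, and Lemma~\ref{marko110} extends admissible $\varphi$ by density from $\mathcal{C}_{0}^{\infty}(\mathcal{O})$ to $W_{0}^{1,2}(\mathcal{O})$, yielding \eqref{mira100}. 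Uniqueness of its solution by Lax--Milgram in $L^{2}(S,W_{0}^{1,2}(\mathcal{O}))$ then promotes subsequential convergence to convergence of the whole family.

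\textbf{Main obstacle.} The delicate points are the admissibility of $v^{\varepsilon}$ as an element of $W_{0}^{1,2}(S_{0}^{\varepsilon})$ and the density extension to $W_{0}^{1,2}(\mathcal{O})$: both rely on the geometric regularity of Assumption~\ref{kirill100} and on the hypotheses of Lemma~\ref{marko110}. Once these are in place, all other steps are direct analogues of those in Theorem~\ref{misha10}.
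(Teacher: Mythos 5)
Your proposal is correct and is essentially the proof the paper has in mind: the paper omits the argument for Proposition~\ref{kirill200}, saying only that it is proved in the same way as Theorem~\ref{misha10}, and your three steps (energy bound plus Proposition~\ref{properties}(f), localisation of the limit to $\mathcal{O}$ via the vanishing of $z^{\varepsilon}$ off $\varepsilon\mathcal{O}_{\omega}$, and oscillating test functions $a(x)\varphi(T_{\varepsilon^{-1}x}\omega)$ followed by Lax--Milgram uniqueness) are exactly that adaptation. Two small remarks. First, in the proof of Theorem~\ref{misha10} the paper tests directly with $\varphi\in W_0^{1,2}(\mathcal{O})$ rather than with $\varphi\in\mathcal{C}_0^{\infty}(\mathcal{O})$; the realisation of such a $\varphi$ lies in $W_0^{1,2}(\varepsilon\mathcal{O}^k_{\omega})$ because the components have Lipschitz boundary (a $W^{1,2}_{\rm loc}$ function vanishing a.e.\ outside a Lipschitz open set restricts to an element of $W_0^{1,2}$ of that set), so the detour through Lemma~\ref{marko110} --- which imports hypotheses \eqref{nada1}--\eqref{nada2} not present in the statement of the Proposition --- can be avoided. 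Second, a general element of $\mathcal{C}_0^{\infty}(\mathcal{O})$ need not vanish in a \emph{neighbourhood} of each $\partial\mathcal{O}^k_{\omega}$ (only on the complement of $\mathcal{O}_{\omega}$); either invoke the Lipschitz-boundary characterisation of $W_0^{1,2}$ as above, or restrict to the mollified dense subclass ${\mathcal R}_{\delta}[\chi_{B^m}f]$, whose realisations do vanish at distance of order $1/m$ from the boundary. Neither point affects the validity of the argument.
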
  
\begin{remark} \label{mira1000}
Theorem \ref{misha10} and Proposition \ref{kirill200}	are still valid if, instead of a fixed $\lambda<0,$ we take a sequence $(\lambda^{\e}) \subseteq \mathbf{R}$ such that $\lambda_\e \to \lambda \in \mathbf{R}$ and $\liminf_{\e \to 0} \dist(\lambda^{\e}, \Sp \mathcal{A}^{\e})>0$, for Theorem  \ref{misha10}, i.e. $\liminf_{\e \to 0} \dist(\lambda^{\e}, \Sp \mathcal{T}^{\e})>0$ for Proposition \ref{kirill200}, where 
$\mathcal{T}^{\e}:=-\e^2 \Delta_{S_0^{\e}}.$
Notice that $\Sp \mathcal{T}^{\e}$ splits into the spectra of scaled Laplace operators on each hole contained in $S_0^{\e}:$ 
%{\color{red} \sout{After rescaling we immediately conclude that}}
\begin{equation} \label{kirill1000} 
\Sp  \mathcal{T}^{\e}= \bigcup_{n \in K^\varepsilon_{\omega} } \Sp\bigl(-\Delta_{\mathcal{O}^n_{\omega}}\bigr), 
\end{equation} 
where $K^\varepsilon_{\omega}$ is defined in \eqref{K_def}.
%{\color{red}\sout{the Assumption \ref{kirill100}.}} 
Notice that there exists $C>0$  
%for some $C>0$ 
%independent of $\lambda,$ 
such that for all $\lambda \in \mathbf{R}$ the solution $u^{\e}$ of \eqref{misha3} satisfies
\begin{equation} \label{igor100}
\|\nabla u^{\e}\|_{L^2(S_1^{\e})}+\e \|\nabla u^{\e}\|_{L^2 (S_0^{\e})}+\|u^{\e}\|_{L^2(S)}  \leq C\left(
%\frac{1}{
\dist(\lambda, \Sp \mathcal{A}^{\e})^{-1}+\lambda+1\right)\|f^{\e}\|_{L^2(S)},
\end{equation}
and similarly the solution of \eqref{igor1000} satisfies 
%{\color{red}\sout{for some $C>0$ independent of $\lambda$}}
\begin{equation} \label{igor10000}
\e \|\nabla z^{\e}\|_{L^2 (S_0^{\e})}+\|z^{\e}\|_{L^2(S_0^{\e})}  \leq C\left(
%\frac{1}{
\dist(\lambda, \Sp \mathcal{T}^{\e})^{-1}+\lambda+1\right)\|f^{\e}\|_{L^2(S)}.
\end{equation}
\end{remark}

In what follows we denote by $-\Delta_\omega$ the operator generated by the bilinear form 
\begin{equation}
\int_\mathcal{O} \nabla_{\omega} u\cdot\nabla_{\omega}v,
% -\lambda  \int_\mathcal{O}\bigl(u_0(x)+u_1(x,\cdot)\bigr)  \varphi_1 = \int_\mathcal{O} f(x,\cdot)  \varphi_1
%\\ \nonumber 
%& \hspace{+10ex}  
\quad\quad u, v\in W_0^{1,2}(\mathcal{O}). 
%\label{mik2}
\label{delta_omega_form}
\end{equation}
As a consequence of Proposition \ref{kirill200} and Remark \ref{mira1000}, we have the following statement. 
\begin{corollary}  
 \label{marko101} 
%Denote by $-\Delta_\omega$ the operator on  
Assume that Assumption \ref{kirill100} holds. Then 
%we have for a.e. $\omega \in \Omega$:
$$ 
\Sp(-\Delta_{\omega}) \subseteq \overline{\bigcup_{n \in \mathbf{N} } \Sp\bigl(-\Delta_{\mathcal{O}^n_{\omega}}\bigr)}\qquad {\rm a.e.}\ \omega \in \Omega.
$$
\end{corollary}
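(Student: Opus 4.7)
The approach is to argue by contrapositive: assume $\lambda\in\mathbf{R}$ lies at positive distance $d$ from $\bigcup_{n\in\mathbf{N}}\Sp(-\Delta_{\mathcal{O}^n_{\omega}})$, and show that $\lambda\in\rho(-\Delta_{\omega})$. By the identity \eqref{kirill1000}, $\Sp(\mathcal{T}^{\varepsilon})\subseteq\bigcup_{n}\Sp(-\Delta_{\mathcal{O}^n_{\omega}})$, hence $\dist(\lambda,\Sp(\mathcal{T}^{\varepsilon}))\geq d$ uniformly in $\varepsilon>0$. This uniform spectral gap is exactly what is needed to invoke Remark \ref{mira1000} together with the two-scale resolvent convergence of Proposition \ref{kirill200}.

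Since $-\Delta_{\omega}$ is self-adjoint on $L^2(\mathcal{O})$, being the Friedrichs extension of the closed, symmetric, non-negative form \eqref{delta_omega_form}, it suffices to prove surjectivity of $-\Delta_{\omega}-\lambda$ onto $L^2(\mathcal{O})$: combining surjectivity with self-adjointness yields $\ker(-\Delta_{\omega}-\lambda)=\ran(-\Delta_{\omega}-\lambda)^{\perp}=\{0\}$, hence bijectivity, and the closed graph theorem delivers a bounded inverse.

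To establish surjectivity, fix an arbitrary $F\in L^2(\mathcal{O})$ (extended by zero to $\Omega$) together with any $g\in C_0^{\infty}(S)\setminus\{0\}$, and set
\[
f^{\varepsilon}(x):=g(x)F(T_{x/\varepsilon}\omega),\qquad x\in S.
\]
Since $F$ vanishes outside $\mathcal{O}$ and $g$ has compact support in $S$, for $\varepsilon$ small enough $f^{\varepsilon}$ is supported in $S_0^{\varepsilon}(\omega)$. The Ergodic Theorem gives $\|f^{\varepsilon}\|_{L^2(S)}^2\to\|g\|_{L^2(S)}^2\|F\|_{L^2(\Omega)}^2$, while Definition \ref{definicija1} yields directly $\chi_0^{\varepsilon}f^{\varepsilon}\wtwoscale g(x)F(\omega)$. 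Invoking Remark \ref{mira1000} with Proposition \ref{kirill200} then produces solutions $z^{\varepsilon}\in W_0^{1,2}(S_0^{\varepsilon})$ of \eqref{igor1000}, uniformly bounded in $L^2(S_0^{\varepsilon})$ thanks to the estimate \eqref{igor10000} and the spectral gap. Proposition \ref{kirill200} identifies $z^{\varepsilon}\wtwoscale z(x,\omega)$, where $z$ solves \eqref{mira100} with right-hand side $g(x)F(\omega)$. Linearity of \eqref{mira100} and uniqueness of its Lax--Milgram solution on $W_0^{1,2}(\mathcal{O})$ (see Remark \ref{marko1002}) force $z(x,\omega)=g(x)u(\omega)$ with $u\in W_0^{1,2}(\mathcal{O})$ satisfying $(-\Delta_{\omega}-\lambda)u=F$ weakly, which is the required surjectivity.

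The principal obstacle is to justify the uniform $L^2$-bound on $(z^{\varepsilon})$ in the regime $\lambda\geq 0$, where Proposition \ref{kirill200} as originally stated (with $\lambda<0$) does not directly apply; this is resolved by exploiting \eqref{kirill1000}, which converts the gap assumption on $\lambda$ relative to $\bigcup_{n}\Sp(-\Delta_{\mathcal{O}^n_{\omega}})$ into a uniform gap assumption on $\Sp(\mathcal{T}^{\varepsilon})$, so that the extension in Remark \ref{mira1000} delivers \eqref{igor10000} with a constant independent of $\varepsilon$. The separation $z(x,\omega)=g(x)u(\omega)$ is a minor point, handled by substituting the ansatz into \eqref{mira100} and invoking uniqueness of the coercive Lax--Milgram problem on $W_0^{1,2}(\mathcal{O})$.
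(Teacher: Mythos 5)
Your proposal follows essentially the same route as the paper: take $\lambda$ at positive distance from $\overline{\bigcup_{n}\Sp(-\Delta_{\mathcal{O}^n_{\omega}})}$, use \eqref{kirill1000} to convert this into the uniform resolvent bound \eqref{igor10000} for $\mathcal{T}^{\varepsilon}$, and pass to the two-scale limit via Proposition \ref{kirill200} and Remark \ref{mira1000} to obtain a bounded solution of the resolvent equation \eqref{mira100}, whence $\lambda\notin\Sp(-\Delta_{\omega})$. The one blemish is your appeal to ``uniqueness of the coercive Lax--Milgram problem'' to separate variables in $z(x,\omega)=g(x)u(\omega)$: the form is not coercive for $\lambda\geq 0$ and uniqueness there is essentially what is at stake, but this is harmless since it suffices to evaluate $z(x,\cdot)$ at a single $x_0$ with $g(x_0)\neq 0$ and divide by $g(x_0)$ to obtain the required preimage $u$.
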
 
\begin{proof}
	Take $ \lambda \notin \overline{\bigcup_{n \in \mathbf{N} } \Sp (-\Delta_{\mathcal{O}^n_{\omega}})}$ and $f \in L^2(\mathcal{O}),$ and define $f^{\e}(x,\omega):=\chi_0^{\e} f(T_{\e^{-1}x}\omega)\stwoscale f$. As a consequence of Remark \ref{mira1000}, the sequence of solutions of \eqref{igor1000} converges weakly two-scale to the solution of \eqref{mira100}, which is a resolvent equation. Moreover, \eqref{kirill1000} and \eqref{igor10000}  imply the existence of $C>0$ such that
	$$\|u_1\|_{L^2(S \times \Omega)} \leq \liminf_{\e \to 0} \|z^{\e}\|_{L^2(S_0^{\e})}\leq C\|f\|_{L^2(\mathcal{O})}, $$
	%From this we conclude that
and therefore $\lambda \notin  \Sp(-\Delta_{\omega})$. 
\end{proof} 

%\subsection{Opposite inclusion} {\color{red} Separate section? Title?}
\section{Spectral completeness for inclusions}
\label{inclusions_completeness}

In this part we prove that 
%for a.e. $\omega \in \Omega$ we have 
$$
\overline{\bigcup_{n \in \mathbf{N} } \Sp\bigl(-\Delta_{\mathcal{O}^n_{\omega}}\bigr)} \subseteq   \Sp(-\Delta_{\omega})\qquad {\rm a.e.}\ \omega \in \Omega. 
$$
We shall use the assumptions of Lemma \ref{marko110}
%{\color{red}assumptions of Lemma 3.3 (Right???)} ({\color{red} label it }) 
as well as assume that for each $\lambda_0>0$ there exists $M_{\lambda_0}>0$ such that for a.e. $\omega \in \Omega$ the following implication holds:
\begin{equation} 
 %& & \nonumber
-\Delta u=\lambda u, \quad u\in W_0^{1,2} (\mathcal{O}^k_{\omega}), \textrm{ for some } k \in \mathbf{N}, \lambda \leq \lambda_0 \implies \|u\|_{L^{\infty}(\mathcal{O}^k_{\omega})} \leq M_{\lambda_0}\|u\|_{L^2(\mathcal{O}^k_{\omega})}.
%\\ & &
\label{implication}
\end{equation} 
Notice that, 
%{\color{red}the claim (Which?)}
by regularity theory, the above condition is satisfied for a fixed $\omega \in \Omega$ and $k \in \mathbf{N},$  
%for the appropriate regularity of 
whenever the boundary $\partial \mathcal{O}^k_{\omega}$ is sufficiently regular. 
%with repsect to the dimension $n$.
 In what follows we use a sequence $\{\widetilde \varphi^k\}_{k \in \mathbf{N}} \subset C_0^{\infty} ([0,c_2+1]^n) $ that is dense in $W_0^{1,2}  ([0,c_2+1]^n),$ where the constant $c_2$ is defined in Assumption \ref{kirill100}. 

We will now define a sequence of random variables that is invariant for all $\omega \in \mathcal{O}$ whose realisation  is such that the shape that contains the origin is the same.  For $q=(q_1,\dots,q_n) \in \mathbf{Q}^n$ define the set 
$$ \mathcal{O}_q:=\bigl\{\omega \in \mathcal{O}:\textrm{there exists } k_0 \in \mathbf{N} \textrm{ such that } q\in\mathcal{O}^{k_0}_{\omega}\bigr\}. $$
\begin{lemma} \label{lucia1} 
For every $q \in \mathbf{Q}^n$, $\mathcal{O}_{q} \subset \Omega$ is measurable. 
\end{lemma}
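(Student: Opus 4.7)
The plan is to reduce the event $\mathcal{O}_q$ to a simple Boolean combination involving $\mathcal{O}$ and the shift $T_q$, and then invoke the measurability built into the definition of a dynamical system. The key observation is that, for every $\omega \in \Lambda$ (the full-measure set on which Assumption \ref{kirill100} holds), the decomposition $\mathcal{O}_\omega = \bigcup_k \mathcal{O}^k_\omega$ means that the existence of some $k_0$ with $q \in \mathcal{O}^{k_0}_\omega$ is equivalent to $q \in \mathcal{O}_\omega$. Since by definition $\mathcal{O}_\omega = \{x \in \mathbf{R}^n : T_x\omega \in \mathcal{O}\}$, this is in turn equivalent to $T_q \omega \in \mathcal{O}$.

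First I would use this observation to write
$$
\mathcal{O}_q \cap \Lambda \;=\; \Lambda \cap \mathcal{O} \cap T_q^{-1}(\mathcal{O}).
$$
Then, invoking the joint measurability of $(x,\omega) \mapsto T_x\omega$ from Definition \ref{defgroup}(3), the map $\omega \mapsto T_q\omega$ is $\mathcal{F}$-measurable for the fixed $q \in \mathbf{Q}^n$, so $T_q^{-1}(\mathcal{O}) \in \mathcal{F}$. Together with $\mathcal{O}, \Lambda \in \mathcal{F}$, this shows $\mathcal{O}_q \cap \Lambda \in \mathcal{F}$.

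Finally, since $\Omega \setminus \Lambda$ is a $P$-null set and the probability space is assumed to be complete at the start of Section \ref{stochastic}, $\mathcal{O}_q$ differs from $\mathcal{O}_q \cap \Lambda$ only by a subset of a null set and is therefore itself in $\mathcal{F}$.

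I do not anticipate any substantive obstacle; the argument is essentially bookkeeping. The only point demanding a little care is to use the exact form of Assumption \ref{kirill100}(1)--(2), namely the equality $\mathcal{O}_\omega = \bigcup_k \mathcal{O}^k_\omega$ rather than the weaker closure identity $\overline{\mathcal{O}_\omega} = \bigcup_k \overline{\mathcal{O}^k_\omega}$, so that lying in some component is literally the same as lying in $\mathcal{O}_\omega$.
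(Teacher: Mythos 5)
There is a genuine gap, and it lies in the reading of the definition of $\mathcal{O}_q$ rather than in the measure-theoretic bookkeeping. You interpret ``there exists $k_0$ such that $q\in\mathcal{O}^{k_0}_\omega$'' literally, which collapses $\mathcal{O}_q$ to $\mathcal{O}\cap T_q^{-1}(\mathcal{O})$ and makes the lemma trivial. But the intended meaning (made explicit by the sentence preceding the definition, ``\dots the shape that contains the origin\dots'', by the consistent later use of $k_0$ for the index with $0\in\mathcal{O}^{k_0}_\omega$, and by the paper's own proof) is that $q$ lies in the \emph{same connected component} of $\mathcal{O}_\omega$ as the origin. This reading is forced by what follows: the random variables $D_i(\omega)=\inf\{q_i:\omega\in\mathcal{O}_q\}$ are claimed to satisfy $-c_2\le D_i\le 0$, which only holds if the infimum ranges over the single component containing $0$ (whose diameter is at most $c_2$); under your reading, stationarity and $P(\mathcal{O})>0$ would give $D_i=-\infty$ a.e.\ on $\mathcal{O}$, and the subsequent construction of $D$, $U_\omega$ and $\varphi^{k,m}$ would break down.

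With the correct reading, the condition ``$q$ is in the component of $\mathcal{O}_\omega$ containing $0$'' is not a pointwise condition on $T_q\omega$, so one cannot reduce it to $T_q^{-1}(\mathcal{O})$. The paper's proof instead uses that the components are open and connected, hence polygonally connected: $\omega\in\mathcal{O}_q$ iff $0$ and $q$ can be joined by a polygonal line consisting of finitely many segments with rational endpoints along which (all rational points $l$ satisfy) $T_l\omega\in\mathcal{O}$. For each fixed such polygonal line the corresponding event is a countable intersection of the measurable sets $T_l^{-1}(\mathcal{O})$, $l\in\mathbf{Q}^n$, and there are only countably many such lines, so $\mathcal{O}_q$ is a countable union of measurable sets. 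This connectivity encoding is the actual content of the lemma and is what your argument is missing; the completeness/null-set step at the end of your proposal is fine and can be retained.
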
 
\begin{proof}  
Notice that 
\begin{eqnarray}\label{connection1}
\omega \in  \mathcal{O}_q &\iff&\textrm{There exists a polygonal line that connects $0$ and $q$ and } \\ \nonumber & & \textrm{consists of a finite set of straight segments with rational endpoints } \\  \nonumber & & \textrm{such that for all 
	%{\color{red}\sout{every rational point} 
	$l \in \mathbf{Q}^n$ on this line one has }
%\\  \nonumber & & 
\textrm{$T_l \omega \in \mathcal{O}$.} 
\end{eqnarray} 
Since for each fixed $q \in \mathbf{Q}^n$ there is a countable set of lines satisfying the property \eqref{connection1}, the set $\mathcal{O}_q$ is measurable. 	
\end{proof} 	
We define the random variables
% $D_i$  for $i=1, \dots,n$
$$ 
D_i(\omega):=\inf\{q_i: \ \omega \in \mathcal{O}_q\},\qquad \omega\in\Omega,\qquad\qquad i=1, \dots,n. 
$$
Notice that  $D_i=+\infty$ whenever $\omega \notin \mathcal{O},$ and also, due to the assumption, $-c_2 \leq D_i\leq 0$ for a.e. $\omega \in \mathcal{O}.$ We denote by $D$ the random vector 
\begin{equation}
D:=-(D_1, \dots, D_n)+\biggl(\frac{1}{2},\dots,\frac{1}{2}\biggr).
\label{D_def}
\end{equation}

For a.e. $\omega \in \mathcal{O}$, $m \in \mathbf{N}$ we define the set 
%\begin{eqnarray*} 
\begin{equation*}
	D^m_{\omega}:=
	%&:=&
	\biggl\{ x \in \mathbf{R}^n: \textrm{ there exists } k_0 \in \mathbf{N} \textrm{ such that }  x \in  \mathcal{O}^{k_0}_{\omega} \textrm{ and }
	%\\ & &\hspace{+1ex} 
	\dist\bigl(x, \partial \mathcal{O}^{k_0}_{\omega}\bigr)> \frac{1}{m}\biggr\}. 
%\end{eqnarray*}
\end{equation*}
Furthermore, we introduce the set  $U_{\omega}\subset [0,c_2+1]^n$, which is a translation of the set  $\mathcal{O}^{k_0}_{\omega}$ containing the origin:
%\begin{eqnarray*}
\begin{equation*}
U_{\omega}:=
%&:=&
\big\{x \in [0,c_2+1]^n:
%\\ & & \hspace{+1ex}
x -D \in \mathcal{O}^{k_0}_{\omega}  \textrm{ for } k_0 \in \mathbf{N} \textrm{ such that }  0\in  \mathcal{O}^{k_0}_{\omega}\big\}. 
%\end{eqnarray*}
\end{equation*}
Finally, we define a characteristic function of the translation of the set $D^m_{\omega}$ and  a measurable function of $\o$ taking values in $W_0^{1,2}([0,c_2+1]^n)$
%\begin{eqnarray*} 

	\begin{equation}
	\chi^{m}(x,\omega ):=\chi_{D^m_{\omega}}(x -D),\qquad
	% \\
	\varphi^{k,m}(x,\omega):=\rho_{1/2m} *\left(\chi^m(x,\omega )\widetilde \varphi^k(x)\right).
	\end{equation}

%\end{eqnarray*}
Notice that for a.e. $\omega \in \mathcal{O}$ one has $\supp \varphi^{k,m}(\cdot,\omega) \subset U_{\omega}$.

\begin{lemma} \label{ante10}
	For every $k,m \in \mathbf{N}$, the function $\omega \mapsto \varphi^{k,m}(\cdot,\omega)$ taking values in $W_0^{1,2}([0,c_2+1]^n)$ is measurable with respect to the Borel $\sigma$-algebra on $W_0^{1,2}([0,c_2+1]^n)$. 
\end{lemma}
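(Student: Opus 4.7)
The plan is to decompose $\varphi^{k,m}(\cdot,\omega) = \rho_{1/2m} * \bigl(\chi^m(\cdot,\omega)\widetilde\varphi^k\bigr)$ and establish measurability in stages: first verify joint measurability of the integrand in $(x,\omega)$; then lift this to Bochner measurability as a map into $L^2([0,c_2+1]^n)$ via Pettis' theorem (applicable since $L^2([0,c_2+1]^n)$ is separable); and finally compose with the bounded linear operator $f\mapsto \rho_{1/2m}*f$ from $L^2$ into $W^{1,2}$. Continuity of the last step transports measurability, and a support consideration identifies the range with $W_0^{1,2}$ rather than just $W^{1,2}$.

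The main technical point is joint measurability of $\chi^m(x,\omega)=\chi_{D^m_\omega}(x-D(\omega))$. First, each coordinate $D_i(\omega)=\inf\{q_i:\omega\in \mathcal{O}_q\}$ is a countable infimum over rationals of constants on the measurable sets $\mathcal{O}_q$ (Lemma \ref{lucia1}), so $D:\Omega\to\mathbf{R}^n$ is a measurable random vector. Next, using that $\mathcal{O}_\omega$ is open and that the components $\mathcal{O}^k_\omega$ are pairwise disjoint, one has
\[
x\in D^m_\omega \iff B_{1/m}(x)\subseteq \mathcal{O}_\omega \iff x+y\in\mathcal{O}_\omega\ \ \forall y\in\mathbf{Q}^n\cap B_{1/m}(0),
\]
exploiting the fact that a connected ball cannot straddle two disjoint components. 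Therefore
\[
\bigl\{(x,\omega):x\in D^m_\omega\bigr\} = \bigcap_{y\in \mathbf{Q}^n\cap B_{1/m}(0)}\bigl\{(x,\omega):T_{x+y}\omega\in \mathcal{O}\bigr\}
\]
is a countable intersection of measurable sets, measurable by Definition~\ref{defgroup}(3). Composing with the measurable random shift $D$ yields joint measurability of $(x,\omega)\mapsto \chi^m(x,\omega)$.

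To complete the argument, note that $\chi^m(\cdot,\omega)\widetilde\varphi^k$ is uniformly bounded in $L^2([0,c_2+1]^n)$. By Pettis' theorem, strong measurability of $\omega\mapsto\chi^m(\cdot,\omega)\widetilde\varphi^k$ as an $L^2$-valued map reduces to scalar measurability, which in turn follows from joint measurability and Fubini applied to $\omega\mapsto\int \chi^m(x,\omega)\widetilde\varphi^k(x)g(x)\,dx$ for each fixed $g\in L^2([0,c_2+1]^n)$. Since $f\mapsto\rho_{1/2m}*f$ is continuous from $L^2$ into $W^{1,2}(\mathbf{R}^n)$, measurability is preserved by composition. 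Finally, the support of $\chi^m(\cdot,\omega)\widetilde\varphi^k$ lies in the $1/m$-interior of the shifted component $\mathcal{O}^{k_0}_\omega+D(\omega)\subseteq [0,c_2+1]^n$; after mollification by $\rho_{1/2m}$ (radius $1/2m < 1/m$) the result still has support compactly contained in $[0,c_2+1]^n$, so $\varphi^{k,m}(\cdot,\omega)\in W_0^{1,2}([0,c_2+1]^n)$, and Borel measurability passes to this closed subspace.

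The main obstacle is the joint measurability step, because one must simultaneously handle the $\omega$-dependence of the set $D^m_\omega$ and of the shift $D(\omega)$; the geometric reduction $B_{1/m}(x)\subseteq \mathcal{O}_\omega$ to a countable condition over rational $y$ is what makes this tractable. The remaining steps are largely formal consequences of the separability of $L^2$ (enabling Pettis) and continuity of mollification.
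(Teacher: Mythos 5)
Your argument is correct in substance and rests on the same two pillars as the paper's proof --- reduction of the measurability of $\omega\mapsto\chi^m(\cdot,\omega)\widetilde\varphi^k$ to countably many conditions indexed by rational points, followed by continuity of $f\mapsto\rho_{1/2m}*f$ from $L^2$ to $W^{1,2}$ --- but the middle step is organised differently. The paper works directly with the sets $B_q=\{\omega:\, q\in D^m_\omega\}$, $q\in\mathbf{Q}^n$ (measurable by the argument of Lemma~\ref{lucia1}), shows that $\omega\mapsto\|\psi-\chi^m\widetilde\varphi^k\|_{L^2}$ is measurable for each $\psi\in C_0^\infty(\mathbf{R}^n)$ by writing the norm as a limit of Riemann sums built from finitely many $\chi_{B_q}$ and values of $\widetilde\varphi^k$, and concludes Borel measurability because the $L^2$-topology is generated by countably many balls $B(\psi,r)$. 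You instead establish \emph{joint} measurability of $(x,\omega)\mapsto\chi^m(x,\omega)$ and invoke Fubini plus Pettis' theorem. Both routes are standard; yours buys a cleaner functional-analytic packaging (and you make explicit the support argument placing the range in $W^{1,2}_0$ rather than merely $W^{1,2}$, which the paper leaves implicit), while the paper's avoids any appeal to vector-valued measurability theory.

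Two small points in your geometric reduction deserve attention. First, the characterisation $x\in D^m_\omega\iff B_{1/m}(x)\subseteq\mathcal{O}_\omega$ is not exact: the definition of $D^m_\omega$ uses the strict inequality $\dist(x,\partial\mathcal{O}^{k_0}_\omega)>1/m$, whereas containment of the open ball only yields $\dist(x,\partial\mathcal{O}^{k_0}_\omega)\geq 1/m$. This is harmless --- write $\{x:\dist(x,\partial\mathcal{O}^{k_0}_\omega)>1/m\}=\bigcup_{j\in\mathbf{N}}\{x:\, B_{1/m+1/j}(x)\subseteq\mathcal{O}_\omega\}$ and apply your rational-point test at each radius --- but as stated the two sets may differ. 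Second, the implication ``$x+y\in\mathcal{O}_\omega$ for all $y\in\mathbf{Q}^n\cap B_{1/m}(0)$ implies $B_{1/m}(x)\subseteq\mathcal{O}_\omega$'' is not a pure density statement: one must rule out the possibility that the (closed) complement of $\mathcal{O}_\omega$ meets the ball only in points invisible to the rational grid. Here this follows from the Lipschitz regularity of $\partial\mathcal{O}^k_\omega$ and the separation of components in Assumption~\ref{kirill100}, which guarantee that near any point of $\partial\mathcal{O}_\omega$ there is a nonempty open set disjoint from $\mathcal{O}_\omega$. With these repairs the proof is complete.
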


\begin{proof}
Firstly notice that 
%the mapping 
\begin{equation} \label{ante2} 
\omega \mapsto \chi^{m}(\cdot, \omega)\widetilde \varphi^k(\cdot), 
\end{equation} 
is a measurable mapping taking values in the set $L^2([0,c_2+1]^n)$, with Borel $\sigma$-algebra. To check this notice that for each $q \in \mathbf{Q}^n$ the set 
$$ B_q:=\bigl\{\omega \in \Omega: q \in D^m_{\omega}\bigr\},$$ 	
is measurable: the related proof is similar to that of Lemma \ref{lucia1}. 
Further, for $\psi \in C_0^{\infty} (\mathbf{R}^n)$ the norm $\| \psi- \chi^{m} \widetilde \varphi^k\|_{L ^2({\mathbf R}^n)}$ is written as a limit of Riemann sums, and each Riemann sum can be written in terms of a finite number of $\chi_{B_q}$ and values of  function $\widetilde \varphi^k(\cdot)$. Thus $\omega \mapsto\| \psi- \chi^{m} \widetilde \varphi^k\|_{L ^2({\mathbf R}^n)}$ is measurable.  
Since the topology in $L^2(\mathbf{R}^n)$ is generated by the balls of the form $B(\psi,r)$, where $\psi \in C_0^{\infty} (\mathbf{R}^n)$ and $r \in \mathbf{Q}$ we have that the mapping given by \eqref{ante2} is measurable. The final claim follows by using the fact that the convolution is a continuous (and thus measurable) operator from $L^2$ to $W^{1,2}$. 
\end{proof}
Notice that by construction  $\{\varphi^{k,m}(\cdot,\omega)\}_{k,m \in\mathbf{N}}$ $\subset  C_0^{\infty} (U_{\omega})$ is a dense subset of $W_0^{1,2} (U_{\omega})$ for a.e. $\omega \in \Omega$ (see also the proof of Lemma \ref{marko110}). For $0\le a\le b$ we introduce the following subset of $\mathcal{O}:$
%the sets $E_{a,b},E_{a,b}^0 \subset\mathcal{O}$  for some $0\leq a\leq b$:
%\begin{eqnarray*}
\begin{equation}
 E_{a,b}:=\bigl\{\omega \in\mathcal{O}:  %\\ & &\hspace{+1ex} 
 %\textrm{an eigenvalue of } 
 -\Delta_{\mathcal{O}^{k_0}_{\omega}}\ \textrm{has an eigenvalue in\ }  [ a, b ] \textrm{ for } k_0 \in \mathbf{N} \textrm{ such that } 
 0 \in \mathcal{O}^{k_0}_{\omega}\bigr\}.
 \label{E_def}
 \end{equation}
 For $0\leq a\leq b$ and a.e. $\omega \in E_{a,b}$ we also define $S_{a,b,\omega} \subset W_0^{1,2} (U_{\omega})$ as follows: 
  \begin{eqnarray*}
  S_{a,b,\omega}&:=&\bigl\{\psi \in W_0^{1,2}(U_{\omega}): 
  %\textrm{ there exists } k_0 \in \mathbf{N} \textrm{ such that }  0 \in \mathcal{O}^{k_0}_{\omega}\textrm{ and } \\[0.3em] & & \hspace{+1ex}
   \psi\ \textrm{is an eigenfunction of } -\Delta_{U_{\omega}} \textrm{ whose eigenvalue is in } 
  % the} \\ & & \hspace{+1ex}  \textrm{interval } 
  [ a, b ]  \bigr\}. 
\end{eqnarray*} 	
Finally, for every $r \in \mathbf{R}$ and $k,m \in \mathbf{N}$ we define the random variable 
\begin{equation}
X_r^{k,m}(\omega):= \left\{\begin{array}{lr}
 \dfrac{\bigl\|-\Delta \varphi^{k,m}(\cdot,\omega)-r\varphi^{k,m}(\cdot,\omega)\bigr\|_{W^{-1,2 }(U_{\omega})}}{\bigl\|\varphi^{k,m}(\cdot,\omega)\bigr\|_{L^2(U_{\omega})}} & \textrm{if } \varphi^{k,m}(\cdot,\omega)\neq 0, \\[0.9em] +\infty & \textrm{otherwise.} \end{array} \right. 
 \label{X_indices_def}
 \end{equation}
\begin{lemma} \label{ante1} 
	For every $r \in \mathbf{R}$ and $k,m \in \mathbf{N}$, $X_r^{k,m}$ is a measurable function.  
\end{lemma}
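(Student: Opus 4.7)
The plan is to realise $X_r^{k,m}$ as a countable supremum of ratios of measurable functions of $\omega$. By Lemma \ref{ante10}, $\omega\mapsto\varphi^{k,m}(\cdot,\omega)$ is Borel measurable into $W_0^{1,2}([0,c_2+1]^n)$, and by the construction recorded just before the statement of the lemma, the countable collection $\{\varphi^{k',m'}(\cdot,\omega)\}_{k',m'\in\mathbf{N}}$ is dense in $W_0^{1,2}(U_\omega)$ for a.e. $\omega\in\mathcal{O}$. By duality,
\[
\bigl\|{-}\Delta\varphi^{k,m}(\cdot,\omega)-r\varphi^{k,m}(\cdot,\omega)\bigr\|_{W^{-1,2}(U_\omega)}=\sup_{\substack{\psi\in W_0^{1,2}(U_\omega)\\ \psi\neq 0}}\frac{|B_r(\varphi^{k,m}(\cdot,\omega),\psi)|}{\|\psi\|_{W^{1,2}}},
\]
where $B_r(u,v):=\int \nabla u\cdot\nabla v-r\int uv$ is a bounded bilinear form on $W_0^{1,2}([0,c_2+1]^n)$ (using zero extension to $[0,c_2+1]^n$). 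By continuity of $\psi\mapsto B_r(\varphi^{k,m}(\cdot,\omega),\psi)$ combined with the density above, this supremum equals the countable supremum, over pairs $(k',m')\in\mathbf{N}^2$ with $\varphi^{k',m'}(\cdot,\omega)\neq 0$, of the quantities $|B_r(\varphi^{k,m}(\cdot,\omega),\varphi^{k',m'}(\cdot,\omega))|/\|\varphi^{k',m'}(\cdot,\omega)\|_{W^{1,2}}$.

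For each fixed $k',m'$, the scalar function $\omega\mapsto B_r(\varphi^{k,m}(\cdot,\omega),\varphi^{k',m'}(\cdot,\omega))$ is the composition of the Borel measurable map $\omega\mapsto(\varphi^{k,m}(\cdot,\omega),\varphi^{k',m'}(\cdot,\omega))$, measurable by Lemma \ref{ante10}, with the (continuous, hence Borel) bilinear form $B_r$, and is therefore measurable. Similarly, $\omega\mapsto\|\varphi^{k',m'}(\cdot,\omega)\|_{W^{1,2}}$ and $\omega\mapsto\|\varphi^{k,m}(\cdot,\omega)\|_{L^2(U_\omega)}=\|\varphi^{k,m}(\cdot,\omega)\|_{L^2([0,c_2+1]^n)}$ (using $\supp\varphi^{k,m}(\cdot,\omega)\subset U_\omega$) are measurable as continuous functions of the measurable $W^{1,2}$-valued map.

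Finally, the event $\{\omega:\varphi^{k,m}(\cdot,\omega)=0\}$ is Borel measurable as the preimage of a closed singleton under a Borel measurable map, so the convention $X_r^{k,m}(\omega)=+\infty$ on this set is consistent with measurability. On its complement, $X_r^{k,m}(\omega)$ is obtained as the countable supremum identified above, divided by the measurable positive denominator $\|\varphi^{k,m}(\cdot,\omega)\|_{L^2(U_\omega)}$, with the convention that pairs $(k',m')$ for which $\varphi^{k',m'}(\cdot,\omega)=0$ are excluded; it is therefore measurable. The main delicacy is the identification of the $W^{-1,2}(U_\omega)$-norm with the countable supremum: this rests on the density of $\{\varphi^{k',m'}(\cdot,\omega)\}_{k',m'\in\mathbf{N}}$ in $W_0^{1,2}(U_\omega)$ for a.e. $\omega$, which has already been established in the construction following Lemma \ref{marko110} and is compatible with the $\omega$-dependence of the target space $U_\omega$.
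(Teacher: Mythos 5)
Your proposal is correct and follows essentially the same route as the paper: measurability of $\omega\mapsto\varphi^{k,m}(\cdot,\omega)$ from Lemma \ref{ante10}, continuity of $u\mapsto-\Delta u-ru$ from $W^{1,2}$ to $W^{-1,2}$, and the representation of the $W^{-1,2}(U_\omega)$-norm as a countable supremum of duality pairings against the dense family $\{\varphi^{k',m'}(\cdot,\omega)\}$. Your write-up merely makes explicit (via the bilinear form $B_r$ and the treatment of the zero set and the $+\infty$ convention) details the paper leaves implicit.
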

\begin{proof}
We use Lemma \ref{ante10} and the fact that
  $-\Delta$ is a continuous map from $W^{1,2}$ to $W^{-1,2}$ and $\|\cdot\|_{W^{-1,2}(U_{\omega})}$ is a measurable function, since 
 $$\|\psi(\cdot,\omega)\|_{W^{-1,2}(U_{\omega})}:=\sup_{k,m\in \mathbf{N}}\left\{\frac{_{W^{-1,2}(U_{\omega})}\big\langle \psi(\cdot,\omega), \varphi^{k,m}(\cdot,\omega) \big\rangle_{W^{1,2}_0(U_{\omega})} }{\|\varphi^{k,m}(\cdot,\omega)\|_{W^{1,2}(U_\omega)}}: \varphi^{k,m}(\cdot,\omega) \neq 0 \right\}. $$	
\end{proof}
\begin{lemma}
	For $0\leq a\leq b$, the set $E_{a,b}$ is measurable. 
\end{lemma}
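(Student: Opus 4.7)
The plan is to express $E_{a,b}$ as a countable Boolean combination of the level sets of the measurable random variables $X_r^{k,m}$ introduced in (\ref{X_indices_def}). Specifically, letting $Q$ denote a countable dense subset of $[a,b]$, I will argue that
\begin{equation*}
E_{a,b}=\bigcap_{N\in\mathbf{N}}\bigcup_{r\in Q}\bigcup_{k,m\in\mathbf{N}}\bigl\{\omega\in\Omega:\ X_r^{k,m}(\omega)<1/N\bigr\},
\end{equation*}
after which measurability is immediate from Lemma \ref{ante1}. For $\omega\notin\mathcal{O}$ the functions $\varphi^{k,m}(\cdot,\omega)$ vanish by construction and so $X_r^{k,m}(\omega)=+\infty$, automatically excluding such $\omega$ from the right-hand side, in keeping with $E_{a,b}\subseteq\mathcal{O}$.

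The pivotal equivalence I would establish is, for each fixed $r\in\mathbf{R}$ and a.e. $\omega\in\mathcal{O}$,
\begin{equation*}
r\in\Sp(-\Delta_{U_{\omega}})\iff \inf_{k,m\in\mathbf{N}}X_r^{k,m}(\omega)=0,
\end{equation*}
combined with the fact that $\Sp(-\Delta_{U_{\omega}})=\Sp(-\Delta_{\mathcal{O}^{k_0}_{\omega}})$ since $U_{\omega}$ is a rigid translation of $\mathcal{O}^{k_0}_{\omega}$ and $-\Delta$ is translation-invariant. The ``only if'' direction picks a normalised eigenfunction $\psi^*\in W_0^{1,2}(U_{\omega})$ with $-\Delta\psi^*=r\psi^*$, approximates it in $W_0^{1,2}(U_{\omega})$ by a subsequence $\varphi^{k_j,m_j}(\cdot,\omega)\to\psi^*$ (available thanks to the density of $\{\varphi^{k,m}(\cdot,\omega)\}$ in $W_0^{1,2}(U_{\omega})$ noted after Lemma \ref{ante1}), and uses continuity of $-\Delta:W_0^{1,2}(U_{\omega})\to W^{-1,2}(U_{\omega})$ to conclude $X_r^{k_j,m_j}(\omega)\to 0$. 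The ``if'' direction is a compactness-of-resolvent argument: an $L^2$-normalised sequence $\varphi_n$ with $\|(-\Delta-r)\varphi_n\|_{W^{-1,2}(U_{\omega})}\to 0$ satisfies, upon applying the bounded map $(-\Delta)^{-1}:W^{-1,2}(U_{\omega})\to W_0^{1,2}(U_{\omega})$, the relation $\varphi_n-r(-\Delta)^{-1}\varphi_n\to 0$ in $L^2(U_{\omega})$; the compactness of $(-\Delta)^{-1}$ on $L^2(U_{\omega})$ then extracts a subsequential $L^2$-limit that is a nontrivial eigenfunction with eigenvalue $r$.

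Lifting this to the characterisation of $E_{a,b}$: if $r^*\in[a,b]\cap\Sp(-\Delta_{U_{\omega}})$ with eigenfunction $\psi^*$, then for rational $r\in Q$ close to $r^*$ and test functions $\varphi^{k,m}(\cdot,\omega)$ close to $\psi^*$ in $W_0^{1,2}(U_{\omega})$, $X_r^{k,m}(\omega)$ is controlled by $|r-r^*|\,\|\psi^*\|_{W^{-1,2}(U_{\omega})}/\|\psi^*\|_{L^2(U_{\omega})}$ plus a vanishing remainder, which places $\omega$ in the claimed right-hand side. Conversely, if $\omega$ lies in the right-hand side, one extracts rationals $r_n\in Q$ with $r_n\to r^*\in[a,b]$ and $L^2$-normalised $\varphi_n$ with $\|(-\Delta-r_n)\varphi_n\|_{W^{-1,2}(U_\omega)}\to 0$; the embedding $L^2\hookrightarrow W^{-1,2}$ upgrades this to $\|(-\Delta-r^*)\varphi_n\|_{W^{-1,2}(U_\omega)}\to 0$, and the compact-resolvent argument promotes $r^*$ to a genuine eigenvalue, so $\omega\in E_{a,b}$.

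The main technical obstacle is the compact-resolvent step, which identifies approximate $W^{-1,2}$-eigenvectors with genuine eigenvalues; it rests on $(-\Delta_{U_{\omega}})^{-1}$ being compact on $L^2(U_{\omega})$, guaranteed by the Lipschitz regularity of $\partial U_{\omega}$ inherited from Assumption \ref{kirill100}. A secondary bookkeeping point is to ensure that the choice of countable dense $Q$ and the uniform estimates on approximating $\psi^*$ by the $\varphi^{k,m}(\cdot,\omega)$ are valid on a single full-measure set of $\omega$, which is handled by working inside the set of typical elements already fixed earlier in the paper.
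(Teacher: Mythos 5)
Your proof is correct and follows essentially the same route as the paper: both reduce the claim to the identity $E_{a,b}=\{\omega\in\mathcal{O}:\inf_{k,m\in\mathbf{N},\,r\in\mathbf{Q}\cap[a,b]}X_r^{k,m}(\omega)=0\}$ and invoke the measurability of the countably many random variables $X_r^{k,m}$ from Lemma \ref{ante1}. The paper states this identity without justification, whereas you supply the (correct) verification via density of the $\varphi^{k,m}(\cdot,\omega)$ in $W_0^{1,2}(U_\omega)$ and the compact-resolvent argument upgrading approximate eigenvectors to genuine eigenvalues.
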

\begin{proof}
	The claim follows by observing that
	$$ E_{a,b}=\Bigl\{\omega\in\mathcal{O}: \inf_{k,m \in \mathbf{N}, r \in \mathbf{Q} \cap [ a,b]}X^{k,m}_r(\omega)=0\Bigr\}.$$
\end{proof}

Now we are going to define a measurable mapping from $\mathcal{O}$ to the subspace $S_{a,b,\omega}$. We set it to be an $L^2$-projection onto $S_{a,b,\omega}$ of a specially chosen function of $x$ and $\o$. We need the following measurability lemma.
\begin{lemma}
	Assume that $\omega \mapsto \varphi(\cdot, \omega)$ is a measurable function taking values in $L^2(U_{\omega})$ for a.e. $\omega \in E_{a, b}$. Then the $L^2$-distance 
	$ 
	\omega \mapsto \dist_{L^2(U_\omega)}\bigl(\varphi(\cdot,\omega), S_{a,b,\omega}\bigr),$ 
	$\omega\in E_{a,b},$
	is a measurable map. 
\end{lemma}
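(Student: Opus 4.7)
First, I would reduce measurability of the distance to measurability of the spectral projection. Since $U_\omega \subset [0,c_2+1]^n$ is a bounded Lipschitz domain, the Dirichlet Laplacian $-\Delta_{U_\omega}$ has compact resolvent, hence discrete spectrum with finite multiplicities; in particular $[a,b]$ contains only finitely many eigenvalues, and $S_{a,b,\omega}=\ran P_{a,b,\omega}$ is finite-dimensional, where $P_{a,b,\omega}:=\chi_{[a,b]}(-\Delta_{U_\omega})$ is the spectral projection. Viewing both $\varphi(\cdot,\omega)$ and elements of $S_{a,b,\omega}$ as extended by zero to $L^2([0,c_2+1]^n)$, one has
\[
\dist^2_{L^2(U_\omega)}\bigl(\varphi(\cdot,\omega), S_{a,b,\omega}\bigr)=\|\varphi(\cdot,\omega)\|^2_{L^2}-\|P_{a,b,\omega}\varphi(\cdot,\omega)\|^2_{L^2},
\]
so it suffices to show that $\omega\mapsto P_{a,b,\omega}\varphi(\cdot,\omega)\in L^2([0,c_2+1]^n)$ is strongly measurable.

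Next, I would express $P_{a,b,\omega}$ through the resolvent $R_\omega:=(-\Delta_{U_\omega}+1)^{-1}$, a compact self-adjoint operator on $L^2(U_\omega)$ with spectrum in $(0,1]$. Pick continuous functions $f_n\colon[0,\infty)\to[0,1]$ converging pointwise to $\chi_{[a,b]}$; by dominated convergence for the spectral measure, $f_n(-\Delta_{U_\omega})\varphi\to P_{a,b,\omega}\varphi$ strongly in $L^2$. Writing $f_n(\lambda)=g_n\bigl(1/(1+\lambda)\bigr)$ with $g_n$ continuous on $[0,1]$ and approximating $g_n$ uniformly by polynomials $p_{n,\ell}$ via Stone--Weierstrass gives $p_{n,\ell}(R_\omega)\to f_n(-\Delta_{U_\omega})$ in operator norm. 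Therefore $P_{a,b,\omega}\varphi(\cdot,\omega)$ is a double $L^2$-pointwise (in $\omega$) limit of polynomials in $R_\omega$ applied to $\varphi(\cdot,\omega)$, and by induction it suffices to prove that $\omega\mapsto R_\omega\varphi(\cdot,\omega)\in L^2$ is measurable.

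For this last step I would use Galerkin approximation built from the dense family $\{\varphi^{k,m}(\cdot,\omega)\}$ of Lemma \ref{ante10}. Set $V_N(\omega):=\Span\{\varphi^{k,m}(\cdot,\omega):1\le k,m\le N\}\subset W_0^{1,2}(U_\omega)$ and define the Galerkin approximant $u_N(\omega)\in V_N(\omega)$ by
\[
\int_{U_\omega}\bigl(\nabla u_N(\omega)\cdot\nabla v+u_N(\omega)\,v\bigr)=\int_{U_\omega}\varphi(\cdot,\omega)\,v\qquad \forall v\in V_N(\omega).
\]
By the density argument used in the proof of Lemma \ref{marko110}, $u_N(\omega)\to R_\omega\varphi(\cdot,\omega)$ in $L^2$. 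To render $\omega\mapsto u_N(\omega)$ measurable, I would perform a measurable Gram--Schmidt on $\{\varphi^{k,m}(\cdot,\omega)\}_{k,m\le N}$: at each step compute the orthogonal complement of the next candidate against the already-kept vectors and keep (normalised) or discard it according to whether its $L^2$-norm exceeds a fixed threshold; by Lemma \ref{ante10} and the continuity of inner products this procedure produces, $\omega$-measurably, an orthonormal basis of $V_N(\omega)$ in the ambient $L^2([0,c_2+1]^n)$. Expressed in this basis the Galerkin system has diagonal (identity) Gram matrix and right-hand side $\bigl(\int_{U_\omega}\varphi(\cdot,\omega)\psi_j\bigr)_j$, which is measurable by the assumption on $\varphi$, so $u_N(\omega)$ is measurable, and so is its limit.

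The main obstacle is precisely the construction of a measurable basis for the variable finite-dimensional space $V_N(\omega)$: the Gram matrix of the generators $\{\varphi^{k,m}(\cdot,\omega)\}_{k,m\le N}$ may be rank-deficient with rank depending on $\omega$, so one cannot directly invert it measurably. The measurable Gram--Schmidt with a threshold, as above, circumvents this; an equivalent route is to invoke Borel measurability of the Moore--Penrose pseudo-inverse on the space of symmetric matrices. Either way, one must ensure that the threshold selections (respectively, the pseudo-inverse branch choices) yield a genuinely Borel-measurable output, which is the only point requiring care.
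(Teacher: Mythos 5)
Your strategy is workable in outline, but it is far heavier than what the paper does and, as written, has two concrete problems. First, the reduction $\dist^2_{L^2(U_\omega)}\bigl(\varphi,S_{a,b,\omega}\bigr)=\|\varphi\|^2-\|P_{a,b,\omega}\varphi\|^2$ presupposes that $S_{a,b,\omega}$ is the \emph{range} of the spectral projection $\chi_{[a,b]}(-\Delta_{U_\omega})$. In the paper $S_{a,b,\omega}$ is the set of eigenfunctions whose eigenvalue lies in $[a,b]$, i.e.\ the \emph{union} of the corresponding eigenspaces; this is not a linear subspace as soon as $[a,b]$ contains two distinct eigenvalues, and then the distance to $S_{a,b,\omega}$ is the minimum of the distances to the individual eigenspaces, which is in general strictly larger than $(\|\varphi\|^2-\|P_{a,b,\omega}\varphi\|^2)^{1/2}$. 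To salvage your route one would need a measurable enumeration of the eigenvalues in $[a,b]$ together with their individual eigenprojections, and then a minimum over these — doable, but an extra layer you have not supplied. Second, the measurable Gram--Schmidt with a \emph{fixed positive} threshold may discard generators that are linearly independent but nearly dependent, so the retained vectors need not span $V_N(\omega)$ and your Galerkin approximant is then computed in the wrong subspace. The correct version uses threshold zero: partition $E_{a,b}$ into the finitely many (for fixed $N$) measurable sets on which a prescribed subset of indices survives, and on each piece the orthonormalisation is given by explicit measurable expressions in the inner products $\scalar{\varphi^{k,m},\varphi^{k',m'}}_{L^2}$, which are measurable by Lemma \ref{ante10}.

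For comparison, the paper's proof avoids all of this machinery: it writes
\begin{equation*}
\dist_{L^2(U_\omega)}\bigl(\varphi(\cdot,\omega),S_{a,b,\omega}\bigr)=\limsup_{n\to\infty}\,\inf_{k,m\in\mathbf{N}}\biggl\{\bigl\|\varphi^{k,m}(\cdot,\omega)-\varphi(\cdot,\omega)\bigr\|_{L^2(U_\omega)}:\ X_r^{k,m}(\omega)<\frac1n\ \textrm{ for some } r\in\mathbf{Q}\cap[a,b]\biggr\},
\end{equation*}
i.e.\ it approximates $S_{a,b,\omega}$ by the countable family of approximate eigenfunctions $\varphi^{k,m}$ with defect smaller than $1/n$. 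Each constraint is expressed through the measurable random variables $X_r^{k,m}$ (Lemma \ref{ante1}), each norm is measurable by Lemma \ref{ante10} together with the hypothesis on $\varphi$, and a countable infimum followed by a $\limsup$ of measurable functions is measurable; the only analytic input is that the admissible $\varphi^{k,m}$ accumulate exactly on $S_{a,b,\omega}$ as $n\to\infty$, by density of the family and the spectral gap outside $[a,b]$. If you wish to keep the operator-theoretic route, the two repairs above are necessary; otherwise the countable-approximation formula is the economical proof.
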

\begin{proof}
	The claim follows from the formula
	%\begin{eqnarray*}
\begin{equation*}
	\dist_{L^2(U_\omega)}\bigl(\varphi(\cdot,\omega), S_{a,b,\omega}\bigr)
	%&=&
	=\limsup_{n\to \infty} \inf_{k,m \in \mathbf{N}}\biggl\{\bigl\|\varphi^{k,m}(\cdot,\omega)-\varphi(\cdot,\omega)\bigr\|_{L^2(U_\omega)}:X_r^{k,m}(\omega) < \frac{1}{n}
	%\\ & & \hspace{+25ex}  
	\textrm{ for some } r \in \mathbf{Q} \cap [a,b]\biggr\}. 
%\end{eqnarray*}
\end{equation*}
\end{proof}
For $0\leq a\leq b$ and $\omega\in E_{a, b}$ we define  the measurable map $\omega \mapsto \varphi_{a,b}(\cdot,\omega)$ as follows: 
$$ 
\varphi_{a,b}(\cdot,\omega)=\varphi^{k_0(\omega), m_0( \omega)} (\cdot,\omega),          
$$
where
%\begin{eqnarray*}
%\begin{align*}
\begin{align}
k_0(\omega):= \min_{k \in \mathbf{N}}\biggl\{k: \dist_{L^2(U_\omega)}\bigl(\varphi^{k,m}(\cdot,\omega), S_{a,b,\omega}\bigr)&\neq \bigl\|\varphi^{k,m}(\cdot,\omega)\bigr\|_{L^2(U_\omega)},\nonumber\\
%\qquad\qquad\qquad\qquad\qquad\qquad\qquad\qquad\qquad\qquad
%\\ & & \hspace{+5ex} \textrm{ and }   
&\frac{1}{2} \leq\bigl\|\varphi^{k,m}(\cdot,\omega)\bigr\|_{L^2(U_\omega)}\leq 1 \textrm{ for some } m \in \mathbf{N}\biggr\}, 
\label{k0_def}
\end{align}
and $m_0(\omega)$ is the minimal value of $m$ in (\ref{k0_def})
%%\\[0.4em]
%%m_0(\omega)&:= \min_{m \in \mathbf{N}}\biggl\{m: \dist_{L^2(U_\omega)}\bigl(\varphi^{k_0(\omega),m}(\cdot,\omega), S_{a,b,\omega}\bigr) \neq\bigl\|\varphi^{k_0(\omega),m}(\cdot,\omega)\bigr\|_{L^2(U_\omega)},\\
%%&\qquad\qquad\qquad\qquad\qquad\qquad\qquad\qquad\qquad\qquad
%\textrm{ and } \\ & &\hspace{+5ex}
%%\frac{1}{2} \leq \bigl\|\varphi^{k_0(\omega),m}(\cdot,\omega)\bigr\|_{L^2(U_\omega)} \leq 1\biggr\}. 
%%\end{align*}
%\end{eqnarray*}
Notice that in this way for a.e. $\omega \in E_{a, b}$ the $L^2$-projection of $\varphi_{a,b}(\cdot, \omega)$ on $S_{a,b,\omega}$ is not zero. We also define the random variable $R: \Omega \to[0,+ \infty)$ in the following way:
$$ 
R(\omega):= \left\{\begin{array}{ll} \dist\bigl(0, \partial \mathcal{O}^{k_0}_{\omega}\bigr) & \textrm{ if } 0 \in \mathcal{O}^{k_0}_{\omega} \textrm{ for some } k_0 \in \mathbf{N},\\[0.3em] 0 & \textrm{ otherwise}.  \end{array} \right.
$$
By invoking the measurability of $\mathcal{O}_q,$ $q\in{\mathbf Q}^n,$ see  Lemma \ref{lucia1}, 
it is easily seen that $R$ is indeed measurable. Next, for $0\leq a\leq b$, $l>0$ we define the random variable $\psi_{a,b,l}: \Omega \to \mathbf{R}$ by 
$$ 
\psi_{a,b,l}(\omega):= \left\{\begin{array}{ll} \limsup_{n \to \infty}\fint_{B(D,\min\{l,R(\omega)\})}\varphi^{k_1(\omega,n),m_1(\omega,n)}(\cdot, \omega) & \textrm{ if } R(\omega)>0,\ \omega\in E_{a, b},  \\[0.4em] 0 & \textrm{ otherwise},  \end{array} \right.
$$
where, for all $n\in{\mathbf N},$ 
\begin{align*}
	k_1(\omega,n):=& \min_{k \in \mathbf{N}} \biggl\{k: X^{k,m}_r < \frac{1}{n}  \textrm{ for some }  r \in \mathbf{Q} \cap [a,b],\\  
	%\textrm{ and } 
	&\bigl\|\varphi^{k,m}(\cdot, \omega)
	%\\  & &\hspace{+5ex} 
	-\varphi_{a,b}(\cdot, \omega)\bigr\|_{L^2(U_\omega)} < \dist_{L^2(U_\omega)}\bigl(\varphi_{a,b}(\cdot,\omega), S_{\a,b,\omega}\bigr)+\frac{1}{n} 
	%\\ & & \hspace{+5ex}
	\textrm{ for some } m \in \mathbf{N}\biggr\},
\end{align*}
$m_1(\omega)$ is the corresponding minimal value\footnote{The function $\varphi^{k,m}(\cdot, \omega)$ is an ``approximate eigenfunction'' for $-\Delta_{U_\omega},$ see (\ref{X_indices_def}).} of $m,$
%%	\\
%%	m_1(\omega,n):=& \min_{m \in \mathbf{N}}\biggl\{m: X^{k_1(\omega),m}_r < \frac{1}{n}  \textrm{ for some }  r \in \mathbf{Q} \cap [a,b],\\ 
	 %\textrm{ and } 
%%	 &\bigl\| \varphi^{k_1(\omega),m}(\cdot, \omega)
	 %\\ & &\hspace{+5ex}
%%	 -\varphi_{a,b}(\cdot, \omega)\bigr\|_{L^2(U_\omega)} < \dist_{L^2(U_\omega)}\bigl(\varphi_{a,b}(\cdot,\omega), S_{\a,b,\omega}\bigr)+\frac{1}{n}\biggr\},
%%\end{align*}
and $B(D,\min\{l,R(\omega)\})$ is the ball with the centre at $D$ and radius $\min\{l,R(\omega)\},$ see (\ref{lucia1}).
We also define 
$$\psi_{a,b}:=\limsup_{l^{-1}\in\mathbf{N},\ l\to 0} \psi_{a,b,l}.  $$
Notice that in this way $\psi_{a,b}$ is the value at the origin (taking into account for $\omega \in\mathcal{O}$ the relative position of the origin with respect to the shape) of the (unique) $L^2$-projection of $\varphi_{a,b}$ onto $S_{a,b,\omega}.$  As a consequence of \eqref{implication}, we have $|\psi_{a,b}| \leq M_b. $ Notice that by construction $\psi_{a,b}\neq 0$ if $P(E_{a,b})>0$.

We are ready for the proof of main statement.
\begin{theorem}
\label{marko1000} 
Under Assumption \ref{kirill100}, the assumption of Lemma \ref{marko110} and \eqref{implication},  one has 
	%For almost every $\omega \in \Omega$ we have that 
	$$
	\overline{\bigcup_{n \in \mathbf{N} } \Sp\bigl(-\Delta_{\mathcal{O}^n_{\omega}}\bigr)} \subseteq   \Sp(-\Delta_{\omega})\qquad {\rm a.e.}\ \omega \in \Omega.
	$$
\end{theorem}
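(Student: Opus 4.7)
The plan is to establish the reverse inclusion via the Weyl singular-sequence criterion for the self-adjoint operator $-\Delta_\omega$ associated with the form \eqref{delta_omega_form}. It suffices to show that for every $\lambda\in\Sigma:=\overline{\bigcup_n\Sp(-\Delta_{\mathcal{O}^n_\omega})}$ (which, being $T$-invariant as a subset of $\mathbf{R},$ is $P$-a.s. a deterministic set by ergodicity) and every $\delta>0,$ there exists $v_\delta\in W_0^{1,2}(\mathcal{O})$ with $\|v_\delta\|_{L^2(\Omega)}=1$ satisfying
\[
   \bigl|\scalar*{\nabla_\omega v_\delta,\nabla_\omega\varphi}_{L^2(\Omega)}-\lambda\scalar*{v_\delta,\varphi}_{L^2(\Omega)}\bigr|\leq 2\delta\,\|\varphi\|_{L^2(\Omega)}\qquad\forall\varphi\in W_0^{1,2}(\mathcal{O}),
\]
since Riesz representation then yields $v_\delta\in\mathrm{dom}(-\Delta_\omega)$ with $\|(-\Delta_\omega-\lambda)v_\delta\|_{L^2(\Omega)}\leq 2\delta,$ forcing $\dist(\lambda,\Sp(-\Delta_\omega))\leq 2\delta.$ Applying Birkhoff's theorem to $\chi_{E_{a,b}}$ (see \eqref{E_def}) characterises membership of $\lambda$ in $\Sigma$ by the positivity $P(E_{\lambda-\delta,\lambda+\delta})>0$ for every $\delta>0$.

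Given such a $\lambda$ and $\delta>0,$ set $a=\lambda-\delta,$ $b=\lambda+\delta$ and take $v_\delta$ to be the $L^2(\Omega)$-normalisation of the random variable $\psi_{a,b}$ constructed above. The crucial stationarity observation is that, for $x$ in the inclusion $\mathcal{O}^{k_0}_\omega$ containing the origin, one has $D(T_x\omega)=D(\omega)+x,$ $U_{T_x\omega}=U_\omega,$ whence $\varphi_{a,b}(\cdot,T_x\omega)=\varphi_{a,b}(\cdot,\omega)$ and $S_{a,b,T_x\omega}=S_{a,b,\omega};$ consequently the realisation $x\mapsto\psi_{a,b}(T_x\omega),$ restricted to $\mathcal{O}^{k_0}_\omega,$ coincides (in the coordinates of $\mathcal{O}^{k_0}_\omega$) with the $L^2$-projection of $\varphi_{a,b}(\cdot,\omega)$ onto $S_{a,b,\omega},$ i.e. a finite linear combination of eigenfunctions of $-\Delta_{\mathcal{O}^{k_0}_\omega}$ with eigenvalues in $[a,b].$ By stationarity, the analogous description holds on every inclusion $\mathcal{O}^k_\omega$ whose associated origin-shift lies in $E_{a,b}.$ The bound \eqref{implication} gives $\psi_{a,b}\in L^\infty(\Omega);$ since each such eigenfunction vanishes on $\partial\mathcal{O}^k_\omega,$ the extension by zero on $\mathbf{R}^n\setminus\mathcal{O}_\omega$ is in $W^{1,2}_{\rm loc}(\mathbf{R}^n),$ which via \eqref{matthaus0} places $\psi_{a,b}\in W_0^{1,2}(\mathcal{O}).$ Non-vanishing of the projection guaranteed by the selection in \eqref{k0_def}, combined with Birkhoff's theorem applied to $|\psi_{a,b}|^2,$ yields $\|\psi_{a,b}\|_{L^2(\Omega)}>0.$

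For the residual estimate, Birkhoff's theorem together with \eqref{matthaus0} rewrites the left-hand side as
\[
    \lim_{R\to\infty}\frac{1}{|B_R|}\int_{B_R}\bigl[\nabla_x v_\delta(x,\omega)\cdot\nabla_x\varphi(x,\omega)-\lambda\,v_\delta(x,\omega)\varphi(x,\omega)\bigr]\,dx.
\]
Since the inclusions are pairwise disjoint and $\varphi$ vanishes outside $\mathcal{O}_\omega,$ one has $\varphi|_{\mathcal{O}^k_\omega}\in W_0^{1,2}(\mathcal{O}^k_\omega)$ for every $k.$ Integration by parts on each inclusion supporting $v_\delta,$ together with the eigenfunction decomposition of $v_\delta|_{\mathcal{O}^k_\omega}$ (eigenvalues $\mu\in[a,b]$), produces the inclusion-wise bound $2\delta\,\|v_\delta\|_{L^2(\mathcal{O}^k_\omega)}\|\varphi\|_{L^2(\mathcal{O}^k_\omega)};$ summing in $k,$ applying Cauchy--Schwarz in the sum and passing to the limit gives the required estimate, concluding the proof via the Weyl criterion upon letting $\delta\downarrow 0.$

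The main technical obstacle is the careful verification of the structural claim about $\psi_{a,b}:$ namely, that the measurable selections in \eqref{X_indices_def}--\eqref{k0_def} together with the averaging in $\psi_{a,b,l}$ truly produce a function in $W_0^{1,2}(\mathcal{O})$ whose realisation on each eligible inclusion is genuinely the eigenspace projection. This relies on the measurability results (Lemmas \ref{lucia1}--\ref{ante1}), the uniform bound \eqref{implication} to pass from the approximate eigenfunctions $\varphi^{k,m}$ to genuine eigenspace projections in the limit, and Lemma \ref{marko110} (through its hypothesis assumed in the statement) to control the $W_0^{1,2}$-behaviour at the inclusion boundaries.
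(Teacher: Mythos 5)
Your core construction is exactly the paper's: for $a=\lambda-\delta$, $b=\lambda+\delta$ with $P(E_{a,b})>0$, normalise $\psi_{a,b}$ in $L^2(\mathcal{O})$ and use the fact that its realisation on each eligible inclusion is the projection of $\varphi_{a,b}$ onto $\Span S_{a,b,\omega}$ (a finite-dimensional span, since the Dirichlet Laplacian on a bounded Lipschitz inclusion has discrete spectrum), so that $(-\Delta-\lambda)$ applied inclusion-wise is controlled by $\delta$ times the local $L^2$-norm; Birkhoff's theorem then converts the inclusion-wise bounds into the $L^2(\Omega)$ residual estimate. The equivalence between $\lambda\in\overline{\bigcup_n\Sp(-\Delta_{\mathcal{O}^n_\omega})}$ (a.s.) and $P(E_{\lambda-\delta,\lambda+\delta})>0$ for all $\delta>0$, via Lemma \ref{marko100}, and the stationarity identities $D(T_x\omega)=D(\omega)+x$, $U_{T_x\omega}=U_\omega$, are likewise as intended in the paper. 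So the Weyl-sequence half of your plan coincides with the paper's case (b).

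There is, however, one genuine gap, and it is precisely the point to which the paper devotes the remaining half of its proof. Your argument produces, for each \emph{fixed} $\lambda\in\Sigma:=\overline{\bigcup_n\Sp(-\Delta_{\mathcal{O}^n_\omega})}$, a full-measure set $\Omega_\lambda$ on which $\lambda\in\Sp(-\Delta_\omega)$: the exceptional null set comes from the a.s.\ statements in the construction of $\psi_{a,b}$ and from the applications of Birkhoff's theorem, and it depends on $(a,b)$, hence on $\lambda$ and $\delta$. Since $\Sigma$ is in general uncountable, you cannot intersect the sets $\Omega_\lambda$ over $\lambda\in\Sigma$ to conclude the set inclusion for a.e.\ $\omega$. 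A standard repair is to run your argument only for $\lambda$ in a countable dense subset of the closed set $\Sigma$ and invoke closedness of $\Sp(-\Delta_\omega)$; this should be stated explicitly. The paper instead argues on the complement: it first shows (using the Weyl sequence) that every $l\notin\Sp(-\Delta_\omega)$ falls into case (a), i.e.\ $P(E_{l-\varepsilon,l+\varepsilon})=0$ for some $\varepsilon>0$ (here Corollary \ref{marko101}, the already-proved opposite inclusion, is used to identify the resolvent set with the case-(a) points); then, since $\Sp(-\Delta_\omega)^{\rm c}$ is a \emph{countable} union of open intervals $(d_1,d_2)$, each with $P(E^0_{d_1,d_2})=0$, Lemma \ref{marko100} gives that each such interval misses $\overline{\bigcup_n\Sp(-\Delta_{\mathcal{O}^n_\omega})}$ a.s., and countability closes the argument. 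Either route works, but some such countability device is indispensable and is absent from your plan.
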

\begin{proof}
We take $l \geq 0$. There are two possibilities:
\begin{enumerate}
		\item There exists $\varepsilon>0$ such that $E_{l-\varepsilon, l+\varepsilon}$ has zero probability. In this case we denote 
		%by  $\varepsilon_0(l)$
		$$ 
		\varepsilon_0(l):= \sup_{\varepsilon>0}\bigl\{\varepsilon: P(E_{l-\varepsilon, l+\varepsilon})=0\bigr\}.
		$$
	\item For all $\varepsilon>0$ the set $E_{l-\varepsilon, l+\varepsilon}$ has positive probability. 
\end{enumerate} 	
%{\color{green} Without loss of generality we can assume that $\omega \in \mathcal{O}$ (otherwise we take $x \in \mathbf{R}^n$ such that $T_x \omega \in \mathcal{O}$).}  
In the case (a), by the continuity of probability, we conclude that 
$P\bigl(E^0_{ l-\varepsilon_0(l), l+\varepsilon_0(l)}\bigr)=0,$ where ({\it cf.} (\ref{E_def}))
% \\[0.4em]
\begin{equation*}
  E^0_{a,b}:=\bigl\{\omega \in \mathcal{O}:  % \\ & & \hspace{+1ex}  
%  \textrm{an eigenvalue of } 
  -\Delta_{\mathcal{O}^{k_0}_{\omega}}\ \textrm{has an eigenvalue in\ } (a, b) \textrm{ for } k_0 \in \mathbf{N} \textrm{ such that }  0 \in \mathcal{O}^{k_0}_{\omega}\bigr\}.  
\end{equation*}
%\end{eqnarray*}
 
%By \cite[Lemma 7.1]{Zhikov2000}, which is a continuum version of Lemma 3.13 ({\color{red} should we also put the statement of the continuum version of Lemma 3.13 in the paper}) and Corollary 3.12 ({\color{red} label it}) we conclude that 
%for a.e. $\omega \in\Omega$
By Lemma \ref{marko100} and Corollary \ref{marko101} we infer that 
$$
\bigl(l-\varepsilon_0(l), l+\varepsilon_0(l)\bigr)\subseteq {\mathbf C}\setminus\overline{\bigcup_{n \in \mathbf{N} } \Sp\bigl(-\Delta_{\mathcal{O}^n_{\omega}}\bigr)}\subseteq{\mathbf C}\setminus\Sp(-\Delta_{\omega})\qquad{\rm a.e.\ } \omega \in\Omega.
$$	
In particular, we conclude that $l \notin \Sp(-\Delta_{\omega})$. 

In the case (b) we construct a Weyl sequence showing that $l \in  \Sp(-\Delta_{\omega})$. To this end, we define 
$$
\psi_n:=
%\frac{}{
\bigl\|\psi_{l-1/n,l+1/n}\bigr\|_{L^2(\mathcal{O})}^{-1}\psi_{l-1/n,l+1/n},\qquad n\in{\mathbf N}.
$$ 
Then, by the above construction and using Ergodic Theorem,
one has 
$$
\bigl\|-\Delta_{\omega}\psi_n-l\psi_n\bigr\|_{L^2(\mathcal{O})} \leq \frac{1}{n},\qquad n\in{\mathbf N}.
$$
%Thus we have $l \in \Sp(-\Delta_{\omega})$. In conclusion, 

It follows from the above that  $\Sp(-\Delta_{\omega})$ consists of exactly those $l \in \mathbf{R}$ that satisfy the property (b). The set $\Sp(-\Delta_{\omega})$ is closed, hence its complement is a countable union of open disjoint intervals. 
%We denote one such interval by $(d, u).$ Since every element of this 
Every element of such an interval $(d_1, d_2)$ satisfies the property (a) with $l=(d_1+d_2)/2,$
%we easily conclude by contradiction that 
$\varepsilon_0(l)=(d_2-d_1)/2,$ and therefore $P(E^0_{d_1, d_2})=0.$  
%Using \cite[Lemma 7.1]{Zhikov2000}, which is a continuum version of Lemma 3.13 ({\color{red} should we also put the statement of the continuum version of Lemma 3.13 in the paper})
%we conclude that for a.e. $\omega \in \Omega$ 
Using Lemma \ref{marko100}, we obtain
%we conclude that for a.e. $\omega \in \Omega$ 
$$
(d_1, d_2)\subseteq{\mathbf C}\setminus\overline{\bigcup_{n \in \mathbf{N} } \Sp\bigl(-\Delta_{\mathcal{O}^n_{\omega}}\bigr)}\qquad {\rm a.e.}\ \omega \in \Omega. 
$$
The claim follows since there is only countable number of such intervals. 
\end{proof}

\section{Convergence of spectrum} 
\label{spectral_convergence}
%{\color{red}(?) We will continue working keeping in mind previous examples.} 

In our analysis we keep in mind the examples set in Section \ref{examples_sec}, for which it is shown that  
$ \Sp (-\Delta_{\omega}) \subseteq \Sp(\mathcal{A})$.
In the present section we assume that this holds, as well as the conclusion of Theorem \ref{marko1000}, {\it i.e.} 
\begin{equation} \label{igor100000} 
\Sp(-\Delta_{\omega})=  \overline{\bigcup_{n \in \mathbf{N} } \Sp\bigl(-\Delta_{\mathcal{O}^n_{\omega}}\bigr)}. 
\end{equation}

We are interested in approximating the spectra $\Sp \mathcal{A}^{\e}(\omega)$ of the operators $\mathcal{A}^{\varepsilon}(\omega)$ (see Section \ref{problem_form}) by the spectrum $\Sp\mathcal{A}$ of the limit operator. 
We claim that $\Sp \mathcal{A}^{\e}(\omega)\to\Sp\mathcal{A}$  for a.e. $\omega \in \Omega,$ 
%the spectrum 
%{\color{red}$\Sp \mathcal{A}^{\e}$} 
%of the operators 
%$\mathcal{A}^{\varepsilon}$
%{\color{red}\sout{, which we denote by  $\Sp \mathcal{A}^{\e},$ }}
%converges for a.e. $\omega \in \Omega$ 
where the convergence is understood in the Hausdorff sense:
% to the spectrum {\color{red}$\Sp \mathcal{A}$} of the limit operator:
%{\color{red}\sout{, which we denote by $\Sp \mathcal{A}$, meaning the following:}}
\begin{enumerate}
	\item For all $\lambda \in 	\Sp \mathcal{A}$ there are $\lambda^{\e} \in \Sp  \mathcal{A}^{\e}(\omega)$ such that $\lambda^{\e} \to \lambda$. 
	\item If $\lambda_{e} \in \Sp \mathcal{A}^{\e}(\omega)$ and $\lambda^{\e} \to \lambda$, then $\lambda \in \Sp \mathcal{A}$. 
\end{enumerate}	
We prove this claim by adapting the argument of \cite{zhikov2005}. First, we introduce the notion of strong resolvent convergence. 

\begin{definition}
	Let $\mathcal{A}^{\e}(\omega)$ and $\mathcal{A}$ be the operators acting on $L^2(S)$ and on $H\subset L^2(S \times \Omega),$ respectively. We say that $\mathcal{A}^{\e}(\omega)$ strongly two-scale resolvent converge to $\mathcal{A}$ and write $\mathcal{A}^{\e} \stwoscale \mathcal{A}$ if
	$$ 
	f^{\e} \stwoscale f, \quad f \in L^2(S\times \Omega) \implies \bigl(\mathcal{A}^{\e}(\omega)+I\bigr)^{-1} f^{\e} \stwoscale (\mathcal{A}+I)^{-1}f\qquad{\rm for\ a.e.}\ \omega\in\Omega.                   
	 $$
\end{definition} 
It can be shown that the property (a) is satisfied if we have strong two-scale resolvent convergence (see the proof of \cite[Proposition 2.2]{zhikov2005}). Theorem \ref{misha10} shows that the following implication holds:
$$ 
f^{\e} \wtwoscale f, \quad f \in L^2(S\times \Omega) \implies\bigl(\mathcal{A}^{\e}(\omega)+I\bigr)^{-1} f^{\e} \wtwoscale (\mathcal{A}+I)^{-1}f.                    
$$
It can be shown that this is equivalent to strong two-scale resolvent convergence (see \cite[Proposition 2.8]{zhikov2005}) and thus the property (a) is satisfied. 

In order to prove (b), we start from the eigenvalue problem of the operator $\mathcal{A}^{\e}(\omega)$ (it has a compact resolvent and its spectrum is discrete), {\it i.e.}, of the solutions to
\begin{equation} \label{kirill10000} 
 \mathcal{A}^{\e}(\omega)u^{\e} =s^{\e} u^{\e}, \quad \int_S (u^{\e})^2=1. 
 \end{equation}
If we have that $s_\e \to s$
and $u^{\e} \wtwoscale u$, then we would also have $\mathcal{A} u=s u$. However, the problem would be if $u=0$, because then $s \notin \Sp \mathcal{A}$. The next lemma tells us if $s \notin \Sp (-\Delta_{\omega})$
%\subseteq \Sp(\mathcal{A})$ 
then necessarily the sequence of eigenvalues are compact with respect to strong two-scale converegence and thus $s$ belongs to the point spectrum of the operator $\mathcal{A}$, since then necessarily $u\neq 0$. 

%{\color{red}  
%%%In the above examples we showed that for a.e. $\omega \in \Omega$ we have 
%%%\begin{equation} \label{igor100000} 
%% \Sp(-\Delta_{\omega})=  \overline{\bigcup_{n \in \mathbf{N} } \Sp (-\Delta|_{L^2(A^n_{\omega})})}. 
%% \end{equation} 
 
 %%%This will be an assumption for the next proposition. 

%Task for Misha: This should be valid in general and the proof should go with Misha's arguments... 
%}

\begin{theorem}
Suppose that \eqref{igor100000} holds and that for each $\e>0$, $(s^{\e},u^{\e})$ satisfy \eqref{kirill10000}. If $s^{\e} \to s \notin \Sp (-\Delta_{\omega})$, then for a.e. $\omega \in \Omega$ the sequence $(u^{\e})$ is compact in the sense of strong two-scale convergence.
\end{theorem}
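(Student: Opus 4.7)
My plan is: (i) extract a weakly two-scale convergent subsequence $u^\varepsilon \wtwoscale u=u_0+u_1$; (ii) identify the microscopic correction $v^\varepsilon:=u^\varepsilon-\widetilde u^\varepsilon$ as the image of an inclusion-resolvent acting on strongly two-scale convergent data; (iii) upgrade the weak two-scale convergence of $v^\varepsilon$ to strong two-scale via the self-adjointness of this resolvent. This forces $\norm{u^\varepsilon}_{L^2(S)}\to\norm{u}_{L^2(S\times\Omega)}$ and hence $u^\varepsilon\stwoscale u$.

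Testing \eqref{kirill10000} with $u^\varepsilon$ gives uniform bounds on $\norm{\nabla u^\varepsilon}_{L^2(S_1^\varepsilon)}$, $\varepsilon\norm{\nabla u^\varepsilon}_{L^2(S_0^\varepsilon)}$, together with $\norm{u^\varepsilon}_{L^2(S)}=1$. Following the proof of Theorem \ref{misha10}, I extend $u^\varepsilon|_{S_1^\varepsilon}$ to $\widetilde u^\varepsilon\in W^{1,2}(S)$ via Assumption \ref{kirill100} and pass to a subsequence along which $\widetilde u^\varepsilon\to u_0$ strongly in $L^2(S)$, $v^\varepsilon\wtwoscale u_1\in L^2(S,W_0^{1,2}(\mathcal{O}))$, and $u:=u_0+u_1$ satisfies $\mathcal{A}u=su$ in the weak sense. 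Moreover, since $\widetilde u^\varepsilon\to u_0$ in $L^2(S)$ and $\chi_0^\varepsilon\stwoscale\chi_{\mathcal{O}}$ by the Ergodic Theorem, matching the $L^2$-norms $\int_S\chi_0^\varepsilon(\widetilde u^\varepsilon)^2\to P(\mathcal{O})\int_S u_0^2=\norm{u_0\chi_\mathcal{O}}^2_{L^2(S\times\Omega)}$ gives $\chi_0^\varepsilon\widetilde u^\varepsilon\stwoscale u_0\chi_\mathcal{O}$ \emph{strongly} in the two-scale sense.

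Using $u^\varepsilon=\widetilde u^\varepsilon+v^\varepsilon$ and $\Delta\widetilde u^\varepsilon=0$ on each inclusion, $v^\varepsilon$ solves, inclusion by inclusion, the Dirichlet problem
\begin{equation*}
-\varepsilon^2\Delta v^\varepsilon-s^\varepsilon v^\varepsilon=s^\varepsilon\widetilde u^\varepsilon\quad\text{in}\ S_0^\varepsilon,\qquad v^\varepsilon|_{\partial S_0^\varepsilon}=0.
\end{equation*}
Writing $R^\varepsilon:=(\mathcal{T}^\varepsilon-s^\varepsilon)^{-1}$, we therefore have $v^\varepsilon=s^\varepsilon R^\varepsilon(\chi_0^\varepsilon\widetilde u^\varepsilon)$. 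The hypothesis $s\notin\Sp(-\Delta_\omega)$ combined with \eqref{igor100000} and \eqref{kirill1000} yields $\liminf_\varepsilon\dist(s^\varepsilon,\Sp\mathcal{T}^\varepsilon)>0$, so $R^\varepsilon$ is uniformly bounded and Proposition \ref{kirill200} (via Remark \ref{mira1000}) applies. From \eqref{mik2} with the limit source $s u_0\chi_\mathcal{O}$ one reads off $u_1=sR(u_0\chi_\mathcal{O})$, where $R:=(-\Delta_\omega-s)^{-1}$.

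The main step---the upgrade of $v^\varepsilon\wtwoscale u_1$ to strong two-scale---uses the self-adjointness of $R^\varepsilon$ on $L^2(S_0^\varepsilon)$:
\begin{equation*}
\norm{v^\varepsilon}^2_{L^2(S)}=(s^\varepsilon)^2\bigl\langle R^\varepsilon(\chi_0^\varepsilon\widetilde u^\varepsilon),R^\varepsilon(\chi_0^\varepsilon\widetilde u^\varepsilon)\bigr\rangle=(s^\varepsilon)^2\bigl\langle\chi_0^\varepsilon\widetilde u^\varepsilon,(R^\varepsilon)^2\chi_0^\varepsilon\widetilde u^\varepsilon\bigr\rangle.
\end{equation*}
Iterating Proposition \ref{kirill200} gives $(R^\varepsilon)^2\chi_0^\varepsilon\widetilde u^\varepsilon\wtwoscale R^2(u_0\chi_\mathcal{O})$; pairing this weak two-scale factor with the strongly two-scale factor $\chi_0^\varepsilon\widetilde u^\varepsilon$ and using self-adjointness of $R$ on $L^2(S\times\Omega)$ yield
\begin{equation*}
\lim_{\varepsilon\to 0}\norm{v^\varepsilon}^2_{L^2(S)}=s^2\norm{R(u_0\chi_\mathcal{O})}^2_{L^2(S\times\Omega)}=\norm{u_1}^2_{L^2(S\times\Omega)},
\end{equation*}
so $v^\varepsilon\stwoscale u_1$ strongly. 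Decomposing $\norm{u^\varepsilon}^2_{L^2(S)}=\int_{S_1^\varepsilon}(\widetilde u^\varepsilon)^2+\int_{S_0^\varepsilon}(\widetilde u^\varepsilon+v^\varepsilon)^2$ and passing to the limit in each term by the Ergodic Theorem for the $\widetilde u^\varepsilon$ contributions, the weak-strong two-scale pairing for $\int_{S_0^\varepsilon}\widetilde u^\varepsilon v^\varepsilon$, and the just-established $\norm{v^\varepsilon}^2\to\norm{u_1}^2$, I obtain $\lim_\varepsilon\norm{u^\varepsilon}^2=\norm{u_0+u_1}^2_{L^2(S\times\Omega)}=\norm{u}^2$. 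Combined with $u^\varepsilon\wtwoscale u$, this is strong two-scale convergence along the subsequence. The main obstacle is precisely this norm identity for $v^\varepsilon$; the spectral separation $s\notin\Sp(-\Delta_\omega)$ is essential both for the uniform invertibility of $R^\varepsilon$ and for preventing spurious ``leakage'' of mass into the microscale in the limit.
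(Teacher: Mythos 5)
Your argument is correct and is essentially the paper's own proof in resolvent notation: the auxiliary quantity $(s^\varepsilon)^2(R^\varepsilon)^2(\chi_0^\varepsilon\widetilde u^\varepsilon)$ is exactly the paper's sequence $m^\varepsilon$ (the solution of a second inclusion problem with data $s^\varepsilon z^\varepsilon$), and your self-adjointness identity $\|v^\varepsilon\|^2=(s^\varepsilon)^2\langle\chi_0^\varepsilon\widetilde u^\varepsilon,(R^\varepsilon)^2\chi_0^\varepsilon\widetilde u^\varepsilon\rangle$ followed by the weak--strong two-scale pairing is precisely the paper's step of testing \eqref{nada100000} with $m^\varepsilon$ and \eqref{bukal10000} with $z^\varepsilon$, then repeating at the level of the limit equations. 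The decomposition via the harmonic extension $\widetilde u^\varepsilon$, the use of the spectral separation to get uniform resolvent bounds, and the final norm assembly all coincide with the paper's proof.
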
 	
\begin{proof}
$u^{\e} \in W_0^{1,2}(S_1^{\e})$ satisfies
$$ \int_{S_1^{\e}} A_1 \nabla u^{\e} \cdot \nabla v+\e^2 \int_{S_0^{\e}} \nabla u^{\e}\cdot\nabla v= s^{\e} \int_{S} u^{\e} v \quad \forall v \in  W_0^{1,2}(S). $$
We use Assumption \ref{kirill100} and for each $\varepsilon$ extend $u^{\e}|_{S_0^{\e}}$, denoting the extensions by $\widetilde u^{\e}$. Notice that there exists $C>0$ such that
%this extension satisfies for some fixed $C>0$
\begin{equation} 
\label{bukal1} 
 \widetilde u^{\e} \in W_0^{1,2} (S), \quad \Delta \widetilde u^{\e}=0 \textrm{ on } S_0^{\e}  \quad   \|\widetilde u^{\e}\|_{W^{1,2}(S)} \leq C.
 \end{equation} 
The difference $z^{\e}:=u^{\e}-\widetilde u^{\e}$ satisfies: 
%\begin{eqnarray}
\begin{equation}
%\nonumber & & 
z^{\e} \in W_0^{1,2} (S_0^{\e}),\qquad
%\\
%\label{bukal1000}& &
 \e^2\int_{S_0^{\e}} \nabla z^{\e} \cdot \nabla v-s^{\e}\int_{S_0^{\e}}z^{\e} v = s^{\e} \int_{S_0^{\e}} \widetilde u^{\e}v\quad\quad
%\\ \nonumber & & \hspace{+10ex}  
\forall v \in W_0^{1,2}(S_0^{\e}). 
\label{nada100000}
\end{equation}
%\end{eqnarray}  
From the estimate \eqref{bukal1} we see that $(\widetilde u^{\e})$ is weakly compact in $W_0^{1,2} (S)$ and thus there exists $\widetilde u \in W_0^{1,2} (S)$ such that $\widetilde u^{\e} \rightharpoonup \widetilde u,$ which immediately implies
 %Because of this it is easy to obtain
%\begin{equation*}
$s_\e \chi_0^{\e} \widetilde u^{\e} \stwoscale s  \widetilde u \chi_{\mathcal{O}} (\omega).$ 
%\end{equation*}
Furthermore, as a consequence of \eqref{kirill1000}, \eqref{igor10000} and \eqref{igor100000}, the following estimate holds for some $C>0:$
\begin{equation*}
\e \|\nabla z^{\e} \|_{L^2(S_0^{\e})}+\|z^{\e}\|_{L^2(S_0^{\e})} \leq C. 
\end{equation*}
Therefore, from Proposition \ref{kirill200} and Remark \ref{mira1000} we conclude that $z^{\e} \wtwoscale z \in L^2(S,W_0^{1,2}(\mathcal{O}))$ where the limit $z$ satisfies 
\begin{equation} 
\int_\mathcal{O} \nabla_{\omega} z(x,\cdot) \cdot \nabla_{\omega} v-s \int_\mathcal{O} z(x,\cdot) v=s \int_\mathcal{O} \widetilde u (x)  v \quad \forall v \in W_0^{1,2}(\mathcal{O}).  
\label{nada1000}
\end{equation}
 We also consider the problem
% \begin{eqnarray}

\begin{equation}
 %\nonumber & & 
 m^{\e} \in W_0^{1,2} (S_0^{\e}),\qquad
 %\\
%& & 
%\label{nada1000} 
\e^2\int_{S_0^{\e}} \nabla m^{\e} \cdot \nabla v-s^{\e}\int_{S_0^{\e}}m^{\e} v = s^{\e} \int_{S_0^{\e}} z^{\e}v \label{bukal10000}\quad\quad
 %,\\ \nonumber & & \hspace{+10ex} 
  \forall v \in W_0^{1,2}(S_0^{\e}). 
  \end{equation}
 %\end{eqnarray}  
In the same way as before we conclude that for some $C>0$: 
\begin{equation*}
\e \|\nabla m^{\e} \|_{L^2(S_0^{\e})}+\|m^{\e}\|_{L^2(S_0^{\e})} \leq C. 
\end{equation*}
Analogously, we conclude that $m^{\e} \wtwoscale m \in L^2(S,W_0^{1,2}(\mathcal{O}))$ which satisfies 
\begin{equation} \label{limitm1}
\int_\mathcal{O} \nabla_{\omega} m(x,\cdot) \cdot \nabla_{\omega} v-s \int_\mathcal{O} m(x,\cdot) v=s \int_\mathcal{O} z (x,\cdot) v \quad \forall v \in W_0^{1,2}(\mathcal{O}).  
\end{equation} 
By testing \eqref{nada100000} with $m^{\e}$ and \eqref{bukal10000} with $z^{\e}$ we conclude
$$\lim_{\e \to 0} \int_{S_0^{\e}} (z^{\e})^2= \lim_{\e \to 0} \int_{S_0^{\e}}\widetilde u^{\e} m^{\e} =\int_{S \times \Omega} \widetilde u m. $$ 
Finally, by testing \eqref{nada1000} with $m(x,\cdot)$ and \eqref{limitm1} with $z(x,\cdot)$ and integrating over $S$ we conclude
$$ \int_{S \times \Omega} \widetilde u m=\int_{S \times \Omega} z^2, $$ 
which completes the proof.
%from which we have the claim. This finishes the proof. 
\end{proof}

\section{Spectrum of the limit operator: examples} 
\label{examples_sec}

This section is devoted to the description of the spectrum of the limit operator. Since it crucially depends on the intrinsic properties of the microscopic part of the operator and the properties of the probability space, it does not seem feasible (at least at the current stage of research in this area) to provide a characterisation of the spectrum in a general setting. We shall consider several interesting, from the point of view of applications, examples of probability spaces and configurations of soft inclusions. The general example of a finite number of shapes of randomly varying size is described in Section \ref{misha20}. Then we consider the case of a single shape of fixed size in Section \ref{simple_example}, and the case of a single shape of randomly varying size in Section \ref{sec:7.3}, for which we provide the full description of the spectrum of the limit operator with the proofs. The characterisation of the spectrum in the general case of Section \ref{misha20} is analogous to the case of a single shape considered in Section \ref{sec:7.3}.
%We shall start by describing several interesting, from the point of view of applications, examples of probability spaces and configurations of soft inclusions. We then provide a description of the limit  spectrum for each of these examples. 

\subsection{The setting of finite number of shapes of varying size} \label{misha20}
Let $(\widetilde{\omega}_j)_{j\in \mathbf{Z}^n}$ be a sequence of  independent and identically distributed random vectors taking values in $\mathbf{N}_0^l \times [r_1, r_2],$ where $0<r_1\le r_2\le1$ and $(\widetilde{\Omega},\widetilde{\mathcal{F}},\widetilde{P} )$ is an appropriate probability space. We also assume that we have a finite number of shapes $Y_k\subset Y,$ $k \in \mathbf{N}_0^l,$ that represent the inclusions, 
%(subsets of $Y$) for the holes.
%; to each one assigned the number $k \in \mathbf{N}_0^l$. 
where the first and the second components of $\widetilde \o_j = (k_j, r_j)$ model the shape and the size,
% of the inclusions,
 respectively. We also set $Y_0= \emptyset$. 
On $\widetilde{\Omega}$ there is a natural shift $\widetilde{T}_z (\widetilde{\omega}_j)=(\widetilde{\omega}_{j-z})$, which is ergodic. We next state the discrete analogue of Lemma \ref{marko100}. 
%We next state lemma whose proof can be found in \cite{zhikov1}.
\begin{lemma}
	Assume that $\widetilde{\Omega}_0 \subseteq \widetilde{\Omega}$ is a set of full measure. Then there exists a subset $\widetilde{\Omega}_1 \subseteq \widetilde{\Omega}_0$ of full measure such that for each $\widetilde{\omega} \in \widetilde{\Omega}_1$, $z  \in \mathbf{Z}^n$ we have $\widetilde{T}_z \widetilde{\omega} \in \widetilde{\Omega}_0$. 	 
\end{lemma}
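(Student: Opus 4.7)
The plan is to imitate the construction in Lemma \ref{marko100}, exploiting the fact that here the index set $\mathbf{Z}^n$ is countable (unlike in the continuous case, where one has to integrate out the parameter).

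First, I would observe that since $\widetilde T_z$ is a measurable bijection and $\widetilde P$ is $\widetilde T_z$-invariant, for every fixed $z\in\mathbf{Z}^n$ the set $\widetilde T_z^{-1}\widetilde\Omega_0$ has full measure in $\widetilde\Omega$. Then I would simply take
\[
\widetilde\Omega_1 := \widetilde\Omega_0 \cap \bigcap_{z\in\mathbf{Z}^n} \widetilde T_z^{-1}\widetilde\Omega_0.
\]
Because $\mathbf{Z}^n$ is countable, this is a countable intersection of full-measure sets, hence itself of full measure and contained in $\widetilde\Omega_0$.

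To verify the required invariance, let $\widetilde\omega\in\widetilde\Omega_1$ and $z\in\mathbf{Z}^n$. By construction $\widetilde\omega\in \widetilde T_z^{-1}\widetilde\Omega_0$, which is exactly the statement $\widetilde T_z\widetilde\omega\in\widetilde\Omega_0$, as required.

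There is no real obstacle here: the discreteness of $\mathbf{Z}^n$ trivialises the measurability step that occupies most of the proof of the continuous analogue (Lemma \ref{marko100}), where one had to apply Fubini to the set $\{(x,\omega):T_x\omega\in\Omega_0\}$ to locate a full-measure $\omega$-section. In the present lattice setting, no such Fubini argument is necessary.
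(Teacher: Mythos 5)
Your proof is correct and is precisely the argument the paper leaves implicit (the lemma is stated without proof as the ``discrete analogue'' of Lemma \ref{marko100}): measure preservation of each $\widetilde T_z$ gives full measure to $\widetilde T_z^{-1}\widetilde\Omega_0$, and the countability of $\mathbf{Z}^n$ makes the intersection $\bigcap_{z\in\mathbf{Z}^n}\widetilde T_z^{-1}\widetilde\Omega_0$ a full-measure set with the required property, with no need for the Fubini-type section argument of the continuous case. The only cosmetic remark is that intersecting additionally with $\widetilde\Omega_0$ is redundant, since the term $z=0$ already gives $\widetilde T_0^{-1}\widetilde\Omega_0=\widetilde\Omega_0$.
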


We treat $Y:=[0,1)^n$ as a probability space with Lebesgue measure $dy$ and the standard algebra $\mathcal{L}$ of Lebesgue measurable sets, and define 
$$
\Omega=\widetilde{\Omega} \times Y,\quad \mathcal{F}= \widetilde{\mathcal{F}} \times \mathcal{L} ,\quad P= \widetilde{P} \times dy. 
$$
On $\Omega$ we define a dynamical system $T_x (\widetilde \omega, y)=(\widetilde T_{[x+y]} \widetilde \omega, x+y-[x+y])$. By $\mathcal{O} \subseteq \Omega$ we define the set $\mathcal{O}=\{(\widetilde \omega, y):\, \widetilde \o_0\in \mathbf{N}_0^l \times [r_1, r_2],\, y\in r_0 Y_{k_0}\}.$
%$\mathcal{O}=\cup_{k=1}^{l} \{k\} \times \cup_{R \in [\eta,1]} \{\eta_0=R\} \times RS_k$. 
It is easily seen that $\mathcal{O}$ is measurable. For a fixed $\o = (\widetilde{\o},y),$ the realisation $\mathcal O_\o$ consists of the inclusions $r_j Y_{k_j} + j -y,\,j\in\Z^n$. Next, we describe the generators $D_i,$ $i=1,2,\dots, n,$ in the present example. Taking $f \in W^{1,2}(\Omega)$
%{\color{blue} (I HAVE CHANGED $b$ to $f$.)} 
and using the above lemma, note that there exists a subset of full measure $\widetilde{\Omega}_1 \subseteq \widetilde{\Omega}$ such that for all
%{\color{blue}  
$\widetilde \omega \in \widetilde \Omega_1$ and $z \in \mathbf{Z}^n$ we have  $f(\widetilde T_z \widetilde \omega,\cdot) \in W^{1,2}(Y)$. It is clear that for $x\in Y+z-y$ one has $f(x,\o):= f(T_x\o) = f(\widetilde T_z \widetilde \omega,x-(z-y))$. Using this fact and the statement following (\ref{matthaus0}), we infer that
%can characterise the space $W^{1,2}(\Omega)$ as follows:
\begin{align*}
	W^{1,2}(\Omega)=\Bigl\{f\in L^2(\widetilde \Omega \times Y): \textrm{ for a.e. } \widetilde\omega \in \widetilde \Omega, \  f(\widetilde \omega, \cdot)&\in W^{1,2}(Y),\\[0.1em] f(\widetilde T_{z+e_k}\widetilde \omega,\cdot)\big|_{\{y_k=0\}}&=f(\widetilde T_{z}\widetilde \omega,\cdot)\big|_{\{y_k=1\}}\ \forall z \in \mathbf{Z}^n, k \in \{1,\dots,n\}\Bigr\}.
\end{align*} 
	%\\[0.4em]
%Therefore, we have 
and
\begin{equation} \label{marko1001} 
	(D_i f)(\widetilde \omega,y)=\partial_{y_i} f(\widetilde \omega,y),\qquad i=1,2,\dots, n.
	\end{equation}

\subsection{Simple example} 
\label{simple_example}
%We consider  a probability space $(\widetilde{\Omega},\widetilde{\mathcal{F}},\widetilde{P} )$ whose elements 

In this section we set $l=0,$ $r_1=r_2=1,$ so that $\mathbf{N}_0^l \times [r_1, r_2]=\{0,1\},$ and, by a standard procedure, see {\it e.g.} \cite{Shiryaev1996}, identify the elements of the probability space $\widetilde{\Omega}$ with
%describe the values of the $(\eta_j)_{j\in \mathbf{Z}^n}$
sequences $\widetilde \omega = (\widetilde\o_z)_{z\in \mathbf{Z}^n}$ whose components $\widetilde\o_z$
 %of independent and identically distributed random variables 
take values in the two-element set $\{0,1\}$.  
%We also define the probability space 
%\[
%\Omega=\widetilde{\Omega} \times Y,\quad\mathcal{F}= \widetilde{\mathcal{F}} \times \mathcal{L} ,\quad P= \widetilde{P} \times dy. 
%\]  
Let $Y_1$ be an open subset of $Y$ whose closure is contained in $Y$ (``soft inclusion") . The value $0$ or $1$ of $\widetilde\o_z,$ $z\in\mathbf{Z}^n,$ corresponds to the absence or the presence of the inclusion in the ``shifted cell" $Y+z,$ respectively. We also set
$$
\mathcal{O} =\bigl\{\o=(\widetilde{\o},y)\,:\, \widetilde \o_0 = 1,\, y\in Y_1\bigr\}\subseteq \O.
$$ 
Then, for a given  $\omega=(\widetilde{\omega}, y)\in\Omega,$ the realisation $\mathcal{O}_\o = \{ x\,:\, T_x(\widetilde \o, y) \in \mathcal{O}\}$ is the union of the sets (``inclusions") $Y_1+z-y$ over all $z\in\mathbf{Z}^n$ such that $\widetilde \o_z = 1$. For this example the space $W^{1,2}_0(\mathcal{O})$ consists of all functions of the form
\begin{equation}\label{10a}
v(\o) = v(\widetilde \o,y) = \left\{ \begin{array}{ll}
v_{\widetilde \o}(y), & (\widetilde \o,y)\in \mathcal{O},\mbox{ where } v_{\widetilde \o}\in W^{1,2}_0(Y_1),
\\[0.4em]
0 & \mbox{otherwise}.
\end{array}\right.
\end{equation}
It is also important to understand how one applies the stochastic gradient. For a function $v(\o) \in W^{1,2}_0(\mathcal{O})\subseteq L^2(\O)$ we have (see \eqref{marko1001})
\begin{equation}
\nabla_{\omega} v = \left\{ \begin{array}{ll}
\nabla_y v_{\widetilde \o}(y), & (\widetilde \o,y)\in \mathcal{O},\mbox{ where } v_{\widetilde \o}\in W^{1,2}_0(Y_1),
\\[0.4em]
0 & \mbox{otherwise}.
\end{array}\right.
\end{equation}

Consider formally the spectral problem for the limit operator:
\begin{align}
\label{16} 
&\int_S A_1^{\textrm{\rm hom}} \nabla u_0\cdot\nabla \varphi_0 = \l \int_S\bigl(u_0+\langle u_1 \rangle\bigr) \varphi_0  \qquad\qquad \forall \varphi_0 \in W^{1,2}_0(S), \\[0.7em]
&\int_\mathcal{O} \nabla_{\omega} u_1 (x, \cdot) \nabla_{\omega} \varphi_1 = \l \int_\mathcal{O}\bigl(u_0(x)+u_1(x,\cdot)\bigr)  \varphi_1 \qquad\qquad  \forall  \varphi_1 \in W^{1,2}_0(\mathcal{O}). 
\label{16_bis}
\end{align}
We write the solution to the ``microscopic" equation (\ref{16_bis}) in the form $u_1(x,\o) = \lambda u_0(x) v(\o)$, where (recall Remark \ref{marko1002}) 
%$v(\o)\in  W^{1,2}_0(A)$ solves the problem 
\begin{equation}
v\in  W^{1,2}_0(\mathcal{O}),\quad\qquad-\Delta_\o v  =  \lambda v + 1.
\end{equation}
In other words, $v$ is given by (\ref{10a}) with $v_{\widetilde \o}(y) $ satisfying 
\begin{equation}\label{19}
-\Delta_y v_{\widetilde \o}(y)  =  \lambda v_{\widetilde \o}(y) + 1,\qquad y\in Y_1,
\end{equation}
whenever $\widetilde \o$ such that $\widetilde\o_0 = 1$ and $v_{\widetilde \o} = 0$ otherwise. 

We label the eigenvalues of the operator in (\ref{19}) in the increasing order, where we repeat multiple eigenvalues, so that $\nu_j,$ $j\in{\mathbf N},$ and $\nu'_j,$ $j\in{\mathbf N},$ are, respectively, the eigenvalues whose eigenfunctions $\varphi_j$ have non-zero integral over $Y_1$ and the eigenvalues whose eigenfunctions $\varphi'_j$ have zero integral over $Y_1.$
%$\langle\varphi_j\rangle_y\neq 0$, and by $\nu_j'$  the eigenvalues with zero-mean eigenfunctions $\varphi_j'$, $\langle\varphi_j'\rangle_y=0$. 
Following \cite{Zhikov2000}, we write the solution to (\ref{19}) via the spectral decomposition
\begin{equation}\label{15}
v_{\widetilde \o}=  \sum_{j=1}^\infty
%\frac{\langle\varphi_j\rangle_y }{
(\nu_j-\lambda)^{-1}\biggl(\int_{Y_1}\varphi_j\biggr)\varphi_j,
%\qquad\qqaud\langle \cdot\rangle_y:=\int_{Y_1}\cdot
\end{equation}
and thereby 
\begin{equation}
\langle v \rangle_\O =  P\bigl(\{\widetilde\o:\,\widetilde\o_0=1\}\bigr)\int_{Y_1} v_{\widetilde \o} dy = P\bigl(\{\widetilde\o:\,\widetilde\o_0=1\}\bigr) \sum_{j=1}^\infty
%\frac{\langle\varphi_j\rangle_y^2 }{
(\nu_j-\lambda)^{-1}\biggl(\int_{Y_1}\varphi_j\biggr)^2.
\end{equation}

Substituting the obtained representation for $u_1$ into the ``macroscopic" equation (\ref{16}) yields
%we arrive at 
\begin{equation}
-\div A^{\rm hom}_1 \nabla u_0 = \beta(\lambda) u_0,\quad u_0 \in W^{1,2}_0(S),
\label{mac_eq}
\end{equation}
where
\begin{equation}
\beta(\lambda):= \lambda \big(1 + \lambda \langle v \rangle_\O
\big) = \lambda+\lambda^2P\bigl(\{\widetilde\o:\,\widetilde\o_0=1\}\bigr) \sum_{j=1}^\infty
(\nu_j-\lambda)^{-1}\biggl(\int_{Y_1}\varphi_j\biggr)^2
%\frac{\langle\varphi_j\rangle_y^2}{\nu_j-\lambda},
\label{beta}
\end{equation}
is a stochastic version of the ``Zhikov function'' $\beta$ in \cite{Zhikov2000}.

%%\begin{figure}%[h]
%%	\entering
%%	\includegraphics[width=0.7\textwidth]{beta.pdf}
%%	\caption{$\beta(\lambda )$, cf.	 \cite{Zhikov2000}.}
%%\end{figure}

Assume for the moment that $S = \R^n$. Then the intervals where $\beta(\l)\geq 0$ are the ``spectral bands" of $\mathcal A,$ and additionally a Bloch-type spectrum is given by 
%$\bigcup_j 
$\{\nu_j': j\in {\mathbf N}\}$. The set $\{\l:\, \beta(\l) <0\} \setminus
%\bigcup_j
\{\nu_j': j\in {\mathbf N}\}$ corresponds to the gaps in the spectrum of $\mathcal A$. 

In the setting of this paper, namely, for a bounded domain $S\subset{\mathbf R}^n,$ instead of each spectral band $\beta(\l)\geq 0$ lying to the left of $\nu_j$ we have a ``band'' of discrete spectrum: a countable set of eigenvalues 
\begin{equation}
%\label{32}
\bigl\{\l_{j,k}:\, \nu_{j-1} < \l_{j,k} < \nu_j,\,\beta(\l_{j,k}) = \mu_k\bigr\},
\label{lambdas}
\end{equation}
with the accumulation point at the right end  $\nu_j$ of each band, where $\mu_k$ are the eigenvalues of the operator  $-\div A^{\rm hom}_1 \nabla$ defined by the form
\begin{equation}
 \int_{S}A^{\rm hom}_1\nabla u\cdot\nabla v,\qquad u, v\in W^{1,2}_0(S).
 \label{A1_form}
 \end{equation}
% acting in $W^{1,2}_0(S)$.
 The Bloch-type spectrum of $\mathcal A$ consists of eigenvalues $\nu_j'$ of infinite multiplicity with eigenfunctions of the form $f(x)v_j'(\o)$ with $f\in L^2(S)$ and 
\begin{equation}%\label{10}
v_j'(\o) = v_j'(\widetilde\o,y)= \left\{ \begin{array}{ll}
\varphi_j'(y), & (\widetilde \o,y)\in \mathcal{O},
\\[0.4em]
0 & \mbox{otherwise}.
\end{array}\right.
\end{equation}
Summarising, the spectrum of $\mathcal A$ is given by
\begin{equation}
\sigma(\mathcal A) = \Big(\bigcup_j\bigl\{\nu_j', 
%\}\Big) \cup \Big(\bigcup_j\{
\nu_j\bigr\}\Big) \cup
%\Big(\bigcup_{j,k}
\bigl\{\l_{j,k}: j,k\in{\mathbf N}\bigr\}.
%\Big).
\end{equation}
%{\color{red} Joined together the first two unions in the old version.}

\subsection{More advanced example} 
\label{sec:7.3}

Here 
%consider an example where 
we allow the inclusions to randomly change size, so that $l=0,$ $0<r_1<r_2<1.$ 
%%identify the elements of the probability space $\widetilde{\Omega}$ with
%describe the values of the $(\eta_j)_{j\in \mathbf{Z}^n}$
%%sequences $\widetilde \omega = (\widetilde\o_z)_{z\in \mathbf{Z}^n}\in\widetilde \O$ whose components $\widetilde\o_z$
 %of independent and identically distributed random variables 
%%that values in the two-element set $\{0,1\}$.  
By analogy with the previous section, we assume that $\widetilde \O$ consists of sequences 
$\widetilde \omega=(\widetilde\o_z)_{z\in \mathbf{Z}^n}$ such that $\widetilde\o_z \in \{0\}\cup [r_1,r_2],$ $z\in{\mathbf Z}^n.$ We also assume that the restriction to to $[r_1,r_2]$ of the probability measure  on $\widetilde{\Omega}$ 
%{\color{red}What do we mean?} 
is absolutely continuous with respect to Lebesgue measure. As before, consider $Y_1\subset Y,$ and denote by $Y_{1,r} : = r (Y_1 - y^{\rm c}) + y^{\rm c}$, where $y^{\rm c}$ is the centre of $Y$, the ``scaled inclusion", requiring that $\overline{Y_{1,r_2}} \subset Y$, in order for the extension property in Assumption \ref{kirill100} to hold. The values $0$ or $r\in [r_1,r_2]$ of $\widetilde\o_z$ correspond to the absence of an inclusion or the presence of the inclusion $Y_{1,r}$ in the cell $Y+z,$ respectively. Furthermore, define $\mathcal{O}: = \{\o=(\widetilde{\o},y)\,:\, y\in Y_{1,\widetilde{\o}_0}\}\subseteq \O$. Then a realisation $\mathcal{O}_\o = \{ x\,:\, T_x(\widetilde \o, y) \in \mathcal{O}\}$ is the union of the inclusions $Y_{1,\widetilde{\o}_z}+z-y$ for all 
$z\in \Z^n,$ where in the case $\widetilde \o_z = 0$ we set $Y_{1,\widetilde{\o}_z}=\emptyset.$ The space $W^{1,2}_0(\mathcal{O})$ consists of functions of the form
\begin{equation}
\label{10}
v(\o) = v(\widetilde \o,y) = \left\{ \begin{array}{ll}
v_{\widetilde \o}(y), & (\widetilde \o,y)\in \mathcal{O},\mbox{ where } v_{\widetilde \o}\in W^{1,2}_0(Y_{1,\widetilde{\o}_0}),
\\[0.4em]
0, & \mbox{otherwise}.
\end{array}\right.
\end{equation} 

Consider the spectral problem for $u_1,$ namely 
\begin{equation*}
\int_\mathcal{O} \nabla_{\omega} u_1 (x, \cdot) \nabla_{\omega} \varphi = \lambda \int_\mathcal{O}\bigl(u_0(x)+u_1(x,\cdot)\bigr)  \varphi\qquad \forall\varphi\in W^{1,2}_0(\O),
\end{equation*}
and separate the variables, as in Section \ref{simple_example}: $u_1(x,\o) = \lambda u_0(x) v(\o),$ where the function $v$ satisfies
\begin{equation}
\label{omega_grad}
\int_\mathcal{O} \nabla_{\omega} v\cdot\nabla_{\omega} \varphi =  \int_\mathcal{O} (1+\lambda v) \varphi\qquad \forall\varphi\in W^{1,2}_0(\O).
\end{equation}
The stochastic gradient is given by
\begin{equation*}
\nabla_{\omega} v = \left\{ \begin{array}{ll}
\nabla_y v_{\widetilde \o}(y), & (\widetilde \o,y)\in \mathcal{O},\mbox{ where } v_{\widetilde \o}\in W^{1,2}_0(Y_{1,\widetilde{\o}_0}),
\\[0.4em]
0, & \mbox{otherwise},
\end{array}\right.
\end{equation*}
%{\color{red}(?)Hence}
and therefore the problem (\ref{omega_grad})  is equivalent to 
\begin{equation}\label{23}
\int\limits_{\{{\widetilde{\o}_0\in[r_1,r_2]\}}} \int\limits_{Y_{1,\widetilde{\o}_0}} \nabla_y v \nabla_y \varphi \,dy \,dP(\widetilde{\o}) =  \int\limits_{\{{\widetilde{\o}_0\in[r_1,r_2]\}}} \int\limits_{Y_{1,\widetilde{\o}_0}}  (1+\lambda v)  \varphi \,dy \,dP(\widetilde{\o}).
\end{equation}

For each $r\in[r_1, r_2],$ the eigenvalues $\nu_{j,r}, \nu_{j,r}'$  and (orthonormal) eigenfunctions $\varphi_{j,r}, \varphi_{j,r}'$ of the operator $-\Delta_y$ acting in $W^{1,2}_0(Y_{1,r})$ are obtained by scaling the   eigenvalues and eigenfunctions of  $-\Delta_y$ acting in $W^{1,2}_0(Y_{1})$, in particular, $\nu_{j,r} = r^{-2} \nu_j$, $\nu_{j,r}' = r^{-2} \nu_j'$. Therefore, the formula (\ref{15}) with  $\nu_j,$ $\varphi_j$  replaced by $\nu_{j,r},$ $\varphi_{j,r}$ gives the solution to %the problem
\begin{equation}\label{24}
-\Delta_y v_r = 1+\lambda v_r, \qquad v_r\in W^{1,2}_0(Y_{1,r}).
\end{equation}
%has the solution given by 

If $0<r_1\le r_2$ and
%are
%the interval $[r_1,r_2]$ is 
%chosen so that 
the set 
%\begin{equation*}
%\bigcup_{j\in \N,\, r \in [r_1,r_2]}
$\{\nu_{j,r}: j\in \N,\, r \in [r_1,r_2]\}$
%\end{equation*}
has gaps, then for 
$\lambda\in \R\setminus \bigl\{\nu_{j,r}: j\in \N,\, r \in [r_1,r_2]\bigr\}$
%\bigcup_{j\in \N,\, r \in [r_1,r_2]}\{\nu_{j,r}\}$ 
the solution to (\ref{23}) is given by (\ref{10}),
%{\color{red} I removed equation here}
%\begin{equation}
%\label{26}
%v(\o) = v(\widetilde \o,y) = \left\{ \begin{array}{ll}
%v_{\widetilde \o_0}(y), & (\widetilde \o,y)\in \mathcal{O},\, v_{\widetilde \o_0}\in W^{1,2}_0(Y_{1,\widetilde{\o}_0}),
%\\[0.4em]
%0, & \mbox{otherwise},
%\end{array}\right.
%\end{equation}
where the functions $v_{\widetilde \omega}(y)$ solve (\ref{24}) with $r= \widetilde \o_0$. Substituting it into the spectral problem for (\ref{16}) yields the problem (\ref{mac_eq})
%\begin{equation*}
%-\div A^{\rm hom}_1 \nabla u_0 = \beta(\lambda) u_0,
%\end{equation}
with the Zhikov-type function $\beta$ given by ({\it cf.} (\ref{beta}))
\begin{equation}
\beta(\lambda):= \lambda \big(1 + \lambda \langle v \rangle_\O
\big) = \,\lambda\,+\, \lambda^2 \int\limits_{\{\widetilde{\o}_0\in[r_1,r_2]\}} \sum_{j=1}^\infty
%\frac{\langle\varphi_{j,\widetilde{\o}_0}\rangle_y^2}{} 
\bigl(\nu_{j,\widetilde{\o}_0}-\lambda\bigr)^{-1}\biggl(\int_{Y_{1, \widetilde{\o}_0}}\varphi_{j, \widetilde{\o}_0}\biggr)^2
\,dP(\widetilde{\o}).
\label{beta_1}
\end{equation}
The integral in (\ref{beta_1}) is well defined for $\lambda\in \R\setminus
% \bigcup_{j\in \N,\, r \in [r_1,r_2]}
\{\nu_{j,r}: j\in \N,\, r \in [r_1,r_2]\},$ and the description of the spectrum on the intervals where $\beta(\l)>0$ follows Section \ref{simple_example}.

\begin{theorem}\label{thm:4.1} Under the assumptions 
%\color{red}What?\color{black} 
of the current subsection, the spectrum of $\mathcal A$ is given by
	\begin{equation*}
	\sigma (\mathcal A) = \bigg(\bigcup_{j\in \N,\, r \in [r_1,r_2]} \bigl\{\nu_{j,r}, \nu_{j,r}'\bigr\}\Bigg) \cup 
	%\Bigg(
	%\bigcup_{j,k\in \N}
	\bigl\{\l_{j,k}: j,k\in \N\bigr\}
	%\Bigg).
	\end{equation*}
	where for each $k,$ the values $\l_{j,k}$ are solutions to $\beta(\l_{j,k}) = \mu_k,$ see (\ref{lambdas}). 
	%(\ref{32}).
%The intervals $[\nu_{j,r_2},\nu_{j,r_1}],$ $[\nu_{j,r_2}',\nu_{j,r_1}'],$ $j\in{\mathbf N},$ form the continuous spectrum of $\mathcal A$, and 
The point spectrum of the operator $\mathcal A$ is given by $\bigl\{\l_{j,k}: j,k\in \N\bigr\}.$
%$\{\l_{j,k}\}$ 
%are the eigenvalues of $\mathcal A,$ defined by 
	%\begin{equation}\label{32}
	%\bigl\{\l_{j,k}: \nu_{j-1,r_1} < \l_{j,k} < \nu_{j,r_2}, \beta(\l_{j,k}) = \mu_k\bigr\},
	%\end{equation}
	%%(Recall that $\mu_k,$ $k=1,2,\dots,$ are the eigenvalues of the operator  $-\div A^{\rm hom}_1 \nabla$ defined by the form (\ref{A1_form}.)
%	acting in $W^{1,2}_0(S)$). 
	%%%In particular, those $\l_{j,k}$ that belong to the set 
	%$\bigcup_{j,r}
	%%%$\{\nu_{j,r}': j\in \N,\, r \in [r_1,r_2]\}$ 
	%{\color{red} Include $\nu_j$ here as well?}
	%%%are embedded eigenvalues. 
	%{\color{red} Is this not a definition rather than a claim?}
\end{theorem}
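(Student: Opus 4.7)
My plan is to decompose the spectral analysis into three parts corresponding to the three subsets on the right-hand side, working from the decoupled ``macroscopic-microscopic'' formulation \eqref{16}--\eqref{16_bis}.

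First I would handle the ``microscopic'' part $\bigcup_{j,r}\{\nu_{j,r},\nu_{j,r}'\}\subseteq\sigma(\mathcal{A})$. By the standing assumption of Section \ref{spectral_convergence} we have $\sigma(-\Delta_\omega)\subseteq\sigma(\mathcal{A})$, and by Theorem \ref{marko1000} together with Corollary \ref{marko101} the operator $-\Delta_\omega$ has spectrum $\overline{\bigcup_n \sigma(-\Delta_{\mathcal{O}^n_\omega})}$. The key step is to show that for a.e.\ $\omega$ the set of realised sizes $\{\widetilde{\omega}_z: z\in\mathbf{Z}^n\}$ is dense in $[r_1,r_2]$: this is a direct consequence of ergodicity applied to the i.i.d.\ sequence $(\widetilde{\omega}_z)$ whose marginal is absolutely continuous on $[r_1,r_2]$. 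Since $\nu_{j,r}=r^{-2}\nu_j$ and $\nu_{j,r}'=r^{-2}\nu_j'$ depend continuously on $r$, the closure of $\bigcup_n\sigma(-\Delta_{\mathcal{O}^n_\omega})$ equals $\bigcup_{j,r}\{\nu_{j,r},\nu_{j,r}'\}$, which gives the desired inclusion.

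Next I would establish that $\lambda_{j,k}\in\sigma(\mathcal{A})$ by constructing explicit eigenfunctions. For a Dirichlet eigenpair $(\mu_k,\phi_k)$ of the form \eqref{A1_form} and for $\lambda=\lambda_{j,k}$ with $\beta(\lambda_{j,k})=\mu_k$, set $u_0:=\phi_k$ and $u_1(x,\omega):=\lambda\,u_0(x)\,v(\omega)$, where $v$ is defined through the decomposition \eqref{10} with $v_{\widetilde{\omega}}$ solving \eqref{24} for $r=\widetilde{\omega}_0$, written via the spectral series \eqref{15}. The series converges in $L^2(\Omega)$ since $\lambda\notin\{\nu_{j,r}:j\in\mathbf{N},\,r\in[r_1,r_2]\}$, which in turn follows because the strict inequalities $\nu_{j-1}<\lambda_{j,k}<\nu_j$ in \eqref{lambdas} keep $\lambda_{j,k}$ a positive distance from the spectrum of $-\Delta_\omega$. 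Verifying $\mathcal{A}(u_0+u_1)=\lambda_{j,k}(u_0+u_1)$ then reduces, via \eqref{zadnje}, to checking \eqref{16}--\eqref{16_bis}, which is immediate from the construction of $v$ and the choice of $\mu_k$.

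To show the reverse inclusion $\sigma(\mathcal{A})\subseteq\bigl(\bigcup_{j,r}\{\nu_{j,r},\nu_{j,r}'\}\bigr)\cup\{\lambda_{j,k}\}$, pick $\lambda$ outside this set and construct the resolvent $(\mathcal{A}-\lambda)^{-1}$ explicitly. Since $\lambda\notin\{\nu_{j,r}\}$, the microscopic equation \eqref{16_bis} with right-hand side $f$ is uniquely solvable by the spectral representation analogous to \eqref{15}, giving $u_1=u_1(x,\omega;u_0,f)$ affine in $u_0$. Substituting into \eqref{16} produces the scalar equation $-\div A_1^{\rm hom}\nabla u_0-\beta(\lambda)u_0=F(x;f)$ in $W_0^{1,2}(S)$; since $\beta(\lambda)\neq\mu_k$ for all $k$, this is uniquely solvable, and the operator-theoretic bookkeeping of Remark \ref{marko1002} together with Lemma \ref{kirill101} turns this into boundedness of $(\mathcal{A}-\lambda)^{-1}$ on $H$.

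Finally, for the point-spectrum claim I would argue that $\nu_{j,r}$ and $\nu_{j,r}'$ are not eigenvalues. Any eigenfunction at $\nu_{j,r}'$ must, after separation, be supported on the set $\{\widetilde{\omega}:\widetilde{\omega}_0=r\}$, which has probability zero because the marginal is absolutely continuous on $[r_1,r_2]$; a similar argument (taking into account the compatibility condition coming from the macroscopic equation when $\varphi_{j,r}$ has nonzero mean) handles $\nu_{j,r}$. Thus the only eigenvalues are the $\lambda_{j,k}$, whose eigenfunctions constructed above are nonzero by $\phi_k\neq 0$. I expect the main obstacle to be the last step: ruling out $\nu_{j,r}$ and $\nu_{j,r}'$ from the point spectrum in a measure-theoretically clean way, since one must carefully exploit the continuity of $r\mapsto\nu_{j,r}$ together with the absolute continuity of the size distribution, and disentangle possible macroscopic contributions at these resonances.
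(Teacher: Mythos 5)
Your overall architecture matches the paper's: compute the spectrum of the microscopic operator $-\Delta_\omega$, produce eigenfunctions $u=\psi_k+\lambda\psi_k v$ at the points $\beta(\lambda)=\mu_k$, build the resolvent explicitly away from $\sigma(-\Delta_\omega)\cup\beta^{-1}(\sigma(-\div A_1^{\rm hom}\nabla))$, and exclude $\nu_{j,r},\nu_{j,r}'$ from the point spectrum. However, your first step contains a genuine circularity. The inclusion $\sigma(-\Delta_\omega)\subseteq\sigma(\mathcal A)$ is \emph{not} available as a hypothesis here: Section \ref{spectral_convergence} explicitly adopts it as a standing assumption \emph{because} it is ``shown for the examples of Section \ref{examples_sec}'' --- i.e.\ it is part of what Theorem \ref{thm:4.1} and its supporting lemmas must establish. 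The paper proves it as Lemma \ref{lem:4.4}, and the content there is nontrivial: for $\lambda=\nu_{j_0,r_0}'$ one needs a genuine Weyl sequence for $\mathcal A$ (taking $u_0^\delta\equiv 0$ and $u_1^\delta=f(x)v_\delta(\omega)$ with $v_\delta$ concentrated on sizes near $r_0$), and for $\lambda=\nu_{j_0,r_0}$ one argues by contradiction that a bounded resolvent would force $u_1=(f+\lambda u_0)v$ with $v$ given by the spectral series \eqref{55}, whose $L^2(\mathcal{O})$-norm is bounded below by a constant times $\int|\widetilde\omega_0-r_0|^{-2}\,dP(\widetilde\omega)=\infty$. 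This divergence estimate is the key quantitative ingredient and it appears nowhere in your proposal.

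The same missing ingredient resurfaces in your last step. To show that $\nu_{j,r}$ is not an eigenvalue you must rule out the case $u_0\not\equiv 0$, and the only way to do so is again the non-$L^2$-solvability of $-\Delta_\omega v=\lambda v+1$ at $\lambda=\nu_{j_0,r_0}$, i.e.\ the same divergent integral; your appeal to ``the compatibility condition coming from the macroscopic equation'' does not by itself produce a contradiction. (For $\nu_{j,r}'$ the correct argument is different from the one you sketch: there $v$ \emph{does} exist in $L^2$, so $u_0$ would have to satisfy $-\div A_1^{\rm hom}\nabla u_0=\beta(\lambda)u_0$ with $\beta(\lambda)\notin\sigma(-\div A_1^{\rm hom}\nabla)$, forcing $u_0=0$ and then reducing to the absence of eigenvalues of $-\Delta_\omega$, which is Lemma \ref{lem:4.2}.) On the positive side, your route to identifying $\sigma(-\Delta_\omega)$ via Corollary \ref{marko101}, Theorem \ref{marko1000} and the density of realised sizes is a legitimate alternative to the paper's direct Weyl-sequence computation in Lemma \ref{lem:4.2}, provided you verify the hypotheses of Theorem \ref{marko1000} (in particular \eqref{implication}); but it only yields the set $\sigma(-\Delta_\omega)$, not the inclusion of that set into $\sigma(\mathcal A)$, which is the step you have skipped.
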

It is clear that if the set $\bigcup_{j\in \N,\, r \in [r_1,r_2]}\{\nu_{j,r}, \nu_{j,r}'\}$ has gaps, then $\sigma (\mathcal A)$ also has gaps. We are going to prove the theorem in several steps formulated in the following lemmas. We begin by studying the spectrum of the ``microscopic'' part of the limit operator.

\begin{lemma}\label{lem:4.2} The spectrum of the operator $-\Delta_\o$ (see (\ref{delta_omega_form})) is given by
	\begin{equation*}
	\sigma (-\Delta_\o) = \bigcup_{j\in \N,\, r \in [r_1,r_2]}\bigl\{\nu_{j,r}, \nu_{j,r}'\bigr\}
	\end{equation*}
	and does not contain eigenvalues of $-\Delta_\o.$
	% {\color{red}???}
\end{lemma}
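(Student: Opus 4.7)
The strategy is to identify $\Sp(-\Delta_\omega)$ with the closure of $\bigcup_n \Sp(-\Delta_{\mathcal{O}^n_\omega})$ via Corollary \ref{marko101} and Theorem \ref{marko1000}, and then to make this union explicit using the structure of the inclusions as scaled copies $Y_{1,r}$ of the fixed reference inclusion $Y_1$, with $r$ ranging over the random sizes $\widetilde{\omega}_z,$ $z\in\Z^n.$ First I would verify that the assumptions of these results hold in the present setting: Assumption \ref{kirill100} is immediate, since every inclusion is a scaled copy of the Lipschitz domain $Y_1$ with diameters in the interval $r_1\diam Y_1\leq \diam \leq r_2\diam Y_1;$ the hypothesis of Lemma \ref{marko110} and the $L^\infty$-bound \eqref{implication} then follow by scaling from the corresponding properties on the fixed domain $Y_1$ together with elliptic regularity.

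For the inclusion $\sigma(-\Delta_\omega) \supseteq \bigcup_{j\in\N,\,r\in[r_1,r_2]}\{\nu_{j,r}, \nu'_{j,r}\},$ I would note that for a.e. $\omega=(\widetilde{\omega},y)$ the spectra of the Laplacians on the inclusions are exactly $\{\nu_{j,\widetilde{\omega}_z}, \nu'_{j,\widetilde{\omega}_z}: j\in\N,\ z\in\Z^n,\ \widetilde{\omega}_z\neq 0\},$ with $\nu_{j,r}=r^{-2}\nu_j$ and $\nu'_{j,r}=r^{-2}\nu'_j.$ By ergodicity of the shift on $\widetilde{\Omega}$ combined with the absolute continuity (with an a.e.\ positive density on $[r_1,r_2]$) of the marginal distribution of $\widetilde{\omega}_0,$ the countable collection $\{\widetilde{\omega}_z: z\in\Z^n\}\cap[r_1,r_2]$ is a.s.\ dense in $[r_1,r_2].$ The continuity of $r\mapsto\nu_{j,r},\nu'_{j,r}$ and Theorem \ref{marko1000} then give the inclusion. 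The reverse inclusion is supplied by Corollary \ref{marko101}; closedness of the asserted set follows from the fact that it is the countable union of the closed intervals $[\nu_j/r_2^2,\nu_j/r_1^2]$ and $[\nu'_j/r_2^2,\nu'_j/r_1^2],$ which diverge to $+\infty$ and hence intersect any bounded set in only finitely many members.

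It remains to show that $-\Delta_\omega$ has no eigenvalues. Suppose for contradiction that $u\in W_0^{1,2}(\mathcal{O})$ is a nonzero eigenfunction with eigenvalue $\lambda.$ Writing $u$ in the form \eqref{10} and testing against functions $\varphi(\widetilde{\omega},y)=f(\widetilde{\omega})\psi_{\widetilde{\omega}_0}(y),$ where $f\in L^2(\widetilde{\Omega})$ is arbitrary and $\psi_r\in C_c^\infty(Y_{1,r})$ depends measurably on $r$ (for example, obtained by rescaling a fixed $\psi\in C_c^\infty(Y_1)$), and then varying the family $\{\psi_r\}$ over a countable collection that is $W_0^{1,2}$-dense on each fiber, one arrives at the fiberwise identity
\[
  -\Delta_y u_{\widetilde{\omega}} = \lambda u_{\widetilde{\omega}}\ \text{on}\ Y_{1,\widetilde{\omega}_0},\qquad u_{\widetilde{\omega}}\in W_0^{1,2}(Y_{1,\widetilde{\omega}_0}),\qquad \text{for a.e.}\ \widetilde{\omega}.
\]
For fixed $\lambda,$ the resonant set $R_\lambda:=\{r\in[r_1,r_2]: \lambda\in\Sp(-\Delta_{Y_{1,r}})\}=\{r:r^2\lambda\in\{\nu_j\}_j\cup\{\nu'_j\}_j\}$ is countable, hence Lebesgue-null and, by absolute continuity of the marginal, also $\widetilde P$-null. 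Consequently $u_{\widetilde{\omega}}=0$ for a.e.\ $\widetilde{\omega},$ contradicting $u\neq 0$ in $L^2(\mathcal{O}).$

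The principal technical point is the rigorous passage from the abstract eigenvalue equation in $W_0^{1,2}(\mathcal{O})$ to the fiberwise one; the key ingredients are the product structure of $\mathcal{O}$ and $\Omega,$ which makes the test-function space a direct integral over $\widetilde{\Omega},$ and the existence of a countable family of test functions on each fiber obtained by scaling a fixed smooth function on $Y_1.$ A secondary subtlety is the density step: absolute continuity of the marginal distribution alone does not guarantee that $[r_1,r_2]$ is the full support, so the statement implicitly relies on the density being positive a.e.\ on $[r_1,r_2],$ which we assume.
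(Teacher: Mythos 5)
Your proof is correct, but it reaches the identity $\sigma(-\Delta_\omega)=\bigcup_{j\in\N,\,r\in[r_1,r_2]}\{\nu_{j,r},\nu'_{j,r}\}$ by a genuinely different route from the paper. You route the identification through the general machinery of Sections \ref{lim_eq}--\ref{inclusions_completeness}: Corollary \ref{marko101} and Theorem \ref{marko1000} give $\Sp(-\Delta_\omega)=\overline{\bigcup_{n}\Sp(-\Delta_{\mathcal{O}^n_\omega})}$, and you then compute the right-hand side using the a.s.\ density of the realised sizes in $[r_1,r_2]$ and the scaling law $\nu_{j,r}=r^{-2}\nu_j$. This obliges you to verify the hypotheses of Theorem \ref{marko1000} (Assumption \ref{kirill100}, the boundary-layer estimate of Lemma \ref{marko110}, and the $L^\infty$ bound \eqref{implication}), which you correctly reduce to the fixed domain $Y_1$ by scaling, plus a Borel--Cantelli-type argument for the density of $\{\widetilde\omega_z\}$ in $[r_1,r_2]$. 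The paper instead proves the lemma directly and self-containedly, exploiting the product structure $\Omega=\widetilde\Omega\times Y$: for $\lambda=\nu_{j_0,r_0}$ it builds an explicit Weyl sequence $v_\delta(\widetilde\omega,y)=w_\delta(\widetilde\omega_0)\varphi_{j_0,\widetilde\omega_0}(y)$ with $w_\delta$ concentrating at $r_0$, and for $\lambda$ outside the union it solves the resolvent equation fibrewise with the uniform bound $\|v_{\widetilde\omega}\|_{L^2(Y_{1,\widetilde\omega_0})}\le d^{-1}\|f(\widetilde\omega,\cdot)\|_{L^2(Y_{1,\widetilde\omega_0})}$, $d=\dist\bigl(\lambda,\bigcup_{j,r}\{\nu_{j,r},\nu'_{j,r}\}\bigr)$. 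The direct route is more elementary, independent of the measurability constructions behind Theorem \ref{marko1000}, and produces the explicit resolvent estimate reused in the later lemmas (though for a self-adjoint operator that estimate also follows abstractly once the spectrum is identified); your route buys a confirmation that this example genuinely fits the general framework \eqref{igor100000} assumed in Section \ref{spectral_convergence}. Your no-eigenvalue argument is essentially the paper's --- an eigenfunction would be supported on the null set of resonant sizes --- and is in fact slightly more careful, since it handles via countability a $\lambda$ that is resonant for more than one radius. Finally, you are right to flag that both arguments implicitly require the marginal law of $\widetilde\omega_0$ to have full support in $[r_1,r_2]$ (e.g.\ an a.e.\ positive density): the paper's Weyl sequence likewise needs positive mass in every neighbourhood of $r_0$ in order to be normalisable in $L^2(\Omega)$.
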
 
\begin{proof}
	Let $\l =  \nu_{j_0,r_0}$  for some $j_0\in\N,r_0\in [r_1,r_2],$ and assume that $v\in W^{1,2}_0(\mathcal{O})$ is an eigenfunction corresponding to $\l,$ {\it i.e.}
	%\begin{equation*}
	$-\Delta_{\omega} v = \l v.$ (For $\l =  \nu_{j_0,r_0}'$ argument is similar.) 
	%\end{equation*} 
	Then $v$ is of the form (\ref{10}), where 
	%	\begin{equation*}
	$-\Delta v_{\widetilde \o} = \l v_{\widetilde \o}$ in 
	%\mbox{ in } 
	$Y_{1,\widetilde \o_0},$ 
	%each $v_{\widetilde \o_0}$, 
	whenever $\widetilde \o_0\in [r_1,r_2].$ 
	%solves the equation 
%	\end{equation*} 
	But $\l$ is only an eigenvalue of the operator $-\Delta$ acting in $W^{1,2}_0(Y_{1,\widetilde \o_0})$ 
	%{\color{red} ??? Other places like this?} 
	if $\widetilde \o_0 = r_0$, hence
	\begin{equation}%\label{10}
	v(\o) = v(\widetilde \o,y) = \left\{ \begin{array}{ll}
	\varphi_{j_0,r_0}(y), & \widetilde \o_0 = r_0,\,y\in Y_{1,r_0},
	\\[0.4em]
	0 & \mbox{otherwise}.
	\end{array}\right.
	\end{equation}
	It remains to observe that $\{\widetilde \o_0 = r_0,\,y\in Y_{1,r_0}\}$ is a set of measure zero in $\Omega$ and hence $\|v\|_{L^2(\O)}=0$. The second claim of the lemma follows.
	
	Now we show that $\l\in\sigma (-\Delta_\o)$ by constructing  a Weyl sequence. Without loss of generality we can assume that $r_0\in (r_1,r_2)$.  For small enough $\delta>0$ we choose an $L^2$-function $w_\delta=w_\delta(r)$ such that $\supp w_\delta \subseteq (r_0-\delta,r_0+\delta)$ and $\|w_\delta\|_{L^2(r_0-\delta,r_0+\delta)}=1,$ {\it e.g.} we can choose $w_\delta$ to be equal to a constant proportional to $\delta^{-1/2}$ on  $(r_0-\delta,r_0+\delta)$. Consider the sequence     
	\begin{equation*}%\label{10}
	v_\delta(\o) = v_\delta(\widetilde \o,y) = \left\{ \begin{array}{ll}
	w_\delta(\widetilde \o_0)\varphi_{j_0,\widetilde \o_0}(y), & (\widetilde \o,y)\in \mathcal{O},
	\\[0.4em]
	0 & \mbox{otherwise}.
	\end{array}\right.
	\end{equation*}
	We have $v_\delta\in W^{1,2}_0(\mathcal{O})$, $\|v_\delta\|_{L^2(\Omega)}=1$  and 
	\begin{equation*}
	-\Delta_\o v_\delta(\o, y) = - w_\delta(\widetilde \o_0) \Delta \varphi_{j_0,\widetilde \o_0}(y) =\nu_{j_0,\widetilde \o_0} w_\delta(\widetilde \o_0)\varphi_{j_0,\widetilde \o_0}(y),
	\end{equation*}
	hence
	%\begin{multline}
	\begin{equation*}
	\bigl\|-\Delta_\o v_\delta-\l v_\delta\bigr\|^2_{L^2(\mathcal{O})} =
	% \\ 
	\int\limits_{\{{\widetilde{\o}_0\in[r_1,r_2]\}}} \int\limits_{Y_{1,\widetilde{\o}_0}} \Big((\nu_{j_0,\widetilde \o_0} - \nu_{j_0,r_0})w_\delta(\widetilde \o_0)\varphi_{j_0,\widetilde \o_0}(y)\Big)^2 \,dy \,d\widetilde P(\widetilde{\o}) \to 0,\qquad \delta \to 0.
	\end{equation*}
	%\end{multline}
	%as $\delta \to 0.$ 
	It follows that $v_\delta$ is a Weyl sequence for  $\l = \nu_{j_0,r_0}$.
	
	It remains to prove that $\l$ is in the resolvent set whenever $\l\notin\bigcup_{j\in \N,\, r \in [r_1,r_2]} \{\nu_{j,r}, \nu_{j,r}'\}$ . Assume the contrary and let $f\in L^2(\mathcal{O}),$ then the resolvent equation
	%\begin{equation*}
	$-\Delta_\o v- \l v = f$
%	\end{equation*}
	has a unique solution given by (\ref{10}) with $v_{\widetilde \o}$ solving 
	%\begin{equation*}
	$-\Delta v_{\widetilde \o}(y)- \l v_{\widetilde \o}(y) = f(\widetilde \o, y),\ $ $y\in Y_{1,\widetilde \o_0}.$
	%\end{equation*}
Moreover, since 
\[
d:={\rm dist}\bigg( \l,\ \bigcup_{j\in \N,\, r \in [r_1,r_2]}\bigl\{\nu_{j,r}, \nu_{j,r}'\bigr\}\bigg) >0,
\]
we have  
%\begin{equation*}
$\|v_{\widetilde \o}\|_{L^2(Y_{1,\widetilde \o})} \leq d^{-1} \|f(\widetilde \o, \cdot)\|_{L^2(Y_{1,\widetilde \o_0})},$
%\end{equation*}
and it follows immediately that 
%\begin{equation*}
$\|v\|_{L^2(\mathcal{O})} \leq d^{-1} \|f\|_{L^2(\mathcal{O})},$
%\end{equation*}
which concludes the proof.
\end{proof}

Next, we focus on the spectrum of $\mathcal A$.

\begin{lemma} 
\label{lem:4.4}
The inclusion $\sigma (-\Delta_\o)\subset  \sigma (\mathcal A)$  holds.
	%%%\begin{equation}
	%\bigcup_{j\in \N,\, r \in [r_1,r_2]}
	%%%\bigl\{\nu_{j,r}': j\in \N,\, r \in [r_1,r_2]\bigr\}\subset \sigma (\mathcal A),\qquad\qquad 
	%\bigcup_{j\in \N,\, r \in [r_1,r_2]}
	%%%\bigl\{\nu_{j,r}: j\in \N,\, r \in [r_1,r_2]\bigr\}\subset \sigma(\mathcal A).
	%%%\end{equation}
	%hold.
\end{lemma}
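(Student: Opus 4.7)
The plan is to show that every $\lambda\in\sigma(-\Delta_\omega)$ lies in $\sigma(\mathcal{A})$ by lifting a Weyl sequence for the microscopic operator to a Weyl sequence for $\mathcal{A}$. Since $\sigma(\mathcal{A})$ is closed, by Lemma \ref{lem:4.2} it suffices to treat $\lambda=\nu_{j_0,r_0}$ or $\lambda=\nu'_{j_0,r_0}$ for arbitrary $j_0\in\mathbf{N}$ and $r_0\in(r_1,r_2)$ (the endpoint cases $r_0\in\{r_1,r_2\}$ being analogous). Fix such a $\lambda$, and reuse the construction from the proof of Lemma \ref{lem:4.2}: set
\begin{equation*}
v_\delta(\widetilde\omega,y):=w_\delta(\widetilde\omega_0)\,\varphi_{j_0,\widetilde\omega_0}(y)
\end{equation*}
(replacing $\varphi_{j_0,\cdot}$ by $\varphi'_{j_0,\cdot}$ in the primed case) and extend by zero outside $\mathcal{O}$, where $w_\delta$ is supported in $(r_0-\delta,r_0+\delta)$ with $\|w_\delta\|_{L^2(dP)}=1$. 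Then $v_\delta\in W^{1,2}_0(\mathcal{O})\cap{\rm dom}(-\Delta_\omega)$ with $\|v_\delta\|_{L^2(\mathcal{O})}=1$, and $g_\delta:=(-\Delta_\omega-\lambda)v_\delta$ satisfies $\|g_\delta\|_{L^2(\mathcal{O})}\leq C\delta$. Writing $\langle u\rangle_\mathcal{O}:=\int_\mathcal{O} u\,dP$, the Cauchy--Schwarz inequality together with the absolute continuity of $P$ restricted to $[r_1,r_2]$ yields $|\langle v_\delta\rangle_\mathcal{O}|\leq C\sqrt\delta$ and $|\langle g_\delta\rangle_\mathcal{O}|\leq C\delta^{3/2}$.

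I would then lift $v_\delta$ to $H$ by tensoring with a fixed $\varphi\in C_0^\infty(S)$ of unit $L^2(S)$-norm: the function $u_\delta(x,\omega):=\varphi(x)v_\delta(\omega)$ lies in the microscopic summand $L^2(S,W^{1,2}_0(\mathcal{O}))$ of $H$ and satisfies $\|u_\delta\|_{L^2(S\times\Omega)}=1$. Integrating by parts in $\omega$ gives, for every $\tilde u=\tilde u_0+\tilde u_1$ in the form domain of $\mathcal{A}$,
\begin{equation*}
a(u_\delta,\tilde u)-\lambda\langle u_\delta,\tilde u\rangle_{L^2(S\times\Omega)}=\int_{S\times\Omega}\varphi\, g_\delta\,\tilde u_1\,-\,\lambda\int_S\varphi\,\tilde u_0\,\langle v_\delta\rangle_\mathcal{O}.
\end{equation*}
Matching this pairing with the direct-sum decomposition of $H$ provided by Lemma \ref{kirill101}, one identifies $(\mathcal{A}-\lambda)u_\delta=h_0+h_1\in H$ explicitly via
\begin{equation*}
h_0(x)=-\frac{\varphi(x)\bigl(\lambda\langle v_\delta\rangle_\mathcal{O}+\langle g_\delta\rangle_\mathcal{O}\bigr)}{1-P(\mathcal{O})},\qquad h_1(x,\omega)=\chi_\mathcal{O}(\omega)\bigl(\varphi(x)g_\delta(\omega)-h_0(x)\bigr),
\end{equation*}
so that, as a function on $S\times\Omega$, one has $h_0+h_1=\varphi g_\delta$ on $S\times\mathcal{O}$ and $h_0+h_1=h_0$ on $S\times(\Omega\setminus\mathcal{O})$. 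Consequently
\begin{equation*}
\|(\mathcal{A}-\lambda)u_\delta\|_H^2=\|\varphi\|_{L^2(S)}^2\|g_\delta\|_{L^2(\mathcal{O})}^2+\bigl(1-P(\mathcal{O})\bigr)\|h_0\|_{L^2(S)}^2\longrightarrow 0,
\end{equation*}
so $(u_\delta)$ is a Weyl sequence for $\mathcal{A}$ at $\lambda$, proving $\lambda\in\sigma(\mathcal{A})$.

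The main subtlety is representing the right-hand side of the bilinear-form identity as a pairing with an element of $H$: the raw expression sees only $\tilde u_1$ in the first term and only $\tilde u_0$ in the second, whereas genuine elements of $H$ must pair with \emph{both} pieces of a test function simultaneously; the correct bookkeeping requires the constant-in-$\omega$ correction $h_0(x)$ that absorbs the $\tilde u_0$-defect into an $L^2(S)$-component. The quantitative decay of the residual hinges on the smallness of $\langle v_\delta\rangle_\mathcal{O}$: automatic in the primed case because $\int_{Y_{1,r}}\varphi'_{j_0,r}=0$, and in the unprimed case coming from the $O(\sqrt\delta)$ probabilistic mass of $\mathrm{supp}\,w_\delta$. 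Without this smallness $h_0$ would be bounded away from zero and $u_\delta$ would fail to be a Weyl sequence.
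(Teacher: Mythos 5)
Your proof is correct, and for the unprimed eigenvalues $\nu_{j,r}$ it takes a genuinely different route from the paper. The paper splits the argument: for $\lambda=\nu'_{j_0,r_0}$ it uses exactly your Weyl sequence $u_1^\delta=f(x)v_\delta(\omega)$ (there the macroscopic coupling vanishes identically because $\int_{Y_{1,r}}\varphi'_{j_0,r}=0$), but for $\lambda=\nu_{j_0,r_0}$ it argues by contradiction, showing that the formal solution (\ref{55}) of the resolvent equation fails to lie in $L^2(\mathcal{O})$ because $\int|\widetilde\omega_0-r_0|^{-2}\,dP$ diverges. You instead push the Weyl-sequence argument through in both cases by observing that the macroscopic defect is controlled by $\langle v_\delta\rangle_\mathcal{O}$, which tends to zero by Cauchy--Schwarz as the $L^2(dP)$-normalised bump $w_\delta$ concentrates, and by identifying $(\mathcal{A}-\lambda)u_\delta$ as an element of $H$ via the constant-in-$\omega$ correction $h_0$; this is the key bookkeeping step and you have it right (the functional $\tilde u\mapsto\int\varphi g_\delta\tilde u_1-\lambda\int\varphi\langle v_\delta\rangle_\mathcal{O}\tilde u_0$ is bounded on $V$ in the $H$-norm by Lemma \ref{kirill101}, and your explicit pair $h_0,h_1$ is its Riesz representative). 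This buys a unified argument and, incidentally, marginally weaker hypotheses: the paper's divergence of $\int|\widetilde\omega_0-r_0|^{-2}\,dP$ tacitly requires the density of $P$ not to vanish too quickly at $r_0$, whereas your argument only needs $P(\{r_0\})=0$ together with $P\bigl((r_0-\delta,r_0+\delta)\bigr)>0$ for all $\delta$. Two small inaccuracies, neither of which affects the conclusion: the rates $|\langle v_\delta\rangle_\mathcal{O}|\le C\sqrt{\delta}$ and $|\langle g_\delta\rangle_\mathcal{O}|\le C\delta^{3/2}$ presuppose a bounded density for $P$ restricted to $[r_1,r_2]$; absolute continuity alone yields only $|\langle v_\delta\rangle_\mathcal{O}|\le C\,P\bigl((r_0-\delta,r_0+\delta)\bigr)^{1/2}\to0$ and $|\langle g_\delta\rangle_\mathcal{O}|\le C\delta$, which is all you actually need.
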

%\begin{proof}
%\end{proof}

%\begin{lemma}\label{lem:4.4}
%	\begin{equation}
%	\Big(\bigcup_{j\in \N,\, r \in [r_1,r_2]} \nu_{j,r} \Big)\subset \sigma(\mathcal A).
%	\end{equation}
%\end{lemma}
\begin{proof}
	The proof of the inclusion $\bigl\{\nu_{j,r}': j\in \N,\, r \in [r_1,r_2]\bigr\}\subset  \sigma (\mathcal A)$ repeats the related part of the proof of the Lemma \ref{lem:4.2}. Namely, for  $\l =  \nu_{j_0,r_0}'$, $j_0\in \N$, $r_0\in (r_1,r_2),$ we define a Weyl sequence $u_\delta := u_0^\delta + u_1^\delta \in V,$  where $u_0^\delta \equiv 0$ and  $u_1^\delta$ is given by
	%\begin{equation*}
	$u_1^\delta(x,\o) : = f(x) v_\delta(\o),$ 
	%\end{equation*}
	with an arbitrary fixed $f\in L^2(S)$ and $v_\delta$ defined as in Lemma \ref{lem:4.2}. 
	%The rest is straightforward.	

	In order to show that $\bigl\{\nu_{j,r}: j\in \N,\, r \in [r_1,r_2]\bigr\}\subset  \sigma (\mathcal A),$ suppose that $\l = \nu_{j_0,r_0}$ for some $j_0\in \N$ and $r_0\in (r_1,r_2)$. Assume, to the contrary, that there exists a bounded resolvent  $(\mathcal A -\l)^{-1},$ {\it i.e.} the system (\ref{mik1})--(\ref{mik2}) has a unique solution for all $f\in L^2(S\times \Omega)$. For $f=f(x)\in  L^2(S)$ the second equation reads
	%\begin{multline}%\label{23}
	\begin{equation*}
	\int\limits_{\{{\widetilde{\o}_0\in[r_1,r_2]\}}} \int\limits_{Y_{1,\widetilde{\o}_0}} (-\Delta_y u_1 - \l u_1)  \varphi_1 \,dy \,dP(\widetilde{\o}) = 
	%\\
	 (f+\lambda u_0) \int\limits_{\{{\widetilde{\o}_0\in[r_1,r_2]\}}} \int\limits_{Y_{1,\widetilde{\o}_0}}   \varphi_1 \,dy \,d\widetilde P(\widetilde{\o}).
	\end{equation*}
	%\end{multline}
	Then $u_1$ must necessary be of the form $u_1 = (f+\lambda u_0) v$, where $v$ is of the form (\ref{10}) and 
	%%\begin{equation}\label{53}
%%	v(\o) = v(\widetilde \o,y) = \left\{ \begin{array}{ll}
	%%v_{\widetilde \o_0}(y), & (\widetilde \o,y)\in \mathcal{O},
	%%\\[0.35em]
	%%0 & \mbox{otherwise},
	%%\end{array}\right.
	%%\end{equation}
	%%and	%$v_r$ solves the problem
	%\begin{equation*}
	$-\Delta_y v_{\widetilde \o} = \l v_{\widetilde \o} + 1$ in $Y_{1, \widetilde{\o}_0},$ {\it i.e.}
	% \qquad v_r |_{\d Y_{1,r}} = 0, 
	%\end{equation*}
	%for all $r \neq r_0,$ and hence
	\begin{equation}
	\label{55}
	v_{\widetilde\o} =  \sum_{j=1}^\infty
	(\nu_{j,\widetilde{\o}_0} - \l)^{-1}\biggl(\int_{Y_{1,\widetilde{\o}_0}}\varphi_{j,\widetilde{\o}_0}\biggr)
	%\frac{\langle\varphi_{j,r}\rangle_y }{ \nu_{j,r} - \l}
	\varphi_{j,\widetilde{\o}_0}.
	\end{equation}
	%Clearly, this solution 
	which clearly blows up as $\widetilde{\o}_0\to r_0.$
	%, but it is defined for all $r\neq r_0$. 
	We show that the corresponding $v$ is not an element of $L^2(\mathcal{O}),$ leading to a contradiction. Indeed, using the identity
	%the identity 
	\begin{equation*}
	\nu_{j_0,\widetilde \o_0} - \nu_{j_0,r_0} = \widetilde \o_0^{-2} \nu_{j_0} - r_0^{-2} \nu_{j_0} = r_0^{-2}\widetilde \o_0^{-2}(r_0 - \widetilde \o_0)(r_0 + \widetilde \o_0)\nu_{j_0},
	\end{equation*} 
	one has
	\begin{align*}%\label{23}
	%\begin{gathered}
	&\int\limits_{\{{\widetilde{\o}_0\in[r_1,r_2]\}}} \int\limits_{Y_{1,\widetilde{\o}_0}} |v|^2 \,dy \,dP(\widetilde{\o}) =  \int\limits_{\{{\widetilde{\o}_0\in[r_1,r_2]\}}} \sum_{j=1}^\infty
	\bigl(\nu_{j,\widetilde \o_0} - \nu_{j_0,r_0}\bigr)^{-2}
	%\frac{\langle
	\biggl(\int_{Y_{1, \widetilde{\o}_0}}\varphi_{j,\widetilde \o_0}\biggr)^2
	%\rangle_y^2 }{|\nu_{j,\widetilde \o_0} - \nu_{j_0,r_0}|^2}\,
	dP(\widetilde{\o})
	\\
	&\geq\int\limits_{\{{\widetilde{\o}_0\in[r_1,r_2]\}}}
	\bigl(\nu_{j,\widetilde \o_0} - \nu_{j_0,r_0}\bigr)^{-2}
	%\frac{\langle
	\biggl(\int_{Y_{1, \widetilde{\o}_0}}\varphi_{j,\widetilde \o_0}\biggr)^2
	%\rangle_y^2 }{|\nu_{j,\widetilde \o_0} - \nu_{j_0,r_0}|^2}\, 
	%\frac{\langle\varphi_{j_0,\widetilde \o_0}\rangle_y^2 }{|\nu_{j_0,\widetilde \o_0} - \nu_{j_0,r_0}|^2}\,
	dP(\widetilde{\o})\geq	C \int\limits_{\{{\widetilde{\o}_0\in[r_1,r_2]\}}}
	%|\widetilde \o_0 - r_0|^{-1}
	|\widetilde \o_0 - r_0|^{-2}
	\,dP(\widetilde{\o}),
	\end{align*}
	%\end{equation}
	where the last integral diverges. 
	%The last inequality in (\ref{last_ineq}) follows from  %We arrive to a contradiction. 
\end{proof}

\begin{lemma}If $\beta(\lambda) = \mu_k$ for some $k\in \N$ then $\lambda$ is an eigenvalue of $\mathcal A.$ 
\end{lemma}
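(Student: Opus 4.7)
The plan is to construct an explicit eigenfunction of $\mathcal{A}$ at eigenvalue $\lambda$ by reversing the formal separation of variables leading to (\ref{mac_eq}). Let $u_0^{(k)}\in W^{1,2}_0(S)\setminus\{0\}$ be an eigenfunction of the self-adjoint operator generated by the form (\ref{A1_form}) corresponding to the eigenvalue $\mu_k$, so that
$$\int_S A_1^{\rm hom}\nabla u_0^{(k)}\cdot\nabla\varphi_0=\mu_k\int_S u_0^{(k)}\varphi_0 \quad \forall\varphi_0\in W^{1,2}_0(S).$$
The hypothesis $\beta(\lambda)=\mu_k$ implicitly requires $\beta(\lambda)$ to be finite, which forces $\lambda$ to be separated from $\{\nu_{j,r}:j\in\N,\,r\in[r_1,r_2]\}$; thus (\ref{omega_grad}) admits a unique solution $v\in W^{1,2}_0(\mathcal{O})$, given fibrewise by the spectral expansion
$$v_{\widetilde\omega}(y)=\sum_{j=1}^\infty(\nu_{j,\widetilde\omega_0}-\lambda)^{-1}\biggl(\int_{Y_{1,\widetilde\omega_0}}\varphi_{j,\widetilde\omega_0}\biggr)\varphi_{j,\widetilde\omega_0}(y),$$
and satisfying $\langle v\rangle_\Omega=(\beta(\lambda)-\lambda)/\lambda^2$.

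I would then set $u:=u_0^{(k)}+u_1$ with $u_1(x,\omega):=\lambda u_0^{(k)}(x)v(\omega)\in L^2(S,W^{1,2}_0(\mathcal{O}))$, and verify that $u$ satisfies (\ref{zadnje}) with $f=0$ (i.e.\ the weak form of $\mathcal{A}u=\lambda u$). Splitting the test function into its two components makes this transparent: the choice $\varphi_1=0$ reduces, upon substituting $\langle u_1\rangle_\Omega=\lambda\langle v\rangle_\Omega u_0^{(k)}$ and using $\beta(\lambda)=\mu_k$, to the defining identity for $u_0^{(k)}$; the choice $\varphi_0=0$ reduces, after factoring $\lambda u_0^{(k)}(x)$ out, to precisely (\ref{omega_grad}) and is hence satisfied by construction of $v$. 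Consequently $u\in\mathrm{dom}(\mathcal{A})$ and $\mathcal{A}u=\lambda u$.

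The final step is to confirm $u\ne 0$ in $H$. Since $(u_0^{(k)},u_1)$ is the representation of $u$ in the direct sum decomposition defining $H$, Lemma \ref{kirill101} yields $\|u\|_{L^2(S\times\Omega)}\geq C\|u_0^{(k)}\|_{L^2(S)}>0$, so $\lambda$ is genuinely an eigenvalue of $\mathcal{A}$. The main technical obstacle is the rigorous justification that the fibrewise series defines $v$ as an element of $W^{1,2}_0(\mathcal{O})$, as opposed to merely a formal expression: this requires the integrability with respect to $dP(\widetilde\omega)$ of $\sum_j(1+\nu_{j,\widetilde\omega_0})(\nu_{j,\widetilde\omega_0}-\lambda)^{-2}\bigl(\int_{Y_{1,\widetilde\omega_0}}\varphi_{j,\widetilde\omega_0}\bigr)^2$, which is strictly stronger than the finiteness of $\beta(\lambda)$ in (\ref{beta_1}). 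This is where the absolute continuity of $dP$ on $[r_1,r_2]$ and the fact that $\lambda$ lies in a gap of $\bigcup_{j,r}\{\nu_{j,r}\}$ (implicit in the hypothesis) must be exploited to bound the resolvent of $-\Delta_y$ on $Y_{1,r}$ uniformly in $r$.
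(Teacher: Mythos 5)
Your proposal is correct and follows essentially the same route as the paper: take the eigenfunction $\psi_k$ of $-\div A_1^{\rm hom}\nabla$ for $\mu_k$, solve $-\Delta_\omega v=\lambda v+1$ fibrewise, and exhibit $u=\psi_k+\lambda\psi_k v$ as an eigenfunction of $\mathcal A$. The technical point you flag about the series defining $v$ is already covered by Lemma \ref{lem:4.2}, which gives the uniform-in-$r$ resolvent bound $\|v_{\widetilde\omega}\|\le d^{-1}\|1\|$ once $\lambda$ is at positive distance $d$ from $\bigcup_{j,r}\{\nu_{j,r},\nu_{j,r}'\}$, so no extra integrability argument is needed.
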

\begin{proof}
Let $\beta(\lambda) = \mu_k$, and denote by $\psi_k \in W^{1,2}_0(S)$ be the corresponding eigenfunction of $-\div A^{\rm hom}_1 \nabla$. Since $\l\notin 
%\bigcup_{j,r}
\{\nu_{j,r}: j,r\in{\mathbf N}\},$ the problem 
		\begin{equation*}
v\in W^{1,2}_0(\mathcal{O}),	\qquad -\Delta_\o v = \l v + 1, 
	\end{equation*}
	has a solution given by (\ref{10}), (\ref{55}). Therefore  $u= \psi_k + \l \psi_k v$ is the eigenfunction of $\mathcal A$ corresponding to $\l$.
	%%%Obviously, $\lambda \notin \bigcup_{j\in \N,\, r \in [r_1,r_2]} \{\nu_{j,r}, \nu_{j,r}'\}$. Suppose that  $\beta(\lambda) = \mu_k$  and $\lambda$ is in the resolvent set of $\mathcal A$. Then the limit problem (\ref{mik1})--(\ref{mik2}) is solvable for all $f\in L^2(S\times A)$. In particular, for $f\in L^2(S)$ the solution to (\ref{36}) is given by $u_1 = (\l u_0 + f) v$, where $v = (-\Delta_\o - \lambda)^{-1} 1$ is as in (\ref{26}). Substituting this into (\ref{mik1}) yields
	%%\begin{equation}
	%%-\div A^{\rm hom}_1 \nabla u_0 - \lambda\bigl(u_0 + \l u_0 \langle v \rangle_A + f \langle v \rangle_A\bigr)  =  f,
	%%\end{equation}
	%%or
	%%%\begin{equation*}
	%%%-\div A^{\rm hom}_1 \nabla u_0 - \beta(\lambda) u_0 = f\bigl(1 + \l \langle v \rangle_A\bigr).
	%%%\end{equation*}
	%%%Since $1 + \l \langle v \rangle_A = \beta(\l)/\l >0,$ it follows that the equation $-\div A^{\rm hom}_1 \nabla u_0 - \beta(\lambda) u_0 = g$ is solvable for all  $g\in L^2(S)$, which is impossible since  $\beta(\lambda)=\mu_k$. %We arrive to a contradiction.
\end{proof}

\begin{lemma}A point $\lambda$ belongs to the resolvent set of $\mathcal A$ if 
	%\begin{equation}
	$\lambda \notin \sigma (-\Delta_\o)$
	%\bigcup_{j\in \N,\, r \in [r_1,r_2]}\bigl\{\nu_{j,r}, \nu_{j,r}'\bigr\}
	%\end{equation}
	and $\beta(\lambda)\notin \sigma (-\div A_1^{\rm hom}\nabla)$, {\it i.e.} $\beta(\lambda)<0$ or $\beta(\lambda)\geq0$ and $\beta(\lambda)\neq \mu_k$, $k\in \N$.
\end{lemma}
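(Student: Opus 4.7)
The plan is to solve the system \eqref{zadnje} (equivalently, \eqref{mik1}--\eqref{mik2}) explicitly for arbitrary right-hand side $f\in H$ and exhibit a bounded inverse of $\mathcal{A}-\lambda$. The key observation is that, because the microscopic equation \eqref{mik2} only involves the stochastic variable (with $x$ as a parameter), it can be decoupled from the macroscopic one provided $\lambda \notin \Sp(-\Delta_\omega)$: then for every $x\in S$ the problems
\begin{equation*}
-\Delta_\omega v = \lambda v + 1\quad\text{in }W_0^{1,2}(\mathcal{O}),\qquad -\Delta_\omega U(x,\cdot) = \lambda U(x,\cdot) + f(x,\cdot)\quad\text{in }W_0^{1,2}(\mathcal{O})
\end{equation*}
have unique solutions with $\|v\|_{W^{1,2}(\Omega)}\le C$ and $\|U(x,\cdot)\|_{W^{1,2}(\Omega)}\le C\|f(x,\cdot)\|_{L^2(\mathcal{O})}$, by the bounded resolvent of $-\Delta_\omega$ off its spectrum (Lemma \ref{lem:4.2}).

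Next, I would set $u_1(x,\omega):=\lambda u_0(x)v(\omega)+U(x,\omega),$ which by linearity formally solves the ``microscopic'' equation \eqref{mik2} for every $u_0$. Substituting into \eqref{mik1} and using $\langle u_1\rangle_\Omega=\lambda u_0\langle v\rangle_\Omega+\langle U\rangle_\Omega$ together with the definition of the Zhikov-type function $\beta(\lambda)=\lambda(1+\lambda\langle v\rangle_\Omega)$, the macroscopic equation reduces to
\begin{equation*}
-\div A_1^{\rm hom}\nabla u_0 - \beta(\lambda)u_0 = \langle f\rangle_\Omega + \lambda \langle U\rangle_\Omega \qquad\text{in }W_0^{1,2}(S).
\end{equation*}
Because $\beta(\lambda)\notin\Sp(-\div A_1^{\rm hom}\nabla)$ (the assumption $\beta(\lambda)<0$ automatically lies in the resolvent set, since the form \eqref{A1_form} is positive definite), this scalar problem in $S$ has a unique solution $u_0\in W_0^{1,2}(S)$ with norm controlled by the $L^2(S)$-norm of the right-hand side, hence by $\|f\|_{L^2(S\times\Omega)}$.

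Finally, I would verify that the pair $(u_0,u_1)$ so constructed solves the full coupled system \eqref{zadnje}: the calculation is immediate from the construction. Uniqueness follows by running the same reduction in reverse: any solution $(u_0,u_1)$ forces $u_1=\lambda u_0 v+U$ (by the uniqueness of the microscopic solution, valid since $\lambda\notin\Sp(-\Delta_\omega)$), and then $u_0$ is uniquely determined by the reduced macroscopic equation. Combined with Lemma \ref{kirill101}, the chain of inequalities
\begin{equation*}
\|u_0\|_{L^2(S)}+\|u_1\|_{L^2(S\times\Omega)}\le C\|f\|_{L^2(S\times\Omega)}
\end{equation*}
gives a bounded resolvent, so $\lambda$ lies in the resolvent set of $\mathcal{A}$. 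I do not expect any serious obstacle here; the only point requiring a little care is the well-definedness of $\beta(\lambda)$ and of $\langle U\rangle_\Omega$, both of which follow from the spectral separation $\lambda\notin\Sp(-\Delta_\omega)$ already established in Lemma \ref{lem:4.2}, together with the absolute continuity assumption on the law of $\widetilde\omega_0$, which ensures that the series representation \eqref{55} and the integral defining $\langle v\rangle_\Omega$ converge in $L^2(\mathcal{O})$.
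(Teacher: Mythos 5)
Your proposal is correct and follows essentially the same route as the paper: solve the microscopic equation via the resolvent of $-\Delta_\omega$ at $\lambda$ (writing $u_1=\lambda u_0 v+U$, which the paper calls $\lambda u_0 v+g$), substitute into the macroscopic equation to obtain $-\div A_1^{\rm hom}\nabla u_0-\beta(\lambda)u_0=\langle f+\lambda U\rangle$, and invert using $\beta(\lambda)\notin\sigma(-\div A_1^{\rm hom}\nabla)$. The only additions you make (the explicit uniqueness argument and the appeal to Lemma \ref{kirill101} for the norm of the sum) are harmless elaborations of what the paper leaves implicit.
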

\begin{proof}
  We claim that the problem (\ref{mik1})--(\ref{mik2}) has bounded resolvent. Indeed, suppose that $f\in L^2(S\times \Omega)$ and write (\ref{mik2}) in the form
	\begin{equation*}\label{36}
-\Delta_\o u_1  - \lambda u_1  =  \lambda u_0 + f.
	\end{equation*}
	Since $\l$ is not in the spectrum of $-\Delta_\o$, the latter has a bounded resolvent at $\l$ and $u_1=\lambda u_0 v+g$, where $v = (-\Delta_\o - \lambda)^{-1} 1$ is as in (\ref{10}) and $g=g(x,\o) = (-\Delta_\o - \lambda)^{-1}  f(x,\o),$ $x\in S.$ In particular,  
	%{\color{green} do we need bounds?}
	%\begin{align*}
	\begin{equation*}
	\|v\|_{L^2(\mathcal{O})} \leq {\rm dist}\bigl(\l,\sigma(-\Delta_\o)\bigr)^{-1}\|1\|_{L^2(\mathcal{O})},\qquad
	% \\
	\bigl\|g(x,\cdot)\bigr\|_{L^2(\mathcal{O})} \leq {\rm dist}\bigl(\l,\sigma(-\Delta_\o)\bigr)^{-1} \bigl\|f(x,\cdot)\bigr\|_{L^2(\Omega)}.
	\end{equation*}
	%\end{align*}
	 Substituting the expression for $u_1$ in (\ref{mik1}) we obtain  
	\begin{equation*}
	-\div A^{\rm hom}_1 \nabla u_0 - \beta(\lambda) u_0 =\langle f + \lambda g \rangle.
	\end{equation*}
	For  $\beta(\lambda)\notin \sigma (-\div A_1^{\rm hom}\nabla)$ the operator $-\div A^{\rm hom}_1 \nabla - \beta(\lambda)$ is invertible and  
	% {\color{green} do we need bounds?}
		\begin{align*}
	\|u_0\|_{L^2(S)}& \leq {\rm dist}\bigl(\beta(\l),\sigma\bigl(-\div A^{\rm hom}_1 \nabla\bigr)\bigr)^{-1}\bigl\|\langle f + \lambda g \rangle\bigr\|_{L^2(S)} \\[0.4em]
	&\leq {\rm dist}\bigl(\beta(\l),\sigma\bigl(-\div A^{\rm hom}_1 \nabla\bigr)\bigr)^{-1} \big(\|f\|_{L^2(S\times \Omega)} + |\lambda| \|g \|_{L^2(S\times\mathcal{O})}\big),
	\end{align*}
	from which the claim follows.
\end{proof}

%This completes the proof of Theorem \ref{thm:4.1}. We conclude the present section with the following observation. 

\begin{proposition}The set 
	%\begin{equation} \bigcup_{j\in \N,\, r \in [r_1,r_2]} \{\nu_{j,r}, \nu_{j,r}'\}
%	\end{equation}
%\begin{equation}
 %\bigg(
 %%%\bigcup_{j\in \N,\, r \in [r_1,r_2]}\bigl\{\nu_{j,r}, \nu_{j,r}'\bigr\}
 %\bigg) 
$ \sigma (-\Delta_\o) \setminus 
 %\bigg(
 %\bigcup_{j,k\in \N}
 \bigl\{\l_{j,k}: j,k\in \N\bigr\}$
 % \bigg)
%	\end{equation}
does not contain eigenvalues of the operator $\mathcal A$.
\end{proposition}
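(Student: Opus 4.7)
The plan is to argue by contradiction. Assume $\lambda\in\sigma(-\Delta_\omega)\setminus\{\lambda_{j,k}:j,k\in\mathbf{N}\}$ is an eigenvalue of $\mathcal{A}$ with a nonzero eigenfunction $u=u_0+u_1\in V$, so that $(u_0,u_1)$ solves (\ref{mik1})--(\ref{mik2}) with $f\equiv 0$. By Lemma \ref{lem:4.2} one has either $\lambda=\nu_{j_0,r_0}$ or $\lambda=\nu'_{j_0,r_0}$ for some $j_0\in\mathbf{N}$, $r_0\in[r_1,r_2]$, and I would split the argument according to whether $\lambda$ lies in $\bigcup_{j\in\mathbf{N},\,r\in[r_1,r_2]}\{\nu_{j,r}\}$ or not. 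The key ingredients are the same as in Lemmas \ref{lem:4.2} and \ref{lem:4.4}: the absence of point spectrum of $-\Delta_\omega$, and the divergence of the spectral series (\ref{55}) near a ``hit'' value of $\widetilde\omega_0$.

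In the case $\lambda=\nu_{j_0,r_0}$, the first step is to show $u_0\equiv 0$. Fix any $x\in S$ with $u_0(x)\neq 0$. By the decomposition (\ref{10}), the microscopic equation (\ref{mik2}) separates into the family of problems $(-\Delta_y-\lambda)u_1(x;\widetilde\omega,\cdot)=\lambda u_0(x)$ in $W^{1,2}_0(Y_{1,\widetilde\omega_0})$, one for a.e.\ $\widetilde\omega\in\widetilde\Omega$ with $\widetilde\omega_0\in[r_1,r_2]$. For a.e.\ $\widetilde\omega_0\neq r_0$ this has the unique representation $u_1(x;\widetilde\omega,\cdot)=\lambda u_0(x)\,v_{\widetilde\omega}$ with $v_{\widetilde\omega}$ given by (\ref{55}), and repeating verbatim the blow-up estimate from the end of the proof of Lemma \ref{lem:4.4} one gets $\int_{\{\widetilde\omega_0\in[r_1,r_2]\}}\|v_{\widetilde\omega}\|^2_{L^2(Y_{1,\widetilde\omega_0})}\,dP=+\infty$. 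This contradicts $u_1(x,\cdot)\in L^2(\mathcal{O})$ (valid for a.e.\ $x$ by Fubini), so $u_0\equiv 0$. The microscopic equation then reduces to $(-\Delta_\omega-\lambda)u_1(x,\cdot)=0$, and Lemma \ref{lem:4.2} (no point spectrum) forces $u_1\equiv 0$, contradicting $u\neq 0$.

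In the remaining case $\lambda=\nu'_{j_0,r_0}$ with $\lambda\notin\bigcup_{j\in\mathbf{N},\,r\in[r_1,r_2]}\{\nu_{j,r}\}$, the series (\ref{55}) does define a genuine $v\in W^{1,2}_0(\mathcal{O})$ solving $(-\Delta_\omega-\lambda)v=\chi_\mathcal{O}$, since $\lambda$ has positive distance from $\{\nu_{j,r}\}$. Combined with the triviality of $\ker(-\Delta_\omega-\lambda)$ from Lemma \ref{lem:4.2}, this forces $u_1(x,\cdot)=\lambda u_0(x)v$; substituting into (\ref{mik1}) together with $\langle u_1\rangle_\Omega=\lambda u_0\langle v\rangle_\Omega$ reduces the macroscopic problem to $-\div A_1^{\rm hom}\nabla u_0=\beta(\lambda)u_0$ on $S$ with $\beta(\lambda)$ as in (\ref{beta_1}). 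If $u_0\not\equiv 0$, then $\beta(\lambda)$ must be an eigenvalue $\mu_k$ of $-\div A_1^{\rm hom}\nabla$, which by (\ref{lambdas}) puts $\lambda$ in $\{\lambda_{j,k}\}$, contradicting the hypothesis; hence $u_0\equiv 0$ and, as before, $u_1\equiv 0$. The main obstacle is the first case, where one must transfer the formal divergence of $\int(\nu_{j_0,\widetilde\omega_0}-\lambda)^{-2}dP$ near $\widetilde\omega_0=r_0$ into a genuine $L^2(\mathcal{O})$-divergence; this uses the absolute continuity of the law of $\widetilde\omega_0$ together with the smoothness and nonzero derivative of $r\mapsto\nu_{j_0,r}=r^{-2}\nu_{j_0}$, both supplied by the standing assumptions of Section \ref{sec:7.3}.
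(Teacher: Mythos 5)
Your proposal is correct and follows essentially the same route as the paper: for $\l=\nu_{j_0,r_0}$ one shows $u_0\equiv 0$ via the non-$L^2$ divergence of the series (\ref{55}) (as in Lemma \ref{lem:4.4}) and then kills $u_1$ by the absence of point spectrum from Lemma \ref{lem:4.2}; for $\l=\nu_{j_0,r_0}'$ one substitutes $u_1=\l u_0 v$ into the macroscopic equation and excludes $\beta(\l)\in\{\mu_k\}$ by the hypothesis $\l\notin\{\l_{j,k}\}$. Your explicit case split on whether $\l\in\bigcup_{j,r}\{\nu_{j,r}\}$ is a slightly more careful organisation of the same argument.
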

\begin{proof}
	Assume that $\l = \nu_{j_0,r_0}$, for some $j_0\in \N$ and $r_0\in [r_1,r_2]$, is an eigenvalue of  $\mathcal A$, {\it i.e.} there exists $u=u_0+u_1 \in V$ such that 
	%\begin{equation}
		\begin{align}
	&-\div A^{\rm hom}_1 \nabla u_0 = \lambda\bigl(u_0 + \langle u_1 \rangle_\mathcal{O}\bigr),\qquad\qquad\label{64Igor}
	\\[0.4em]
	&-\Delta_\o u_1(x,\cdot) = \lambda\bigl(u_0(x) +  u_1(x,\cdot)\bigr).\nonumber
	%\end{aligned}
	%\end{equation}
	\end{align}
	Suppose that $u_0(x)\neq 0$ for some $x\in S,$ then $u_1(x,\cdot) = \l u_0(x)v(\cdot)$, where $v$ solves
	\begin{equation}
	\label{64}
	-\Delta_\o v = \l v + 1.
	\end{equation}
	Arguing as for the second inclusion of Lemma \ref{lem:4.4}, we see that (\ref{64}) has no $L^2$-solution for the given $\l$. It follows that $u_0 = 0$ and therefore $u_1(x,\cdot)$ is an eigenfunction of $-\Delta_\o,$ which cannot be true by Lemma \ref{lem:4.2}. 
	%We arrive at a contradiction.
	
	Now we assume that $\l = \nu_{j_0,r_0}'$. Arguing as above, for $u_0(x)\neq 0$ we have $u_1(x,\cdot) = \l u_0(x)v(\cdot)$, where $v$ solves (\ref{64}). The solution exists and is given by (\ref{10}), (\ref{55}). Substituting $u_1$ into (\ref{64Igor}) we see that $u_0$ must satisfy 
	%\begin{equation}
	$-\div A^{\rm hom}_1 \nabla u_0 = \beta(\lambda) u_0,$
	%\end{equation}
	which cannot be true since $\beta(\l)\notin\sigma\bigl(-\div A^{\rm hom}_1 \nabla\bigr).$
%	< 0$. 
	
	Finally, if $u_0 = 0,$ 
	%a.e. in $S$, 
	then we argue as above for the case $\l = \nu_{j_0,r_0},$ again arriving at a contradiction.
\end{proof}
This completes the proof of Theorem \ref{thm:4.1}.

%{\color{green} \begin{remark}
%Similarly, all the examples in given in Section \ref{misha20} can be analysed, with analogous conclusions.	
%\end{remark} } 		

\section*{Acknowledgments}
KC is grateful for the support of
%has been supported by
%the Leverhulme Trust (Grant RPG--167 ``Dissipative and non-self-adjoint problems'') and
the Engineering and Physical Sciences Research Council: Grant EP/L018802/2 ``Mathematical foundations of metamaterials: homogenisation, dissipation and operator theory''. IV has been supported by the Croatian Science
Foundation under Grant agreement No.~9477 (MAMPITCoStruFl).
%}

\end{document}